\newtheorem{thm}{Theorem}
\newtheorem{prop}[thm]{Proposition}
\newtheorem{lem}[thm]{Lemma}
\theoremstyle{definition}
\newtheorem{Def}[thm]{Definition}
\theoremstyle{remark}
\newtheorem{rem}[thm]{Remark}
\DeclareMathOperator{\Ad}{Ad}
\DeclareMathOperator{\Pf}{Pf}
\DeclareMathOperator{\RYG}{RYG}
\title{Dimers on Rail Yard Graphs}
\author[C. Boutillier]{C\'edric Boutillier}
\address{Laboratoire de Probabilit\'es et Mod\`eles Al\'eatoires, UMR 7599, Universit\'e Pierre et Marie Curie, 4 place Jussieu, F-75005 Paris}
\email{cedric.boutillier@upmc.fr}
\author[J. Bouttier]{J\'er\'emie Bouttier}
\address{Institut de Physique Th\'eorique, Universit\'e Paris-Saclay, CEA, CNRS, F-91191 Gif-sur-Yvette and D\'epartement de Math\'ematiques et Applications, \'Ecole normale sup\'erieure, 45 rue d'Ulm, F-75231 Paris Cedex 05}
\email{jeremie.bouttier@cea.fr}
\author[G. Chapuy]{Guillaume Chapuy}
\address{LIAFA, CNRS et Universit\'e Paris Diderot, Case 7014, F-75205 Paris Cedex 13}
\email{guillaume.chapuy@liafa.univ-paris-diderot.fr}
\author[S. Corteel]{Sylvie Corteel}
\address{LIAFA, CNRS et Universit\'e Paris Diderot, Case 7014, F-75205 Paris Cedex 13}
\email{corteel@liafa.univ-paris-diderot.fr}
\author[S. Ramassamy]{Sanjay Ramassamy}
\address{Mathematics Departement, Brown University, Box 1917, 151 Thayer Street, Providence, RI 02912}
\email{sanjay\_ramassamy@brown.edu}
\thanks{We acknowledge financial support from the Ville de Paris via
  Projet \'Emergences ``Combinatoire \`a Paris'' (JB, GC, SC), from
  the Monahan Foundation via a Monahan Science Fellowship (SR) and
  from the Agence Nationale de la Recherche via the grants
  ANR-08-JCJC-0011 ``IComb'' (SC), ANR-10-BLAN-0123 ``MAC2'' (CB),
  ANR-12-JS02-0001 ``Cartaplus'' (JB, GC), ANR-14-CE25-0014 ``GRAAL''
  (JB). JB acknowledges the hospitality of LIAFA. SR acknowledges the
  hospitality and financial support of the Erwin Sch\"odinger
  Institute in Vienna, where part of this work was done.}
\date{\today}
\begin{document}

\begin{abstract}
  We introduce a general model of dimer coverings of certain plane
  bipartite graphs, which we call rail yard graphs (RYG). The transfer
  matrices used to compute the partition function are shown to be
  isomorphic to certain operators arising in the so-called
  boson-fermion correspondence. This allows to reformulate the RYG
  dimer model as a Schur process, i.e.\ as a random sequence of
  integer partitions subject to some interlacing conditions.

  Beyond the computation of the partition function, we provide an
  explicit expression for all correlation functions or, equivalently,
  for the inverse Kasteleyn matrix of the RYG dimer model. This
  expression, which is amenable to asymptotic analysis, follows from
  an exact combinatorial description of the operators localizing
  dimers in the transfer-matrix formalism, and then a suitable
  application of Wick's theorem.

  Plane partitions, domino tilings of the Aztec diamond, pyramid
  partitions, and steep tilings arise as particular cases of the RYG
  dimer model. For the Aztec diamond, we provide new derivations of
  the edge-probability generating function, of the biased creation
  rate, of the inverse Kasteleyn matrix and of the arctic circle
  theorem.
\end{abstract}

\maketitle

\section{Introduction}

The two-dimensional dimer model is arguably the most studied exactly
solvable model in statistical mechanics (note that it encompasses, in
a sense, the equally well-known two-dimensional Ising model), see for
instance \cite[Chapter 5]{MR0253689} for a review of the seminal works
of Kasteleyn, Temperley and Fisher, and the introduction of
\cite{Allegra14} for a nice survey of the more recent
literature. Dimer configurations are also known as perfect matchings
in combinatorics and theoretical computer science. Actually, perhaps
the oldest exact solution of a 2D dimer model is MacMahon's
enumeration of plane partitions \cite{MR2417935}, as these were later
identified with lozenge tilings, or alternatively dimer configurations
on the hexagonal lattice.

Kasteleyn's method allows to reduce the problem of computing the
partition function (and the correlation functions) of the dimer model
on any finite weighted planar graph (assuming that the dimers interact
only through their hard-core repulsion) to the evaluation of a
determinant (or Pfaffian) whose size is linear in the number of
vertices of the graph. Under the usual assumption that the graph is
periodic in two directions, one can then evaluate this determinant and
take the thermodynamic limit to obtain the free energy, study phase
transitions, etc. In this paper, we consider a dimer model on a new
family of graphs, called rail yard graphs, which are periodic in one
direction but not in the other.

One of our motivations is that the rail yard graph dimer model
encompasses both the plane partitions mentioned above and another
celebrated model, namely domino tilings of the Aztec diamond
\cite{EKLP1992,EKLP1992b} (corresponding to, roughly speaking, dimer
configurations on the portion of the square lattice fitting into a
large square tilted by $45^\circ$). What relates these two models is
that they can be seen as Schur processes \cite{OR1}, that is to say
random sequences of integer partitions whose transition probabilities
are given by Schur functions. If the relation between plane partitions
and Schur processes was explicited by Okounkov and Reshetikhin, the
case of the Aztec diamond appears implicitly in \cite{MR1900323} and
has, to the best of our knowledge, remained in such implicit form
until \cite{BCC}, of which this paper is a continuation (see
below). The interest of making the connection between dimer models and
Schur processes explicit is that it allows to use an operator
formalism coming from the boson-fermion correspondence (see the
references given at the beginning of Section~\ref{sec:bos}) which is
both powerful and intuitive, as the operators are nothing but transfer
matrices or observables satisfying some particularly simple
commutation relations. Furthermore it allows us to say that the RYG
dimer model forms another situation, besides the 2D Ising model
\cite{Dubedat11}, where ``bosonization'' works at an exact discrete
level.  The rail yard graph dimer model corresponds essentially to the
most general Schur process with nonnegative transition probabilities.

Before describing our work in more detail, let us further discuss some
history and background behind it. Bender and Knuth \cite{MR0299574}
made the link between plane partitions and the
Robinson-Schensted-Knuth correspondence, see also
\cite[Chapter~7]{Stanley}.
Okounkov~\cite{Okounkov:wedge, MR2059364} used the boson-fermion
correspondence to define the so-called Schur measure over integer
partitions, and study its correlation functions. The Schur process
\cite{OR1,OR2} is a time-dependent version of this measure, that can
also be viewed as a system of particles with certain dynamics. It
contains as a special case a generalization of plane partitions,
namely plane partitions with an evolving ``back wall''. Numerous
papers followed on this subject~\cite{B2007,
  MR3148098,Borodin:dynamics,MR2889660,BBBCCV14,BorodinFerrari15} and on its extension to the
Hall-Littlewood and Macdonald
cases~\cite{MR2349310,MR2520768,MR2471939,MR2721059,CSV,MR3152785}.

In \cite{BCC}, three authors of the present paper introduced a general
class of domino tilings called steep tilings, encompassing both
tilings of the Aztec diamond and the so-called pyramid partitions
\cite{Kenyon2005, Young:pyramid}. It was shown in~\cite{BCC} that
steep tilings also correspond to Schur processes and, using the vertex
operator formalism, their partition functions (of the ``hook formula''
type) were computed for a variety of boundary conditions.
Since both (generalized) plane partitions and steep tilings are
special instances of the Schur process, it is then natural to ask if
there is a more general model of tilings or dimer coverings that would
reformulate the Schur process in full generality, at least when the
number of underlying parameters is finite. Such a model was sketched
in~\cite[Section~7]{BCC}, that can be viewed as a preliminary attempt
at what we reach in the present paper. 

The rail yard graphs (RYG) that we introduce here are infinite
bipartite plane graphs, obtained by the ``concatenation'' of
column-shaped elementary graphs, and come with a family of admissible
dimer coverings. The RYG dimer model is then a probability measure
over such coverings. The elementary graphs can be of four types that
correspond to the four possible types of ``atomic'' transitions in the
Schur process. For the special families of RYG that correspond to the
special families of Schur processes considered in~\cite{OR1, BCC}, we
recover generalized plane partitions and steep tilings,
respectively. As we hope will be apparent in this paper, RYG provide a
nice and natural formulation of the Schur process in terms of dimers,
in a well-adapted system of coordinates, much simpler than the one
from~\cite[Section 7]{BCC}.  Having shown the correspondence between
rail yard graphs and Schur process, we can then apply the same
classical tools as in~\cite{OR1} to get explicitly the partition
functions in a nice (hook-type) product form. Even more, we can
interpret these partition functions in terms of a combinatorial
parameter related to the flip operation on coverings.

Beyond the partition function, we compute all the dimer correlation
functions, which requires the introduction of suitably defined
observables (or constrained transfer matrices) that enable us to
localize, in the algebraic setting, a given set of dimers. To prevent
any confusion, let us note that the particle correlations computed
in~\cite{OR1} for the Schur process, when translated in terms of RYG,
give only a special case of this result. Indeed, as we will see, there
are three kinds of dimers in a RYG, and particles correspond to one of
the three kinds (so in our setting the correlations results
of~\cite{OR1} only describe correlations between dimers of the first
kind). Once the observables are constructed, we use classical
fermionic tools such as Wick's formula to evaluate the correlation
functions in an explicit determinantal form. We also make the
connection with the general Kasteleyn theory: it is a general fact
that correlations between dimers on plane bipartite graphs have a
determinantal form, underlaid by an inverse of the so-called Kasteleyn
matrix of the model. For RYG, we show that the determinantal form we
obtain by our approach indeed gives an inverse Kasteleyn matrix, as
was remarked in \cite{OR2} for the case of skew plane partitions, see
also \cite[Section~5]{MR3148098}. Our approach generalizes both this
case and that of the Aztec diamond (for the so-called $q^{\text{vol}}$
weighting), treated previously in~\cite{MR3197656} by a very tricky
and somehow mysterious calculation. As further applications concerning
the Aztec diamond, we rederive the so-called edge-probability
generating function and biased creation rate, and the arctic circle
theorem using the general saddle-point techniques of \cite{OR1}.

We now present the structure of the paper. Section~\ref{sec:main} is
devoted to the basic definitions (rail yard graphs in
Subsection~\ref{sec:ryg}, their dimer coverings in
Subsection~\ref{sec:admiss}, flips in Subsection~\ref{sec:flipdef})
and to the statement of our main results, namely the expression for
the partition function (Subsection~\ref{sec:enum}) and for the dimer
correlation functions (Subsection~\ref{sec:corr}).
Section~\ref{sec:bos} introduces bosonic operators
(Subsection~\ref{sec:bosrem}) that act as transfer matrices in the RYG
dimer model (Subsection~\ref{sec:bosmat}), allowing to compute
efficiently the partition function
(Subsection~\ref{sec:enumproof}). Section~\ref{sec:ferm} considers
fermionic operators (Subsection~\ref{sec:fermrem}) that play the role
of observables in the RYG dimer model
(Subsection~\ref{sec:dimerop}). Rewriting the correlation functions in
the ``Heisenberg picture'' (Subsection~\ref{sec:schrheis}), we derive
their expression in the form of a determinant
(Subsection~\ref{sec:det_corr_proof}), before making the connection
with Kasteleyn's theory
(Subsection~\ref{sec:fermkast}). Section~\ref{sec:particular}
discusses the previously known cases: plane partitions and lozenge
tilings (Subsection~\ref{sec:lozenge}) and steep domino tilings
(Subsection~\ref{sec:steep}). In Section~\ref{sec:aztecCorr} we address the specific case of the Aztec diamond, for which 
we provide new derivations of the edge-probability generating function and biased creation rate (Subsection~\ref{subsec:edgep}), of the inverse Kasteleyn matrix (Subsection~\ref{subsec:aztecKasteleyn}) and of the arctic
circle theorem (Subsection~\ref{subsec:aztecArctic}). Concluding remarks are gathered in
Section~\ref{sec:conc}. Some auxiliary material is given in the
appendix: a combinatorial proof of the bosonic-fermionic commutation
relations (Appendix~\ref{sec:fermcom}) and a rederivation of Wick's
formula (Appendix~\ref{sec:wick}).

After the completion of this work, Alexei Borodin and Senya Shlosman
informed us (by private communication) that graphs similar to those
described in Section~\ref{sec:ryg} were suggested to them by Richard
Kenyon, providing a dimer interpretation of the $\alpha\beta$-paths
considered in \cite{BorodinShlosman}.

\newpage
\section{Basic definitions and main results}
\label{sec:main}

\subsection{Rail yard graphs}
\label{sec:ryg}

We start by defining the underlying graph of our dimer model. We fix
two integers $\ell,r$ such that $\ell \leq r$, and denote by
$[\ell..r]$ the set of integers between $\ell$ and $r$. We then
consider two binary sequences indexed by the elements of $[\ell..r]$:
\begin{itemize}
\item the \emph{LR sequence} $\underline{a}=(a_\ell,a_{\ell+1},\ldots,a_r) \in
  \{L,R\}^{[\ell..r]}$,
\item the \emph{sign sequence} $\underline{b}=(b_\ell,b_{\ell+1},\ldots,b_r) \in
  \{+,-\}^{[\ell..r]}$.
\end{itemize}
The \emph{rail yard graph} associated with the integers $\ell$ and
$r$, the LR sequence $\underline{a}$ and the sign sequence
$\underline{b}$, and denoted by
$\RYG(\ell,r,\underline{a},\underline{b})$, is the bipartite
plane graph defined as follows. Its vertex set is $[2\ell-1..2r+1]
\times (\mathbb{Z}+1/2)$, and we say that a vertex is \emph{even}
(resp.\ \emph{odd}) if its abscissa is an even (resp.\ odd)
integer. Each even vertex $(2m,y)$, $m \in [\ell..r]$, is then
incident to three edges: two \emph{horizontal} edges connecting it to
the odd vertices $(2m-1,y)$ and $(2m+1,y)$, and one \emph{diagonal}
edge connecting it to
\begin{itemize}
\item the odd vertex $(2m-1,y+1)$ if $a_m=L$ and $b_m=+$,
\item the odd vertex $(2m-1,y-1)$ if $a_m=L$ and $b_m=-$,
\item the odd vertex $(2m+1,y+1)$ if $a_m=R$ and $b_m=+$,
\item the odd vertex $(2m+1,y-1)$ if $a_m=R$ and $b_m=-$.
\end{itemize}
Hopefully, this explains our motivations for using the symbols $L,R$
and $+,-$. Drawing the edges straight, the graph is indeed bipartite
and plane by construction. For $e$ an edge, we write
$e=(\alpha,\beta)$ to mean that $\alpha$ is the even endpoint of $e$,
and $\beta$ its odd endpoint. For $v$ a vertex, we will denote by
$v^{\mathrm{x}} \in \mathbb{Z}$ its abscissa, and by
$v^{\mathrm{y}} \in \mathbb{Z}+1/2$ its ordinate.

\begin{figure}[htpb]
  \centering
  \includegraphics[width=\textwidth]{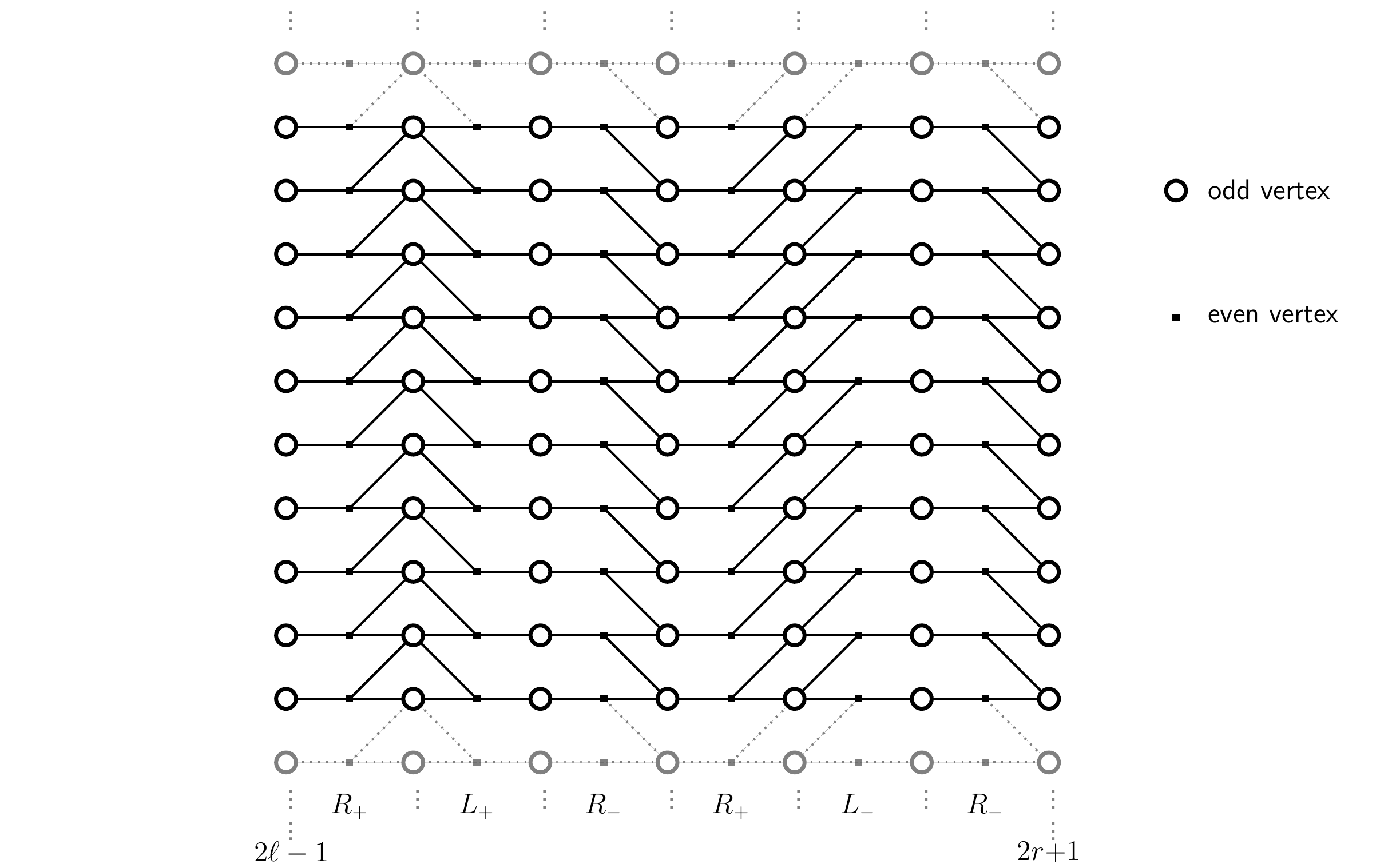}
  \caption{The rail yard graph associated with
the LR sequence $RLRRLR$ and the sign sequence $++-+--$ (with
$\ell=1$, $r=6$). It is infinite and periodic in the
    vertical direction, but finite in the horizontal
    direction.}
  \label{fig:RYGgraph}
\end{figure}

Figure~\ref{fig:RYGgraph} displays the rail yard graph associated with
the LR sequence $RLRRLL$ and the sign sequence $++-+--$ (with
$\ell=1$, $r=6$). Observe that a rail yard graph is infinite and
$1$-periodic in the vertical direction. When $\ell=r$, the LR and sign
sequences both consist of a single element, and the corresponding rail
yard graph, which is said \emph{elementary}, is of one of four
possible types, see Figure~\ref{fig:elemRYG}. Given two rail yard
graphs $\RYG(\ell,r,\underline{a},\underline{b})$ and
$\RYG(\ell',r',\underline{a}',\underline{b}')$ such that
$\ell'=r+1$, we define their \emph{concatenation} by taking the union
of their vertex and edge sets. It is nothing but the rail yard graph
$\RYG(\ell,r',\underline{a}\underline{a}',\underline{b}\underline{b}')$
where $\underline{a}\underline{a}'$ and $\underline{b}\underline{b}'$
denote the concatenations of the LR and sign sequences. Clearly, a
general rail yard graph is obtained by concatenating elementary ones.

\begin{figure}[htpb]
  \centering
  \includegraphics[width=0.67\textwidth]{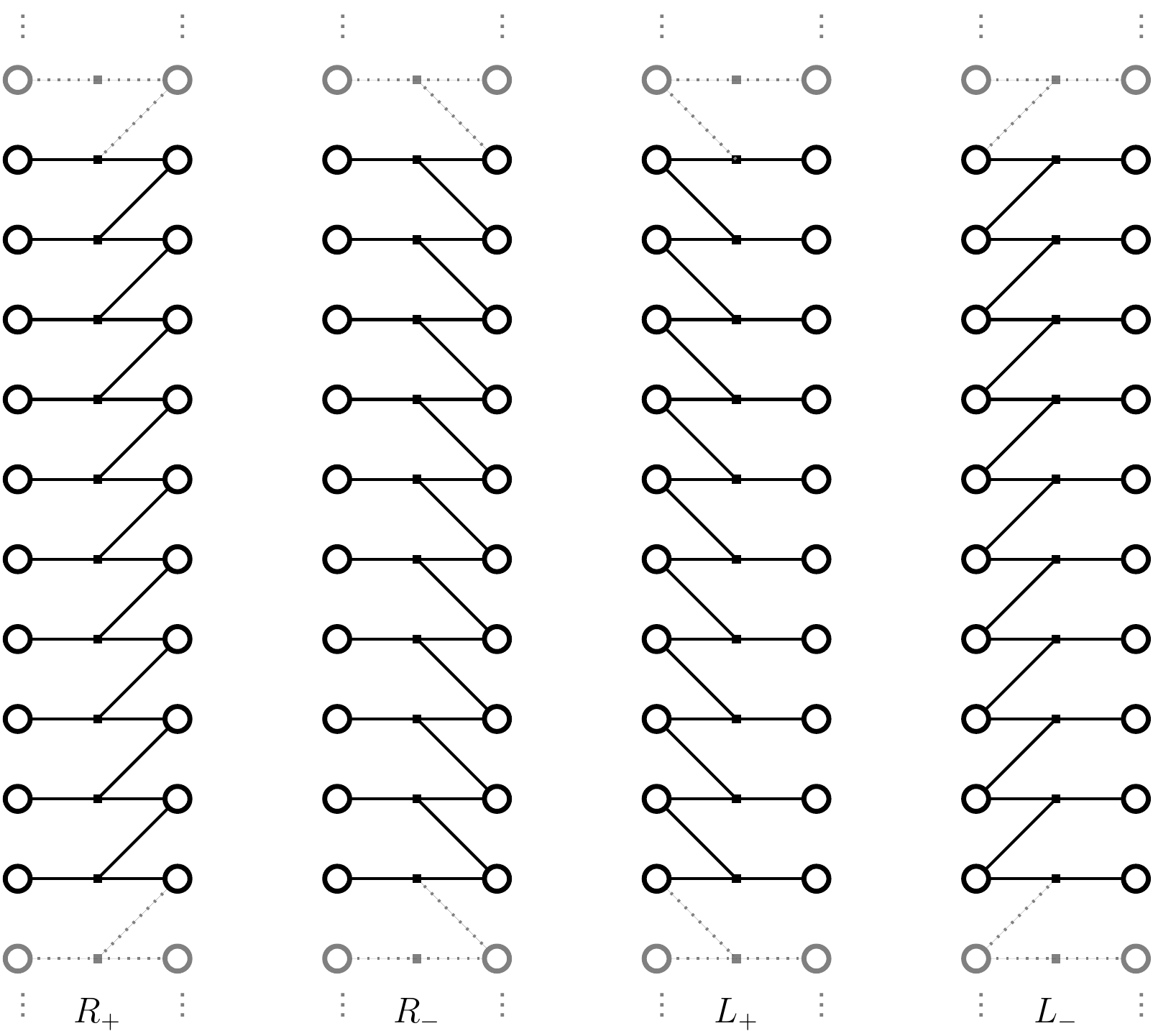}
  \caption{The four elementary rail yard graphs $L+$, $L-$, $R+$, and $R-$ (up to horizontal translation) }
  \label{fig:elemRYG}
\end{figure}

The \emph{left boundary} (resp.\ \emph{right boundary}) of a rail
yard graph consists of all odd vertices with abscissa $2\ell-1$
(resp.\ $2r+1$). Vertices which do not belong to the boundaries are
said \emph{inner}. When drawn in the plane, the graph delimits some
faces, and the bounded ones are called \emph{inner faces}. Note that
inner faces may be incident to $4$, $6$ or $8$ edges. Finally, observe
that our definition works equally well if we take $\ell=-\infty$
and/or $r=+\infty$, thus considering infinite LR and sign sequences. In
that case, the rail yard graph ``fills'' either the whole plane
or a half-plane, boundaries being sent to infinity.

\subsection{Admissible and pure dimer coverings}
\label{sec:admiss}

We now turn to the characterization of the configurations of our dimer
model. Given a rail yard graph
$\RYG(\ell,r,\underline{a},\underline{b})$ with $\ell,r$
finite, an \emph{admissible dimer covering} is a partial matching of
this graph such that:
\begin{itemize}
\item each inner vertex is covered (i.e.\ matched),
\item there exists an integer $N \geq 0$ such that: any left boundary
  vertex $(2\ell-1,y)$ is covered for $y>N$ and uncovered for $y<-N$,
  any right boundary vertex $(2r+1,y)$ is covered for $y<-N$ and
  uncovered for $y>N$,
\item only a finite number of diagonal edges are covered.
\end{itemize}
A \emph{pure dimer covering} is an admissible dimer covering for which
the second property above holds for $N=0$: in other words the
uncovered vertices are precisely the left boundary vertices with
negative ordinate and the right boundary vertices with positive
ordinate (see Figure~\ref{fig:covering}). The \emph{fundamental dimer covering} is the pure dimer
covering where no diagonal edge is covered (it is not difficult to
check its existence and uniqueness e.g.\ by induction on $r-\ell$).
Observe that any admissible dimer covering coincides with the
fundamental dimer covering outside a finite region. An
\emph{elementary dimer covering} is an admissible dimer covering of an
elementary rail yard graph (see Figure~\ref{fig:elemCovering}).

\begin{figure}[htpb]
  \centering
  \includegraphics[width=0.8\textwidth]{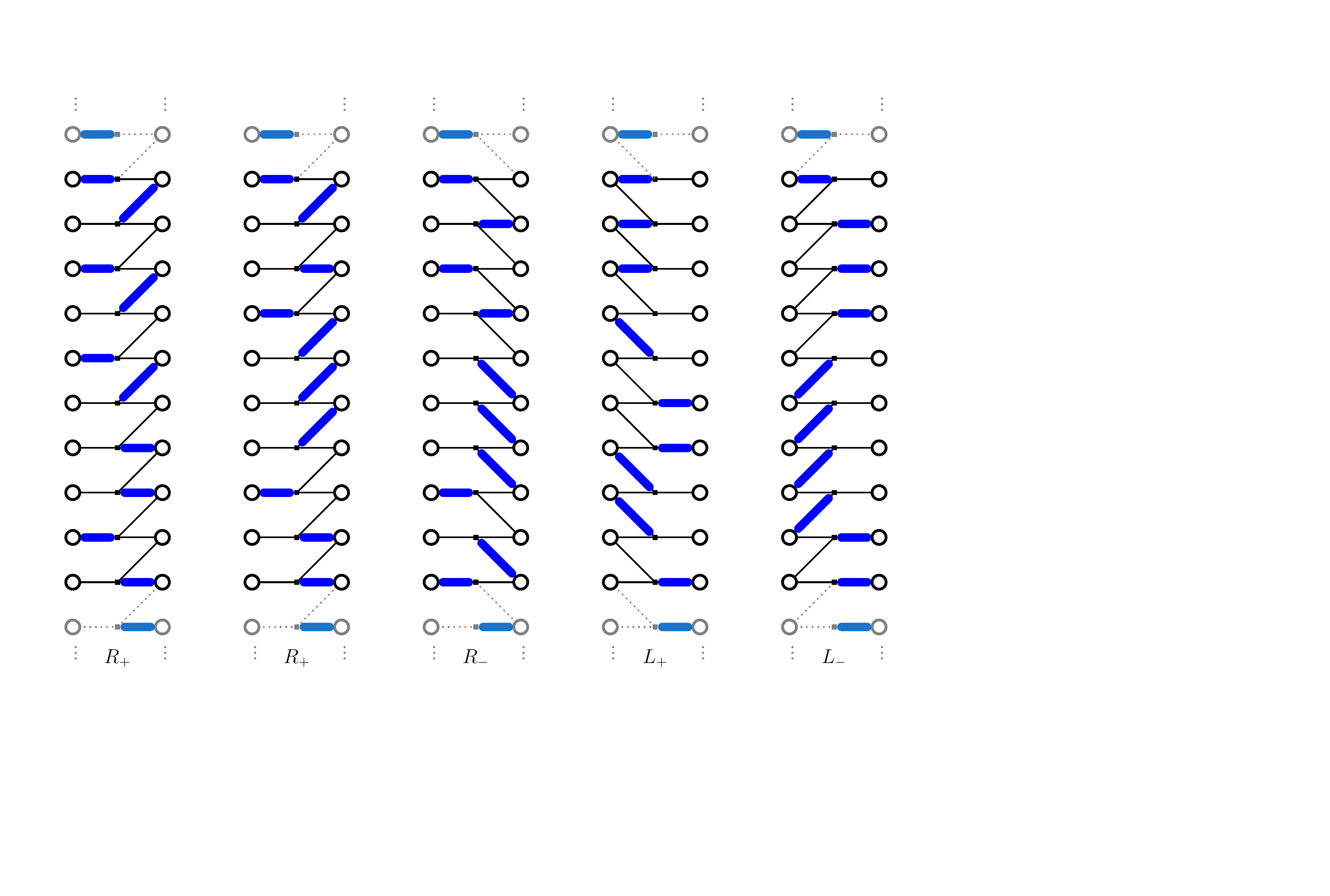}
  \caption{Some elementary dimer coverings. The underlying elementary rail yard graph has type $R+$ in the first two cases, and $R-$, $L+$, and $L-$, in the three others, from left to right. Outside of the displayed region, the configuration is identical towards the top (resp. bottom) on each horizontal level to what it is on the topmost (resp. bottommost) displayed level.}
  \label{fig:elemCovering}
\end{figure}

\begin{figure}[htpb]
  \centering
  \includegraphics[width=0.9\textwidth]{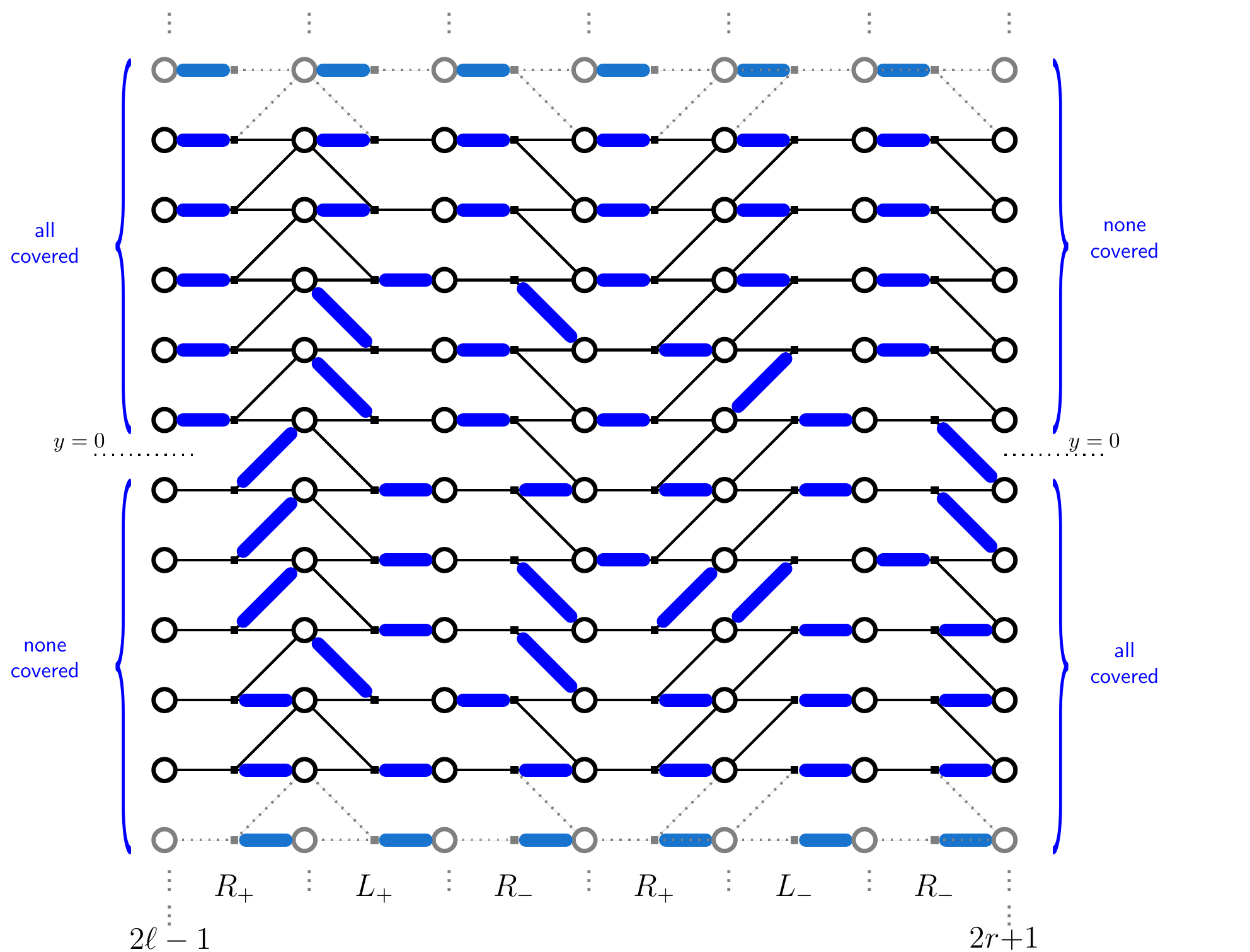}
  \caption{A pure dimer covering of the rail yard graph of Figure~\ref{fig:RYGgraph}.}
  \label{fig:covering}
\end{figure}

Similarly to rail yard graphs, admissible dimer coverings behave
nicely with respect to concatenation. More precisely, consider two
rail yard graphs $G=\RYG(\ell,r,\underline{a},\underline{b})$
and $G'=\RYG(\ell',r',\underline{a}',\underline{b}')$ which
are concatenable (i.e.\ $\ell'=r+1$) and let $GG'$ be their
concatenation. Let $C$ and $C'$ be admissible dimer coverings of
respectively $G$ and $G'$: we say that $C$ and $C'$ are
\emph{compatible} if, for each $y \in \mathbb{Z}+1/2$, the vertex
$(2r+1,y)=(2\ell'-1,y)$ is covered in $C$ if and only if it is not
covered in $C'$. In that case, by taking the union of $C$ and $C'$, we
obtain an admissible dimer covering of $GG'$, which we denote by
$CC'$. Conversely, any admissible dimer covering can be decomposed as
the concatenation of elementary dimer coverings which are sequentially
compatible.

It is also interesting to consider the limiting cases $\ell=-\infty$
and/or $r=+\infty$, which requires a slight adaptation of our
definitions. An admissible (resp.\ a pure, resp.\ the fundamental)
dimer covering is then a matching such that each inner vertex is
covered, and such that there exists finite integers $\ell',r'$ such
that :
\begin{itemize}
\item inside the strip $[2\ell'-1,2r'+1] \times \mathbb{R}$, we see an
  admissible (resp.\ a pure, resp.\ the fundamental) dimer covering
  in the previous sense,
\item outside this strip, all covered edges are horizontal.
\end{itemize}
(Note that this definition works in all situations: it coincides with
the previous one when $\ell,r$ are both finite.)

Our motivation for considering pure dimer coverings of rail yard
graphs is that we recover several well-known dimer models as
specializations. For instance, taking $r-\ell=2n$, a LR sequence of
the form $LRLRLR\cdots$ and a sign sequence of the form
$+-+-+-\cdots$, the corresponding pure dimer configurations are in
bijection with domino tilings of the Aztec diamond of size $n$. We
also recover plane partitions and so-called pyramid partitions, which
requires taking $\ell=-\infty$ and $r=+\infty$: plane partitions are
obtained by taking a constant LR sequence and a sign sequence of the
form $\cdots++++----\cdots$, while pyramid partitions are obtained by
taking an alternating LR sequence ($\cdots LRLRLRLR \cdots$) and the
same sign sequence. We will discuss these specializations in greater
detail in Section~\ref{sec:particular}.

\subsection{Flips}
\label{sec:flipdef}

We now define a local transformation on admissible coverings called the \emph{flip}. Let $G$ be a rail yard graph, $C$ be an admissible covering of $G$, and let $f$ be an inner face
of $G$. If exactly half of the edges bordering $f$ belong to $C$, then removing
these edges from $C$ and replacing them by the other edges bordering $f$ gives
another admissible covering $C'$ of $G$. The operation that replaces $C$ by $C'$ is called the \emph{flip} of the face $f$, see Figure~\ref{fig:flips}.
\begin{figure}[htpb]
  \centering
\begin{center}
  \includegraphics[width=0.9\textwidth]{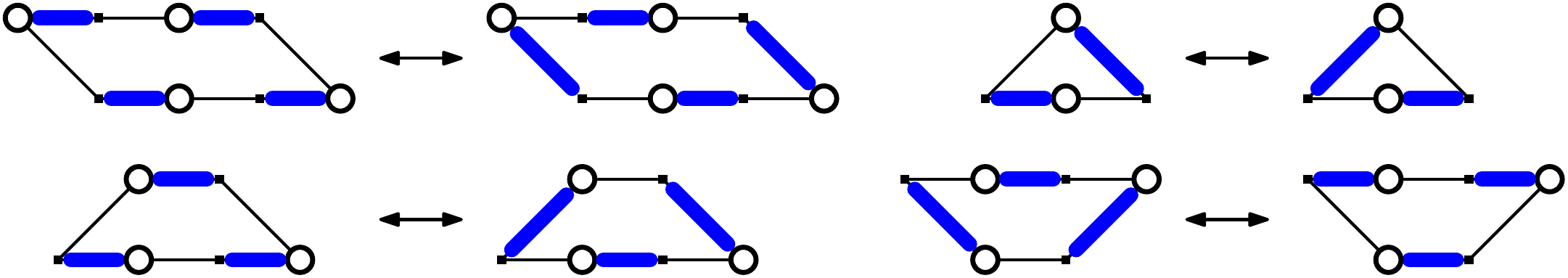}
\end{center}
  \caption{Examples of flips. All these flips are positive when the
transformation is made from left to right.}
  \label{fig:flips}
\end{figure}

We say that the flip of an inner face $f$ is \emph{positive} if after performing the
flip, the edges of $f$
that belong to the covering are oriented from odd to even
vertices in counterclockwise direction around $f$. The flip is \emph{negative}
otherwise.
 For example, each flip displayed on
Figure~\ref{fig:flips} is positive when performed from left to right.       
It follows from \cite[Theorem~2]{Propp} that the positive flip relation
endows the set of all pure coverings of a given rail yard graph with a distributive lattice structure. In
particular, each rail yard graph has a unique \emph{minimal} pure covering
from which all other ones can be reached using positive flips only. The minimal
covering is the only pure covering on which no negative flip is possible. Using
this criterion one easily checks that the minimal covering coincides with the
fundamental covering defined above.
The \emph{flip distance} between two coverings is the minimal number of flips
needed to go from one to the other. When one of the two coverings is the
fundamental one, the flip distance is realized by a sequence that uses positive
flips only.

\subsection{Enumeration}
\label{sec:enum}
Our main enumerative result is an expression for the partition
function of the RYG dimer model, which we now define. Consider
a rail yard graph
$G=\RYG(\ell,r,\underline{a},\underline{b})$, and a sequence
of formal variables $\underline{x}=(x_\ell,x_{\ell+1},\ldots,x_r)$,
where possibly $\ell=-\infty$ or $r=+\infty$. The \emph{weight} of an
admissible dimer covering $C$ of $G$ is then defined as
\begin{equation}
  \label{eq:xweights}
  w(C) = \prod_{i=\ell}^r x_i^{d_i(C)}
\end{equation}
where $d_i(C)$ is the number of diagonal dimers in column $i$ (i.e.\
the number of covered diagonal edges incident to an even vertex with
abscissa $2i$). This weight is well-defined since $\sum d_i(C)$ is
finite by the definition of an admissible dimer covering. The
\emph{partition function of the multivariate RYG dimer model}, denoted
$Z(G;\underline{x})$,
is then the sum of the weights of all \emph{pure} dimer coverings of
$G$.

\begin{thm}
  \label{thm:enum1}
  The partition function of the multivariate RYG dimer model reads
  \begin{equation} \label{eq:hooklength} Z(G;\underline{x}) =
    \prod_{\substack{\ell \leq i < j \leq r\\ b_i=+,b_j=-}} z_{ij}
  \end{equation}
  where
  \begin{equation}
    z_{ij} =
    \begin{cases}
      1 + x_i x_j & \text{if $a_i \neq a_j$,} \\
      (1 - x_i x_j)^{-1} & \text{if $a_i = a_j$.} \\
    \end{cases}
  \end{equation}
\end{thm}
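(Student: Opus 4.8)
The plan is to realize the partition function $Z(G;\underline{x})$ as a vacuum expectation value in the fermionic Fock space and to evaluate it by normal ordering a product of vertex operators, one per column of $G$. One works in the semi-infinite wedge space with orthonormal basis $(|\lambda\rangle)$ indexed by integer partitions and vacuum $|\emptyset\rangle$, and with the four vertex operators $\Gamma_\pm(x)$, $\Gamma'_\pm(x)$ of the boson--fermion correspondence (Section~\ref{sec:bos}). To the $i$-th column of $G$ one attaches the operator $T_i$ equal to one of $\Gamma_\pm(x_i)$, $\Gamma'_\pm(x_i)$, chosen to be primed if and only if $a_i = R$ and to carry the subscript $b_i$, so that the letter $a_i$ selects the family (unprimed or primed) and the sign $b_i$ selects the subscript. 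The first goal is the identity
\begin{equation}
  Z(G;\underline{x}) = \langle \emptyset |\, T_\ell\, T_{\ell+1} \cdots T_r \,| \emptyset \rangle .
\end{equation}

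To prove this, one encodes an admissible covering $C$ of $G$ as follows: for each odd abscissa $2m+1$, read off along that vertical line which horizontal edges are covered by $C$, obtaining a Maya diagram, hence an integer partition $\lambda^{(m)}$. The admissibility conditions force each $\lambda^{(m)}$ to be a finite partition, equal to $\emptyset$ outside a bounded range of $m$; for a \emph{pure} covering the boundary conditions force precisely $\lambda^{(\ell-1)} = \lambda^{(r)} = \emptyset$. Inspecting the four elementary coverings (Figure~\ref{fig:elemCovering}), one checks that, for prescribed boundary partitions $\lambda^{(i-1)}$ and $\lambda^{(i)}$, the elementary coverings of column $i$ with these boundaries exist exactly when $\lambda^{(i-1)}$ and $\lambda^{(i)}$ are interlaced in the way dictated by $(a_i,b_i)$ --- a horizontal-strip condition when $a_i = L$, a vertical-strip condition when $a_i = R$, oriented according to $b_i$ --- and that then the generating polynomial $\sum_{C_i} x_i^{d_i(C_i)}$ over such elementary coverings equals the matrix element $\langle \lambda^{(i-1)} | T_i | \lambda^{(i)} \rangle$. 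Since any covering of $G$ decomposes uniquely into sequentially compatible elementary coverings, summing over all admissible intermediate partitions $\lambda^{(\ell)}, \dots, \lambda^{(r-1)}$ and inserting completeness relations yields the displayed identity. This transfer-matrix step is the technical heart of the argument.

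It then remains to compute $\langle \emptyset | T_\ell \cdots T_r | \emptyset \rangle$ using the standard relations of the boson--fermion correspondence: operators with the same subscript commute, one has $\Gamma_+(x)|\emptyset\rangle = \Gamma'_+(x)|\emptyset\rangle = |\emptyset\rangle$ and $\langle\emptyset|\Gamma_-(x) = \langle\emptyset|\Gamma'_-(x) = \langle\emptyset|$, and
\begin{equation}
  \Gamma_+(x)\,\Gamma_-(y) = \frac{\Gamma_-(y)\,\Gamma_+(x)}{1-xy}, \qquad
  \Gamma'_+(x)\,\Gamma'_-(y) = \frac{\Gamma'_-(y)\,\Gamma'_+(x)}{1-xy},
\end{equation}
\begin{equation}
  \Gamma_+(x)\,\Gamma'_-(y) = (1+xy)\,\Gamma'_-(y)\,\Gamma_+(x), \qquad
  \Gamma'_+(x)\,\Gamma_-(y) = (1+xy)\,\Gamma_-(y)\,\Gamma'_+(x).
\end{equation}
Normal-order $T_\ell \cdots T_r$ by moving every subscript-$+$ operator rightward past the subscript-$-$ operators lying to its right; crossings between operators of equal subscript cost nothing. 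The operator $T_i$ with $b_i = +$ then crosses precisely the operators $T_j$ with $j > i$ and $b_j = -$, and each such crossing contributes the scalar $z_{ij}$ --- equal to $(1-x_ix_j)^{-1}$ when $a_i = a_j$ (both primed or both unprimed) and to $1 + x_ix_j$ when $a_i \neq a_j$ --- by the four relations above. After reordering, the subscript-$-$ operators sit to the left and act trivially on $\langle\emptyset|$ while the subscript-$+$ operators sit to the right and act trivially on $|\emptyset\rangle$, so the expectation collapses to $\langle\emptyset|\emptyset\rangle = 1$ and one is left with $\prod_{\ell \le i < j \le r,\; b_i = +,\; b_j = -} z_{ij}$, which is \eqref{eq:hooklength}. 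When $\ell = -\infty$ or $r = +\infty$ the same argument applies as an identity of formal power series in the $x_i$ (each monomial being produced only finitely often, and the operator products being well defined on the relevant graded components of Fock space). The main obstacle will be the transfer-matrix identification in the second paragraph; once that is in place, the remaining computation is a routine manipulation of vertex-operator commutation relations.
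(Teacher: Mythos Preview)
Your proposal is correct and follows essentially the same approach as the paper: first establish the transfer-matrix identity $Z(G;\underline{x}) = \langle\emptyset|\,T_\ell\cdots T_r\,|\emptyset\rangle$ by reading off Maya diagrams along odd columns and checking that elementary coverings implement the appropriate interlacing conditions (this is the paper's Proposition~\ref{prop:RYGtransfer}), then normal-order the product using the commutation relations of Proposition~\ref{prop:commut} to extract the scalar $\prod z_{ij}$. One small imprecision: the Maya diagram at odd abscissa $2m+1$ records whether each vertex is matched to the left or to the right, not ``which horizontal edges are covered'' (diagonal edges also contribute); but this does not affect the argument.
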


\begin{rem}
  The partition function is always a well-defined power series in the
  $x_i$'s: indeed, all but finitely many factors contribute a factor
  $1$ to the coefficient of a given monomial in
  \eqref{eq:hooklength}.
\end{rem}

An interesting specialization is the \emph{$q$-RYG dimer model}: given
a formal variable $q$, we attach to each configuration a weight $q^d$
with $d$ its flip distance to the fundamental one. As explained in
Section~\ref{sec:enumproof} below, this can be achieved by taking, for
all $i\in [\ell..r]$, $x_i=q^i$ if $b_i=-$, and $x_i=1/q^i$ if
$b_i=+$, with $q$ an indeterminate. A caveat is that, when $\ell$ or
$r$ is infinite, this specialization may be ill-defined since an
infinite number of monomials in the $x_i$'s might specialize to the
same monomial $q^d$. A sufficient condition for the specialization to
be well-defined is the following \emph{finiteness condition} on the
sign sequence:
\begin{itemize}
\item if $\ell=-\infty$, then there exists $\ell'$ finite such that $b_i=+$
  for all $i<\ell'$,
\item if $r=+\infty$, then there exists $r'$ finite such that $b_i=-$
  for all $i>r'$.
\end{itemize}
(This condition is essentially necessary, because any initial run of
$-$ or final run of $+$ in the sign sequence does not contribute to
the partition function, and can be removed without loss of generality:
any pure dimer covering coincides with the fundamental dimer covering
in the corresponding regions.)

\begin{thm}
  \label{thm:enum2}
  Assuming that the finiteness condition holds, the partition function
  of the $q$-RYG dimer model is
  \begin{equation} \label{eq:hooklength2}
    Z(G;q) = \prod_{\substack{\ell
        \leq i < j \leq r\\ b_i=+,b_j=-}} z_{ij}
  \end{equation}
  \begin{equation}
    z_{ij} =
    \begin{cases}
      1 + q^{j-i} & \text{if $a_i \neq a_j$,} \\
      (1 - q^{j-i})^{-1} & \text{if $a_i = a_j$.} \\
    \end{cases}
  \end{equation}
\end{thm}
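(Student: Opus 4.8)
The plan is to deduce Theorem~\ref{thm:enum2} from Theorem~\ref{thm:enum1} by performing the announced substitution $x_i = q^{i}$ when $b_i=-$ and $x_i = q^{-i}$ when $b_i=+$. Writing $\varepsilon_i = +1$ if $b_i=-$ and $\varepsilon_i=-1$ if $b_i=+$, so that $x_i = q^{\varepsilon_i i}$, this substitution turns the weight of a pure covering $C$ into $q^{e(C)}$, where
\[
e(C) \;:=\; \sum_{i=\ell}^{r} \varepsilon_i\, i\, d_i(C) \;=\; \sum_{i:\,b_i=-} i\, d_i(C) \;-\; \sum_{i:\,b_i=+} i\, d_i(C) .
\]
The heart of the matter is the identity $e(C) = d(C)$, the flip distance from $C$ to the fundamental covering, valid for every pure covering $C$. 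Granting it,
\[
Z(G;q) \;=\; \sum_{C} q^{d(C)} \;=\; \sum_{C} w(C)\big|_{x_i = q^{\varepsilon_i i}} \;=\; Z(G;\underline{x})\big|_{x_i = q^{\varepsilon_i i}} ,
\]
and applying Theorem~\ref{thm:enum1} together with the observation that a pair $i<j$ with $b_i=+$, $b_j=-$ contributes $x_i x_j = q^{-i} q^{j} = q^{j-i}$, each factor $z_{ij}$ of \eqref{eq:hooklength} turns into the corresponding factor of \eqref{eq:hooklength2}. Incidentally, $e(C)=d(C)\ge 0$, so the substitution really produces a power series in $q$ and not just a Laurent series.

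To establish $e=d$ I would exploit the lattice structure recalled in Subsection~\ref{sec:flipdef}: the pure coverings of $G$, ordered by the positive-flip relation, form a distributive --- hence graded --- lattice whose minimum is the fundamental covering, and $d(C)$ is precisely the rank of $C$, i.e.\ the common length of every saturated chain of positive flips from the fundamental covering up to $C$. Since $e$ vanishes on the fundamental covering (which carries no diagonal dimer), it suffices to check that a single positive flip raises $e$ by exactly $1$. This is a purely local computation: a positive flip occurs at an inner face, the inner faces of a rail yard graph fall into finitely many combinatorial types (each bordered by $4$, $6$ or $8$ edges, and determined by the $L/R$-and-sign types of the columns it straddles), and for each type one records how the flip changes the diagonal-dimer counts $d_i(C)$ in the columns involved and checks, using the signs $\varepsilon_i$, that $e$ increases by exactly $1$; the counterclockwise-orientation condition defining positivity is exactly what pins down the sign. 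I expect this finite case check, rather than any conceptual difficulty, to be the main obstacle. (An alternative, which the bosonic formalism of Section~\ref{sec:bos} makes available, is to identify pure coverings with interlacing sequences of partitions $\lambda^{(\ell-1)}=\emptyset,\lambda^{(\ell)},\dots,\lambda^{(r)}=\emptyset$, to note that $d_i(C)$ equals $\bigl||\lambda^{(i)}|-|\lambda^{(i-1)}|\bigr|$ with sign governed by $b_i$, so that Abel summation yields $e(C)=\sum_i|\lambda^{(i)}|$, and then to identify this volume with the flip distance.)

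Finally, two well-definedness points remain, both governed by the finiteness condition and essentially as noted just after its statement. First, the substitution $x_i\mapsto q^{\varepsilon_i i}$ is meaningful on the power series $Z(G;\underline{x})$ because, under that condition, only finitely many of its monomials specialize to each power of $q$. Second, the right-hand side of \eqref{eq:hooklength2} is itself a well-defined power series in $q$: for each fixed $k\ge 1$ the finiteness condition confines the indices $i$ of pairs $(i,j)$ with $b_i=+$, $b_j=-$ and $j-i=k$ to a bounded range, so only finitely many factors $z_{ij}$ affect the coefficient of any given $q^N$. Combining these remarks with the identity $e=d$ and Theorem~\ref{thm:enum1} completes the argument.
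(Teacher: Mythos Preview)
Your proposal is correct and follows essentially the same approach as the paper: both reduce Theorem~\ref{thm:enum2} to Theorem~\ref{thm:enum1} via the substitution $x_i=q^{\varepsilon_i i}$, and both establish that this specialization yields $q^{d(C)}$ by checking that each positive flip multiplies the weight by exactly $q$ through a local case analysis of the face types (the paper carries this out explicitly using the half-face decomposition of Figure~\ref{fig:flipWeights}, finding that a positive flip on a face straddling columns $i$ and $i+1$ changes the weight by $x_i^{b_i}/x_{i+1}^{b_{i+1}}=q^{-i}/q^{-(i+1)}=q$). Your parenthetical alternative via Abel summation and the volume $\sum_i|\lambda^{(i)}|$ is a nice observation not made explicit in the paper.
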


The proofs of Theorems~\ref{thm:enum1} and \ref{thm:enum2} are given
in Section~\ref{sec:enumproof}.

\begin{rem}
  The product form \eqref{eq:hooklength2} is strongly reminiscent of a
  hook-length formula, upon interpreting the sign sequence as
  describing the shape of a (possibly infinite) Young diagram, see
Figure~\ref{fig:hooklength}. The
  finiteness condition ensures that there are finitely many ``hooks''
  of a given length, and hence that~\eqref{eq:hooklength2} is a well-defined
formal power series in $q$.
\end{rem}
\begin{figure}[htpb]
  \centering
\begin{center}
  \includegraphics[width=0.7\textwidth]{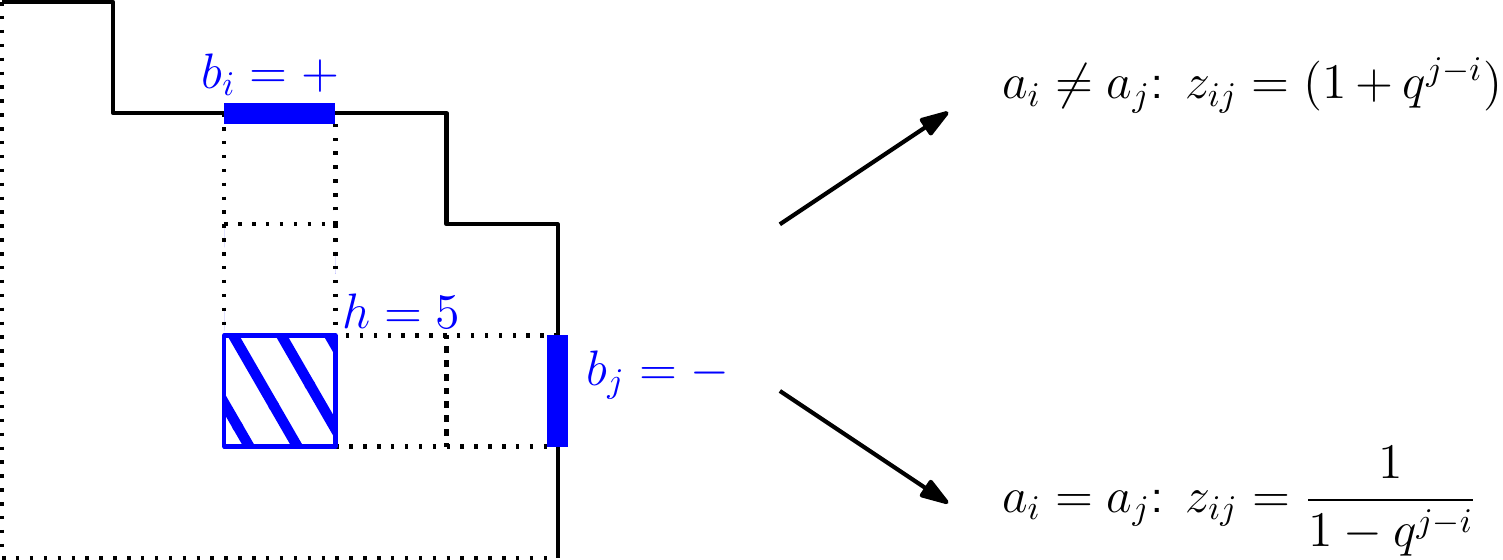}
\end{center}
  \caption{The interpretation of Theorem~\ref{thm:enum2} as a
hook-length formula. Displaying the sign sequence $\underline{b}$ as a lattice path whose horizontal
(resp. vertical) steps correspond to $+$ (resp. to $-$), one obtains a
Young diagram whose boxes are indexed by pairs $(i,j)$ such that 
$i<j$, $b_i=+$,
$b_j=-$. The quantity $h=j-i$ is the ``hook-length'' of the box. In
\eqref{eq:hooklength2}, each box gives rise to a multiplicative
factor $z_{i,j}$ whose value $1+q^h$ or $(1-q^h)^{-1}$ is determined by
comparing the two terms $a_i$ and $a_j$ of the LR sequence $\underline{a}$.
}
  \label{fig:hooklength}
\end{figure}

\subsection{Correlations}
\label{sec:corr}

So far we have introduced the partition function of the RYG dimer
model, which depends on a sequence of formal variables $\underline{x}$
in general and on a single variable $q$ in the flip
specialization. For a probabilistic or statistical physics
interpretation, one shall rather consider the $x_i$'s or $q$ as
nonnegative real numbers such that the sum $Z(G;\underline{x})$ of the
weights $w(C)$ over all pure dimer coverings $C$ is convergent. As
apparent from Theorem~\ref{thm:enum1}, this is the case if and only if
\begin{equation}
  \label{eq:cc1}
  x_i x_j < 1 \text{ for all $i<j$ such that $a_i=a_j$, $b_i=+$ and
    $b_j=-$}
\end{equation}
and, when $\ell$ or $r$ is infinite,
\begin{equation}
  \label{eq:cc2}
  \sum_{\substack{\ell \leq i < j \leq r\\ b_i=+,b_j=-}} x_i x_j < \infty.
\end{equation}
In the $q$-RYG dimer model, \eqref{eq:cc1} is satisfied whenever
$q<1$, and \eqref{eq:cc2} amounts to the finiteness condition defined
before.

Assuming that the RYG dimer model is \emph{well-defined}, that is to
say \eqref{eq:cc1} and \eqref{eq:cc2} are satisfied, we may interpret
$w(C)/Z(G;\underline{x})$ as the probability of the pure dimer
covering $C$. For a finite set $E$ of edges, we denote by
$P_{G;\underline{x}}(E)$ the probability that all the edges of $E$ are
covered by a dimer. Our main probabilistic result is an explicit
determinantal expression for $P_{G;\underline{x}}(E)$, which requires
to introduce some notations.
For $k,k'$ two integers, we set
\begin{equation}
F_k(z)=\frac
{\displaystyle
 \prod\limits_{\substack{i:(a_i,b_i)=(R,+) \\ 2i<k}}{\left(1+x_iz\right)}
 \prod\limits_{\substack{j:(a_j,b_j)=(L,-) \\ 2j>k}}{\left(1-\frac{x_j}{z}\right)}}
{\displaystyle
 \prod\limits_{\substack{i:(a_i,b_i)=(L,+) \\ 2i\leq k}}{\left(1-x_iz\right)}
 \prod\limits_{\substack{j:(a_j,b_j)=(R,-) \\ 2j\geq k}}{\left(1+\frac{x_j}{z}\right)}}.
 \label{eq:def_Fk}
\end{equation}
and
\begin{equation}
  \label{eq:def_Gkk}
  G_{k,k'}(z,w)=  \frac{F_{k}(z)}{F_{k'}(w)}  \frac{\sqrt{zw}}{z-w}.
\end{equation}
Note that all the products in \eqref{eq:def_Fk} are convergent by
\eqref{eq:cc2}, hence $F_k(z)$ is a meromorphic function on the whole
complex plane, whose all zeros and poles are on the real axis.

For $\alpha,\beta$ two vertices of $G$ such that $\alpha^{\mathrm{x}}$
is even and $\beta^{\mathrm{x}}$ is odd, we set
\begin{equation}
  \label{eq:Cdef}
  \mathcal{C}_{\alpha,\beta}= \frac{1}{\left(2i\pi\right)^2}
  \oint_{C_{z}}\oint_{C_{w}}
  G_{\alpha^{\mathrm{x}},\beta^{\mathrm{x}}}(z,w)
  \frac{w^{\beta^{\mathrm{y}}}}{z^{\alpha^{\mathrm{y}}}}
  \frac{\mathrm{d}z}{z}
  \frac{\mathrm{d}w}{w}
\end{equation}
where the contours must satisfy the following conditions: (i) $C_z$
should encircle $0$ and all the negative poles of
$F_{\alpha^{\mathrm{x}}}(z)$, but not the positive ones; (ii) $C_w$
should encircle $0$ and all the positive zeros of
$F_{\beta^{\mathrm{x}}}(w)$, but not the negative ones; (iii) $C_z$
and $C_w$ should not intersect, and $C_z$ should surround $C_w$ if and
only if $\alpha^{\mathrm{x}} < \beta^{\mathrm{x}}$. We shall check in
Section~\ref{sec:det_corr_proof} that the assumptions \eqref{eq:cc1}
and \eqref{eq:cc2} imply that such contours always exist but, at this
stage, let us mention their intuitive interpretation:
$\mathcal{C}_{\alpha,\beta}$ is obtained by extracting the coefficient
of $z^{\alpha^{\mathrm{y}}} w^{-\beta^{\mathrm{y}}}$ in
$G_{\alpha^{\mathrm{x}},\beta^{\mathrm{x}}}(z,w)$, when we treat each
factor $(1 - x_i z)^{-1}$ as a power series in $z$, each factor
$(1 + x_j / z)^{-1}$ as a power series in $z^{-1}$, each factor
$(1 + x_i w)^{-1}$ as a power series in $w$, each factor
$(1 - x_j / w)^{-1}$ as a power series in $w^{-1}$, and finally we
expand $\sqrt{zw}/(z-w)= \sum_{k \in \mathbb{N}+1/2} (w/z)^k$ if
$\alpha^{\mathrm{x}} < \beta^{\mathrm{x}}$, or
$\sqrt{zw}/(z-w)= \sum_{k \in \mathbb{N}+1/2} (w/z)^{-k}$ if
$\alpha^{\mathrm{x}} > \beta^{\mathrm{x}}$.

We are now ready to state our theorem. Recall that, for an edge
$e=(\alpha,\beta)$, $\alpha$ and $\beta$ are assumed to be
respectively the even and the odd endpoint of $e$.

\begin{thm}[Dimer correlations]
  \label{thm:det_corr}
  Let $E=\{e_1,\ldots,e_s\}$ be a finite set of edges of
  $\RYG(\ell,r, \underline{a}, \underline{b})$, with
  $e_i=(\alpha_i,\beta_i)$. Then, we have
  \begin{equation}
    \label{eq:det_corr}
    P_{G;\underline{x}}(E)=(-1)^{H(E)}\underline{x}^{\underline{n}}
    \det_{1\leq i,j\leq s} (\mathcal{C}_{\alpha_i,\beta_j}),
  \end{equation}
  with $H(E)$ the number of horizontal edges in $E$ whose right
  endpoint is at an even abscissa,
  $\underline{x}^{\underline{n}}=x_\ell^{n_\ell}\cdots x_r^{n_r}$ with
  $n_k$ the number of diagonal edges in $E$ in column $k$, and
  $\mathcal{C}$ defined as in \eqref{eq:Cdef}.
\end{thm}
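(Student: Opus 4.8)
The plan is to establish the dimer correlation formula by translating everything into the fermionic operator formalism and then invoking Wick's theorem, following the same strategy used to prove Theorem~\ref{thm:enum1} via transfer matrices, but now keeping track of local observables. First I would recall the bosonic transfer-matrix picture: a pure dimer covering of $\RYG(\ell,r,\underline{a},\underline{b})$ is read off column by column as a sequence of integer partitions $\lambda^{(\ell-1)}=\emptyset, \lambda^{(\ell)}, \ldots, \lambda^{(r)}=\emptyset$ (the partitions encoding the positions of horizontal dimers along each odd vertical line), with the transition from column $i-1$ to column $i$ governed by one of the four vertex operators $\Gamma_+(x_i), \Gamma_-(x_i), \Gamma_+'(x_i), \Gamma_-'(x_i)$ according to $(a_i,b_i)$, as set up in Section~\ref{sec:bos}. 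The partition function is $\langle \emptyset | \prod_i (\text{operator}) | \emptyset\rangle$, and Theorem~\ref{thm:enum1} follows from the commutation relations. The point of Section~\ref{sec:ferm} is that, under the boson–fermion correspondence, each partition $\lambda^{(k)}$ is the charge-zero state in the fermionic Fock space whose occupied modes are the Maya diagram of $\lambda^{(k)}$, and the event ``a given odd vertex $(k,y)$ is covered by a horizontal dimer on its left (resp.\ right)'' becomes the presence (resp.\ absence) of a fermion at a specific half-integer mode, measured by a fermionic bilinear $\psi \psi^*$. The key algebraic input I would use is the explicit commutation of these fermion operators through the vertex operators: conjugating $\psi(z)$ or $\psi^*(w)$ past $\Gamma_\pm, \Gamma'_\pm$ multiplies it by exactly the corresponding linear factor $(1\pm x_i z)^{\pm 1}$ or $(1\pm x_i/z)^{\pm 1}$, which is precisely how the generating function $F_k(z)$ in~\eqref{eq:def_Fk} is assembled.

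Next I would carry out the reduction of $P_{G;\underline{x}}(E)$ to a vacuum expectation of a product of fermion bilinears. The three types of edges (horizontal-left, horizontal-right, diagonal) each correspond, via the local dictionary between dimer configurations around an even vertex and the interlacing conditions on consecutive Maya diagrams, to a specific fermionic observable inserted at the appropriate column position in the operator product; the diagonal edges additionally carry the weight $x_k$, which accounts for the prefactor $\underline{x}^{\underline{n}}$. The sign $(-1)^{H(E)}$ arises from the anticommuting nature of the fermions when one normal-orders or reorders the bilinears into the canonical $\psi\psi^*$ form — I would track it carefully by a parity count matching the stated combinatorial definition (horizontal edges whose right endpoint is even). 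After this translation, $P_{G;\underline{x}}(E)$ is $\underline{x}^{\underline{n}}$ times $\langle \emptyset | T_\ell \cdots (\text{observables interleaved}) \cdots T_r | \emptyset \rangle / Z$, a correlator of $s$ fermion–anti-fermion pairs in a free-fermion (quasi-free) state.

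Then Wick's theorem (rederived in Appendix~\ref{sec:wick}) applies: the correlator factors as $\det_{1\le i,j\le s}$ of the two-point functions $\langle \psi_{(\alpha_i)} \psi^*_{(\beta_j)}\rangle$ in that state, and I would identify each such two-point function with the contour integral $\mathcal{C}_{\alpha_i,\beta_j}$ of~\eqref{eq:Cdef}. This identification is where the ``Heisenberg picture'' of Subsection~\ref{sec:schrheis} is used: moving the fermion at column $\alpha_i^{\mathrm{x}}$ and the anti-fermion at column $\beta_j^{\mathrm{x}}$ out to $\pm\infty$ through all the vertex operators converts their generating series into $F_{\alpha_i^{\mathrm{x}}}(z)$ and $1/F_{\beta_j^{\mathrm{x}}}(w)$ respectively, the free two-point function contributes $\sqrt{zw}/(z-w)$ expanded in the geometric series dictated by the sign of $\alpha_i^{\mathrm{x}}-\beta_j^{\mathrm{x}}$, and the coefficient extraction in $z^{\alpha_i^{\mathrm{y}}} w^{-\beta_j^{\mathrm{y}}}$ becomes the double contour integral with contours chosen exactly as in conditions (i)–(iii) — I would check that~\eqref{eq:cc1}–\eqref{eq:cc2} guarantee a nonempty annulus of convergence separating negative from positive poles/zeros, so such contours exist.

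The main obstacle I expect is the bookkeeping in the middle step: getting the local dictionary between the three kinds of dimer edges and the fermionic bilinears exactly right — in particular pinning down at which column (the even abscissa $\alpha^{\mathrm{x}}$ versus the odd abscissa $\beta^{\mathrm{x}}$, and the role of the inequalities $2i<k$ versus $2i\le k$ in~\eqref{eq:def_Fk}) each operator must be inserted, and simultaneously controlling the global sign so that the reordering parity collapses to the clean count $H(E)$. Everything else — the commutation relations, Wick, and the contour-integral evaluation — is by now standard once the insertions and signs are fixed.
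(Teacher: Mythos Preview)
Your plan is correct and follows essentially the same route as the paper: express the constrained partition function via fermionic insertions in the transfer-matrix product (the paper's Theorem~\ref{thm:fermdim}), pass to the Heisenberg picture to absorb the $\Gamma$'s into the $F_k$'s, apply Wick's formula, and identify the two-point function with the double contour integral~\eqref{eq:Cdef}. One small refinement: under the weaker hypothesis~\eqref{eq:cc1} alone the admissible contours are not in general concentric circles (annuli) about the origin---the paper first proves the formula under the stronger condition $x_i x_j<1$ for all $i<j$ with $b_i=+$, $b_j=-$, where circular contours suffice, and then relaxes to~\eqref{eq:cc1} by analytic continuation in a parameter $t$ scaling all the $x_i$'s; you should expect to need the same two-step argument.
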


The proof of Theorem~\ref{thm:det_corr} is given in
Section~\ref{sec:ferm}, where we also prove that the infinite matrix
$\mathcal{C}$, with rows indexed by even vertices $\alpha$ and columns
by odd vertices $\beta$, is an inverse of the Kasteleyn matrix of the
rail yard graph for a suitable Kasteleyn orientation (see
Theorem~\ref{thm:inverse_kast}). Applications will be discussed in
Section~\ref{sec:particular}.

\section{Bosonic operators}
\label{sec:bos}

The purpose of this section is to establish Theorems~\ref{thm:enum1}
and \ref{thm:enum2}. This is done naturally by the transfer-matrix
method, which here consists in decomposing the pure dimer coverings we
want to enumerate into a sequence of compatible elementary dimer
coverings. It turns out that the transfer matrices are isomorphic to
certain operators arising in the so-called boson-fermion
correspondence. For more details on this latter subject, we refer the
reader to one of the many references available in the mathematical
physics literature, for instance \cite[Chapter 14]{Kac}, \cite{MJD},
\cite[Appendix A]{Okounkov:wedge},
\cite{OR2}, \cite{Tin11} and
\cite{AlZa13}.

We start by giving the necessary reminders in
Section~\ref{sec:bosrem}, then make the connection with rail yard
graphs in Section~\ref{sec:bosmat}, and finally complete the proofs of
Theorems~\ref{thm:enum1} and \ref{thm:enum2} in
Section~\ref{sec:enumproof}.

\subsection{Reminders}
\label{sec:bosrem}

An \emph{integer partition}, or \emph{partition} for short, is a
nonincreasing sequence $\lambda=(\lambda_i)_{i \geq 1}$ of integers
which vanishes eventually.
The \emph{size} of a partition $\lambda$ is
$|\lambda|=\sum_{i \geq 1} \lambda_i$. We say that two partitions
$\lambda$ and $\mu$ are \emph{interlaced}, and we write
$\lambda \succ \mu$ or $\mu \prec \lambda$, if we have
\begin{equation}
  \lambda_1 \geq \mu_1 \geq \lambda_2 \geq \mu_2 \geq \lambda_3 \geq \cdots
\end{equation}
In the well-known pictorial representation in terms of Young diagrams,
this means that the skew shape $\lambda/\mu$ is a horizontal strip, see e.g. \cite[Chap. 7]{Stanley} for more precise definitions. To
a partition $\lambda$ we may associate its \emph{conjugate}
$\lambda'$, whose Young diagram is the image of that of $\lambda$ by a
reflection along the main diagonal. In more explicit terms, we have
$\lambda'_i=\#\{j \geq 0,\ \lambda_j\geq i\}$. Note that the relation
$\lambda \succ \mu$ amounts to
\begin{equation}
  \label{eq:conjinter}
  \lambda'_i - \mu'_i \in \{0,1\} \qquad \text{for all $i \geq 1$.}
\end{equation}

The \emph{bosonic Fock space}, denoted $\mathcal{B}$, is the infinite
dimensional Hilbert space spanned by orthonormal basis vectors
$|\lambda\rangle$ where $\lambda$ runs over the set of integer
partitions. Here we will use the bra-ket notation so that
$\langle \lambda|$ denotes the dual basis vector. For $x$ a formal or
complex variable, we introduce the operators
$\Gamma_{L+}(x),\Gamma_{L-}(x),\Gamma_{R+}(x),\Gamma_{R-}(x)$ whose
action on basis vectors reads
\begin{equation}
  \label{eq:gdef}
  \begin{split}
    \Gamma_{L+}(x) |\lambda\rangle = \sum_{\mu: \mu \prec \lambda}
    x^{|\lambda|-|\mu|} |\mu\rangle, &\qquad \Gamma_{R+}(x)
    |\lambda\rangle = \sum_{\mu: \mu' \prec \lambda'}
    x^{|\lambda|-|\mu|} |\mu\rangle, \\
    \Gamma_{L-}(x) |\lambda\rangle = \sum_{\mu: \mu \succ \lambda}
    x^{|\mu|-|\lambda|} |\mu\rangle, &\qquad \Gamma_{R-}(x)
    |\lambda\rangle = \sum_{\mu: \mu' \succ \lambda'}
    x^{|\mu|-|\lambda|} |\mu\rangle.
  \end{split}
\end{equation}
These operators are sometimes called (half-)vertex operators.
Let us mention that, in the literature, $\Gamma_{L\pm}(x)$ is
often denoted $\Gamma_\pm(x)$, see e.g.\ \cite{Okounkov:wedge}, while $\Gamma_{R\pm}(x)$ is
sometimes denoted $\Gamma'_\pm(x)$ \cite{Young:orbifolds,BCC}. Observe that we have
\begin{equation}\label{eq:emptyKilled}
  \Gamma_{L+}(x) |\emptyset \rangle = \Gamma_{R+}(x) |\emptyset
  \rangle = |\emptyset \rangle, \qquad
  \langle \emptyset | \Gamma_{L-}(x) = \langle \emptyset | \Gamma_{R-}(x)
  = \langle \emptyset |
\end{equation}
where $\emptyset$ denotes the empty partition. Note also that
$\Gamma_{L-}$ (resp.\ $\Gamma_{R-}$) is the dual of $\Gamma_{L+}$
(resp.\ $\Gamma_{R+}$), and that $\Gamma_{R+}$ (resp.\ $\Gamma_{R-}$)
is conjugated to $\Gamma_{L+}$ (resp.\ $\Gamma_{L-}$) via the
involution $\omega$ of $\mathcal{B}$ sending $|\lambda\rangle$ to
$|\lambda'\rangle$.

\begin{rem}
  \label{rem:ccgam}
  For $a_1,a_2 \in \{L,R\}$, the product
  $\Gamma_{a_1-}(x_1) \Gamma_{a_2+}(x_2)$ is clearly well-defined,
  because its coefficient between two states $\langle \lambda |$ and
  $| \mu \rangle$ involves only a finite sum. The same is true for
  $\Gamma_{a_1+}(x_1) \Gamma_{a_2-}(x_2)$ when $a_1 \neq a_2$ (observe
  that the ``intermediate'' partitions cannot get too large). Infinite
  sums arise when considering $\Gamma_{a_1+}(x_1) \Gamma_{a_2-}(x_2)$
  with $a_1=a_2$, but its coefficients are power series in $x_1$ and
  $x_2$, which are convergent for $|x_1 x_2|<1$ as apparent from the
  following proposition.
\end{rem}

\begin{prop}[Commutation relations]
  \label{prop:commut}
  For $a_1,a_2 \in \{L,R\}$, we have 
  \begin{equation}
    \label{eq:commutgam}
    \Gamma_{a_1+}(x_1) \Gamma_{a_2-}(x_2) = 
    \begin{cases}
      (1-x_1x_2)^{-1} \, \Gamma_{a_2-}(x_2) \Gamma_{a_1+}(x_1) &
      \text{if $a_1=a_2$,} \\
      (1+x_1x_2)^{\phantom{-1}}\, \Gamma_{a_2-}(x_2) \Gamma_{a_1+}(x_1) &
      \text{if $a_1 \neq a_2$,} \\
    \end{cases}
  \end{equation}
  while $\Gamma_{a_1+}(x_1)$ commutes with $\Gamma_{a_2+}(x_2)$, and
  $\Gamma_{a_1-}(x_1)$ commutes with $\Gamma_{a_2-}(x_2)$.
\end{prop}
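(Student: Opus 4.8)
The plan is to prove the commutation relations combinatorially, by computing the matrix element $\langle\nu|\,\Gamma_{a_1+}(x_1)\Gamma_{a_2-}(x_2)\,|\lambda\rangle$ on both sides and matching them. For the right-hand side one needs, in effect, to understand the reversed product $\Gamma_{a_2-}(x_2)\Gamma_{a_1+}(x_1)$; since $\Gamma_{a_1+}|\emptyset\rangle=|\emptyset\rangle$ and $\langle\emptyset|\Gamma_{a_2-}=\langle\emptyset|$ by \eqref{eq:emptyKilled}, it is natural to extract the scalar prefactor by comparing $\langle\emptyset|\,\Gamma_{a_1+}(x_1)\Gamma_{a_2-}(x_2)\,|\emptyset\rangle$ with $\langle\emptyset|\,\Gamma_{a_2-}(x_2)\Gamma_{a_1+}(x_1)\,|\emptyset\rangle=1$: the first equals $\sum_{\mu}x_1^{|\mu|}x_2^{|\mu|}$ (resp.\ with a sign twist in the $a_1\neq a_2$ case) where $\mu$ ranges over partitions whose diagram, or whose conjugate, is a horizontal strip — equivalently a single column or a single row — so one recovers $(1-x_1x_2)^{-1}=\sum_{n\geq 0}(x_1x_2)^n$ when $a_1=a_2$ and $1+x_1x_2$ when $a_1\neq a_2$. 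This identifies the constant; the real content is that the same constant works between arbitrary states.

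The key combinatorial step is therefore the following bijective identity: for fixed partitions $\lambda,\nu$, the chains $\lambda\leftarrow\kappa\rightarrow\nu$ appearing on the left (with $\kappa$ the intermediate partition, the arrows denoting the appropriate interlacing or conjugate-interlacing relation depending on $a_1,a_2$) are in weight-preserving correspondence with pairs consisting of a chain $\lambda\rightarrow\rho\leftarrow\nu$ from the right, together with a "common horizontal strip" datum whose generating function is exactly $z_{12}$. Concretely, in the case $a_1=a_2=L$ this is the classical computation: $\Gamma_+(x_1)\Gamma_-(x_2)$ and $\Gamma_-(x_2)\Gamma_+(x_1)$ differ by summing over horizontal strips added to and removed from a partition in the two possible orders, and the discrepancy is governed by pairs of interlacing strips living in disjoint rows, contributing $\sum_n (x_1x_2)^n$. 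The cases involving $R$ follow by applying the conjugation involution $\omega$: since $\omega\Gamma_{L\pm}(x)\omega=\Gamma_{R\pm}(x)$ and $\omega$ is an isometry fixing $\Gamma$'s of the same type only up to conjugation, conjugating the $L$-relation by $\omega$ yields the $R$-relation with the same $(1-x_1x_2)^{-1}$ factor, while the mixed cases $a_1\neq a_2$, where the product $\Gamma_{a_1+}\Gamma_{a_2-}$ is already a finite sum (Remark~\ref{rem:ccgam}), reduce to a short direct check using \eqref{eq:conjinter}: adding a horizontal strip then removing a "vertical strip" (in conjugate language) can be reorganized, and the two orders differ by at most one box in each row/column, producing the factor $1+x_1x_2$. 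Finally, the statements that $\Gamma_{a_1+}$ commutes with $\Gamma_{a_2+}$ and $\Gamma_{a_1-}$ with $\Gamma_{a_2-}$ are immediate: each side is a sum over chains of two successive same-direction interlacings, and for $a_1=a_2$ the transitivity of $\succ$ (or its conjugate) makes the two orders manifestly symmetric, while for $a_1\neq a_2$ one checks that the composite relation "there exists an intermediate partition interlacing appropriately on both sides" is symmetric in the two constraints, again via \eqref{eq:conjinter}.

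The main obstacle I expect is bookkeeping the four sign/LR combinations uniformly: the mixed cases $a_1\neq a_2$ mix the $\succ$ relation with its conjugate $\mu'\prec\lambda'$, so the intermediate partitions are constrained by one horizontal-strip condition and one vertical-strip condition simultaneously, and one must verify that swapping the order of "add a horizontal strip" and "remove a strip in the conjugate sense" is exactly accounted for by $1+x_1x_2$ — i.e.\ that at most one extra box is ever created, in a position that is forced. This is where a clean lemma on pairs of strips (one horizontal, one vertical, or two of the same flavor) is needed; once such a lemma is isolated, all four cases and the two trivial commutations fall out of it, and the scalar prefactor is pinned down by the vacuum computation above. (A purely algebraic alternative, deriving \eqref{eq:commutgam} from the boson-fermion correspondence and the known vertex-operator relations, is available in the cited references, but the combinatorial route is more in the spirit of the paper and is carried out in Appendix~\ref{sec:fermcom}.)
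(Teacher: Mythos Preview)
Your plan has the right overall shape for \eqref{eq:commutgam}, and the vacuum computation correctly identifies the prefactor. But there is a genuine gap in your treatment of the ``trivial'' commutations $\Gamma_{a+}(x_1)\Gamma_{a+}(x_2)=\Gamma_{a+}(x_2)\Gamma_{a+}(x_1)$ for $a_1=a_2=a$. You write that ``the transitivity of $\succ$ makes the two orders manifestly symmetric''. It does not: both orders sum over the \emph{same} chains $\nu\prec\kappa\prec\lambda$, but with different weights, namely $x_1^{|\kappa|-|\nu|}x_2^{|\lambda|-|\kappa|}$ versus $x_2^{|\kappa|-|\nu|}x_1^{|\lambda|-|\kappa|}$. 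Showing these sums agree is exactly the symmetry of the two-variable skew Schur function $s_{\lambda/\nu}(x_1,x_2)$, and this requires the Bender--Knuth involution (as the paper notes), not transitivity. Without it your argument for the same-sign commutation is incomplete.

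A second, smaller point: your closing parenthetical says the combinatorial route ``is carried out in Appendix~\ref{sec:fermcom}''. That appendix proves the boson--fermion commutations of Proposition~\ref{prop:com_gamma_psi}, not the boson--boson relations of Proposition~\ref{prop:commut}; the paper instead cites \cite{Young:orbifolds} and \cite{BBBCCV14} for \eqref{eq:commutgam}, invokes Bender--Knuth for the equal-$a$ same-sign case, and for the mixed same-sign case gives a short self-contained argument (both intermediate sets are nonempty iff $\lambda/\mu$ contains no $2\times 2$ square, and then have cardinality $2^C$ with $C$ the number of connected components, with an easy weight-preserving bijection). Your sketch for the mixed cases is compatible with this, but you should make the bijection and its weight-preservation explicit rather than asserting that the composite relation ``is symmetric in the two constraints''.
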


\begin{proof}
  See for instance \cite[Lemma~3.3]{Young:orbifolds} for an algebraic
  proof, and \cite[Section 3]{BBBCCV14} for a bijective proof of
  \eqref{eq:commutgam}.  The celebrated Bender-Knuth involution
  \cite[pp.\ 46-47]{MR0299574} implies that $\Gamma_{a\pm}(x_1)$
  commutes with $\Gamma_{a\pm}(x_2)$ for $a=L$ or $R$. That
  $\Gamma_{L\pm}(x_1)$ commutes with $\Gamma_{R\pm}(x_2)$ is also
  well-known, but for completeness let us here sketch a short proof:
  for two partitions $\lambda,\mu$, one sees easily that the two sets
  $\{\nu: \mu \prec \nu,\ \nu' \prec \lambda'\}$ and
  $\{\nu: \mu' \prec \nu',\ \nu \prec \lambda\}$ are nonempty if and
  only if $\lambda/\mu$ is a skew shape containing no $2\times 2$
  square. In that case, both sets have the same cardinality $2^C$,
  where $C$ is the number of connected components of $\lambda/\mu$,
  and one easily constructs a bijection between them proving the
  wanted commutation relation. Another byproduct of this bijection is
  that
  \begin{equation}
    \label{eq:gaminv}
    \Gamma_{L\pm}(x) \Gamma_{R\pm}(-x) = \Gamma_{R\pm}(-x) \Gamma_{L\pm}(x) =
    1. \qedhere
  \end{equation}
\end{proof}

Given two symbols $\bullet$ and $\circ$ (called respectively
\emph{black} and \emph{white marbles}), a \emph{Maya diagram}
\cite{MJD} is an element $\mathbf{m}$ of
$\{\bullet,\circ\}^{\mathbb{Z}+1/2}$ such that $\mathbf{m}_k$ is
eventually equal to $\bullet$ for $k \to -\infty$ and to $\circ$ for
$k \to +\infty$. It then not difficult to check that the quantity
\begin{equation}
  c = \# \{ k>0, \mathbf{m}_k = \bullet \} -
  \# \{ k<0, \mathbf{m}_k = \circ\}
\end{equation}
is a finite integer, and we call it the \emph{charge} of
$\mathbf{m}$. Let $k_1 > k_2 > \cdots$ be the positions of $\bullet$
in $\mathbf{m}$ enumerated in decreasing order, and let $\lambda_i =
k_i-c+i-1/2$: it is easily seen that $\lambda$ is a partition and that
the correspondence $\mathbf{m} \mapsto (\lambda,c)$ is one-to-one, the
pair $(\lambda,c)$ being called a \emph{charged partition}. 
Observe that we may extend the involution $\omega$ to charged
partitions by setting $\omega(\lambda,c) = (\lambda',-c)$, and this
corresponds on Maya diagrams to performing a reflection across $0$ and
exchanging $\bullet$ and $\circ$: in other words, $\mathbf{m}$ is sent
to $\mathbf{m}'$ such that
$\{\mathbf{m}_k,\mathbf{m}'_{-k}\}=\{\bullet,\circ\}$ for all
$k \in \mathbb{Z}+1/2$. By a slight abuse, we still denote by
$\emptyset$ the Maya diagram of charge $0$ corresponding to the empty
partition.

The \emph{fermionic Fock space}, denoted $\mathcal{F}$, is the
infinite dimensional Hilbert space spanned by orthonormal basis vectors
$|\mathbf{m}\rangle$ where $\mathbf{m}$ runs over the set of all Maya
diagrams. For $c \in \mathbb{Z}$, let
$\mathcal{F}_c \subset \mathcal{F}$ denote the subspace spanned by
Maya diagrams of charge $c$, so that
$\mathcal{F}=\oplus_{c\in\mathbb{Z}} \mathcal{F}_c$.  By the bijection
between Maya diagrams and charged partitions, each $\mathcal{F}_c$ may
be canonically identified with $\mathcal{B}$. This defines the action
of the bosonic operators $\Gamma_{L\pm}$ and $\Gamma_{R\pm}$ on
$\mathcal{F}$, leaving each subspace $\mathcal{F}_c$ invariant (by a
slight abuse we keep the same notations for the operators acting on
this larger space, and note that the commutations relations of
Proposition~\ref{prop:commut} remain valid). We now end this section
devoted to reminders, leaving the discussion of fermionic operators to
Section~\ref{sec:fermrem}.

\subsection{Interpretation as transfer matrices for RYGs}
\label{sec:bosmat}

The purpose of this section is to explain how the operators
$\Gamma_{L\pm}/\Gamma_{R\pm}$ may be identified with dimer transfer
matrices. The key observation is that Maya diagrams describe the
\emph{boundary states} in our model. More precisely, let us consider
an admissible dimer covering $C$ of
$G=\RYG(\ell,r,\underline{a},\underline{b})$. We define the
\emph{left boundary state} $\mathbf{l}(C)$ of $C$ by setting, for all
$k \in \mathbb{Z}+1/2$,
\begin{equation}
  \label{eq:lbsdef}
  \mathbf{l}(C)_k =
  \begin{cases}
    \circ & \text{if $(2\ell-1,k)$ is covered by a dimer,} \\
    \bullet & \text{otherwise.}
  \end{cases}
\end{equation}
It is a Maya diagram by the definition of an admissible dimer
covering. Similarly, the \emph{right boundary state} $\mathbf{r}(C)$
of $C$ is the Maya diagram defined by
\begin{equation}
  \label{eq:rbsdef}
  \mathbf{r}(C)_k =
  \begin{cases}
    \bullet & \text{if $(2\ell+1,k)$ is covered by a dimer,} \\
    \circ & \text{otherwise.}
  \end{cases}
\end{equation}
for all $k \in \mathbb{Z}+1/2$. See Figure~\ref{fig:whiteParticleJumping}(a). A pure dimer covering has both
boundary states equal to $\emptyset$. Note that, if $G'$ is a
rail yard graph which is concatenable after $G$, and if $C'$ is an
admissible dimer covering of $G'$, then $C$ and $C'$ are compatible if
and only if $\mathbf{r}(C)=\mathbf{l}(C')$. We may now state the main
result of this section.

\begin{prop}[Transfer matrix decomposition]\label{prop:RYGtransfer}
  Given a rail yard graph
  $G=\RYG(\ell,r,\underline{a},\underline{b})$ with $\ell,r$
  finite, and two Maya diagrams $\mathbf{l}$ and $\mathbf{r}$, the sum
  of the weights \eqref{eq:xweights} of all admissible dimer coverings
  of $G$ with left boundary state $\mathbf{l}$ and right boundary
  state $\mathbf{r}$ is given by
  \begin{equation}
    \label{eq:Zvert}
    Z(G,\mathbf{l},\mathbf{r};\underline{x}) = \langle \mathbf{l} |
    \Gamma_{a_\ell b_\ell}(x_\ell) \Gamma_{a_{\ell+1} b_{\ell+1}}(x_{\ell+1}) \cdots
    \Gamma_{a_r b_r}(x_r)
    | \mathbf{r} \rangle.
  \end{equation}
  In particular, the partition function reads
  \begin{equation}
    \label{eq:Zpurvert}
        Z(G;\underline{x}) = \langle \emptyset |
    \Gamma_{a_\ell b_\ell}(x_\ell) \Gamma_{a_{\ell+1} b_{\ell+1}}(x_{\ell+1}) \cdots
    \Gamma_{a_r b_r}(x_r)
    | \emptyset \rangle.
  \end{equation}
\end{prop}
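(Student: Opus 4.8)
The plan is to establish \eqref{eq:Zvert} by induction on $r-\ell$, using the concatenation structure of rail yard graphs and their dimer coverings, with the four elementary rail yard graphs as the base case. The reduction to the partition function \eqref{eq:Zpurvert} is then immediate: a pure dimer covering is precisely one with $\mathbf{l}(C) = \mathbf{r}(C) = \emptyset$, so summing \eqref{eq:xweights} over all pure coverings gives $Z(G;\underline{x}) = Z(G,\emptyset,\emptyset;\underline{x})$, and \eqref{eq:Zvert} with $\mathbf{l}=\mathbf{r}=\emptyset$ yields \eqref{eq:Zpurvert}.

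First I would handle the base case. For an elementary rail yard graph of type $(a,b)$ (with a single column of index $m$, say), I would enumerate the admissible elementary dimer coverings with prescribed left boundary state $\mathbf{l}$ and right boundary state $\mathbf{r}$, as depicted in Figure~\ref{fig:elemCovering}. The key point is to read off, from the local structure of the elementary graph, the bijection between such coverings and the data appearing in the definition \eqref{eq:gdef} of $\Gamma_{ab}(x)$. Concretely, translating the Maya diagrams $\mathbf{l}$ and $\mathbf{r}$ into charged partitions $(\lambda, c)$ and $(\mu, c')$, one checks that an admissible covering exists only if $c=c'$ (the number of covered diagonal edges is finite, and each diagonal dimer shifts a marble by one unit without changing the charge), and that in that case the coverings are in bijection with — depending on $(a,b)$ — the pairs $\mu \prec \lambda$, or $\mu \succ \lambda$, or $\mu' \prec \lambda'$, or $\mu' \succ \lambda'$, precisely matching the four operators. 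Moreover the exponent of $x_m$ in the weight \eqref{eq:xweights}, namely $d_m(C)$, equals $||\lambda|-|\mu||$, which is exactly the power of $x$ appearing in \eqref{eq:gdef}. Thus $Z(G_{\mathrm{elem}},\mathbf{l},\mathbf{r};x_m) = \langle \mathbf{l}|\Gamma_{ab}(x_m)|\mathbf{r}\rangle$. This is the main obstacle: the bookkeeping of how diagonal-edge choices in each of the four elementary graphs correspond to adding or removing a horizontal strip (on $\lambda$ or on $\lambda'$), and checking that the finiteness-of-diagonal-dimers condition in the admissibility definition matches the finiteness built into \eqref{eq:gdef} (only finitely many $\mu_i \neq \lambda_i$).

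For the inductive step, write $G = G_1 G_2$ as the concatenation of $\RYG(\ell,r-1,\ldots)$ with the elementary graph of type $(a_r,b_r)$ at column $r$. By the decomposition of admissible coverings into sequentially compatible pieces (Subsection~\ref{sec:admiss}), every admissible covering $C$ of $G$ with boundary states $\mathbf{l},\mathbf{r}$ factors uniquely as $C = C_1 C_2$ with $C_1$ an admissible covering of $G_1$, $C_2$ an admissible covering of $G_2$, $\mathbf{l}(C_1)=\mathbf{l}$, $\mathbf{r}(C_2)=\mathbf{r}$, and $\mathbf{r}(C_1)=\mathbf{l}(C_2)$ (compatibility). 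Since diagonal dimers of $C$ in columns $\leq r-1$ are exactly those of $C_1$ and in column $r$ exactly those of $C_2$, the weight factorizes as $w(C)=w(C_1)w(C_2)$. Summing over the intermediate Maya diagram $\mathbf{m}=\mathbf{r}(C_1)$ and using the induction hypothesis for $G_1$ together with the base case for $G_2$ gives
\begin{equation}
Z(G,\mathbf{l},\mathbf{r};\underline{x}) = \sum_{\mathbf{m}} \langle \mathbf{l}| \Gamma_{a_\ell b_\ell}(x_\ell)\cdots\Gamma_{a_{r-1}b_{r-1}}(x_{r-1})|\mathbf{m}\rangle\,\langle \mathbf{m}|\Gamma_{a_r b_r}(x_r)|\mathbf{r}\rangle,
\end{equation}
which collapses, via the completeness $\sum_\mathbf{m}|\mathbf{m}\rangle\langle\mathbf{m}| = \mathrm{Id}$ of the orthonormal basis of $\mathcal{F}$, to the right-hand side of \eqref{eq:Zvert}. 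The only subtlety to verify is that all sums involved are well-defined in the appropriate (power-series or convergent) sense — this is guaranteed by Remark~\ref{rem:ccgam} and the finiteness of diagonal dimers, since for fixed $\mathbf{l},\mathbf{r}$ only finitely many intermediate states $\mathbf{m}$ contribute to any given monomial in the $x_i$'s.
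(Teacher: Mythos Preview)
Your proposal is correct and follows essentially the same approach as the paper: both argue by induction on $r-\ell$, verify the base case by matching the unique elementary dimer covering with given boundary states to the interlacing condition defining $\Gamma_{ab}$ (the paper works out the $L+$ case explicitly via the positions of white marbles and leaves the others to the reader), and then handle the inductive step by peeling off the last column and using the weight-multiplicative decomposition of admissible coverings together with completeness of the Maya-diagram basis.
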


\begin{proof}
  Note that \eqref{eq:Zpurvert} follows from \eqref{eq:Zvert} by
  taking $\mathbf{l}=\mathbf{r}=\emptyset$, which amounts to
  considering pure dimer coverings.

  We first verify \eqref{eq:Zvert} for $\ell=r$, i.e.\ when $G$ is an
  elementary rail yard graph. Let us here treat the case $a_\ell=L$,
  $b_\ell=+$ (displayed third on Figure~\ref{fig:elemRYG}(c)) and
  leave the other cases to the reader.  Let $s_1<s_2<\cdots$ (resp.\
  $t_1<t_2<\cdots$) be the positions of $\circ$ in $\mathbf{l}$
  (resp. $\mathbf{r}$) enumerated in increasing order. Then, we claim
  that both sides of \eqref{eq:Zvert} are equal to $x_\ell^{\sum
    (s_i-t_i)}$ if the two conditions
  \begin{gather}
    s_i-t_i \in \{0,1\} \qquad \text{for all $i \geq 1$}
    \label{eq:intercond1}\\
    \sum_{i \geq 1} (s_i-t_i) < \infty \label{eq:intercond2}
  \end{gather}
  hold, and that both sides vanish otherwise. Indeed, on the one hand,
  it is not difficult to check (see
  Figure~\ref{fig:whiteParticleJumping}) that there is at most one
  elementary dimer configuration with prescribed boundary states
  $\mathbf{l}$ and $\mathbf{r}$, and that there is exactly one such
  configuration (containing $\sum (s_i-t_i)$ diagonal dimers) if and
  only if \eqref{eq:intercond1} and \eqref{eq:intercond2} hold.  On
  the other hand, let $\lambda$ and $\mu$ be the integer partitions
  associated with the Maya diagrams $\mathbf{l}$ and $\mathbf{r}$: the
  quantity
  $\langle \mathbf{l}| \Gamma_{L+}(x_\ell) |\mathbf{r} \rangle$ is
  equal to $x_\ell^{|\mu|-|\lambda|}$ if the two conditions
  \begin{gather}
    \lambda \prec \mu \label{eq:intercondbis1}\\
    \text{$\mathbf{l}$ and $\mathbf{r}$ have the same charge $c$}
    \label{eq:intercondbis2}  
  \end{gather}
  hold, and it vanishes otherwise. But we have
  $\lambda'_i=c+i-1/2-s_i$ and $\mu'_i=c+i-1/2-t_i$ for all $i \geq 1$
  hence, in view of \eqref{eq:conjinter}, we find that the conditions
  \eqref{eq:intercond1}-\eqref{eq:intercond2} amount to
  \eqref{eq:intercondbis1}-\eqref{eq:intercondbis2}, and then that
  $|\mu|-|\lambda|=\sum (s_i-t_i)$ as wanted.

\begin{figure}[htpb]
  \centering
  \includegraphics[width=0.8\textwidth]{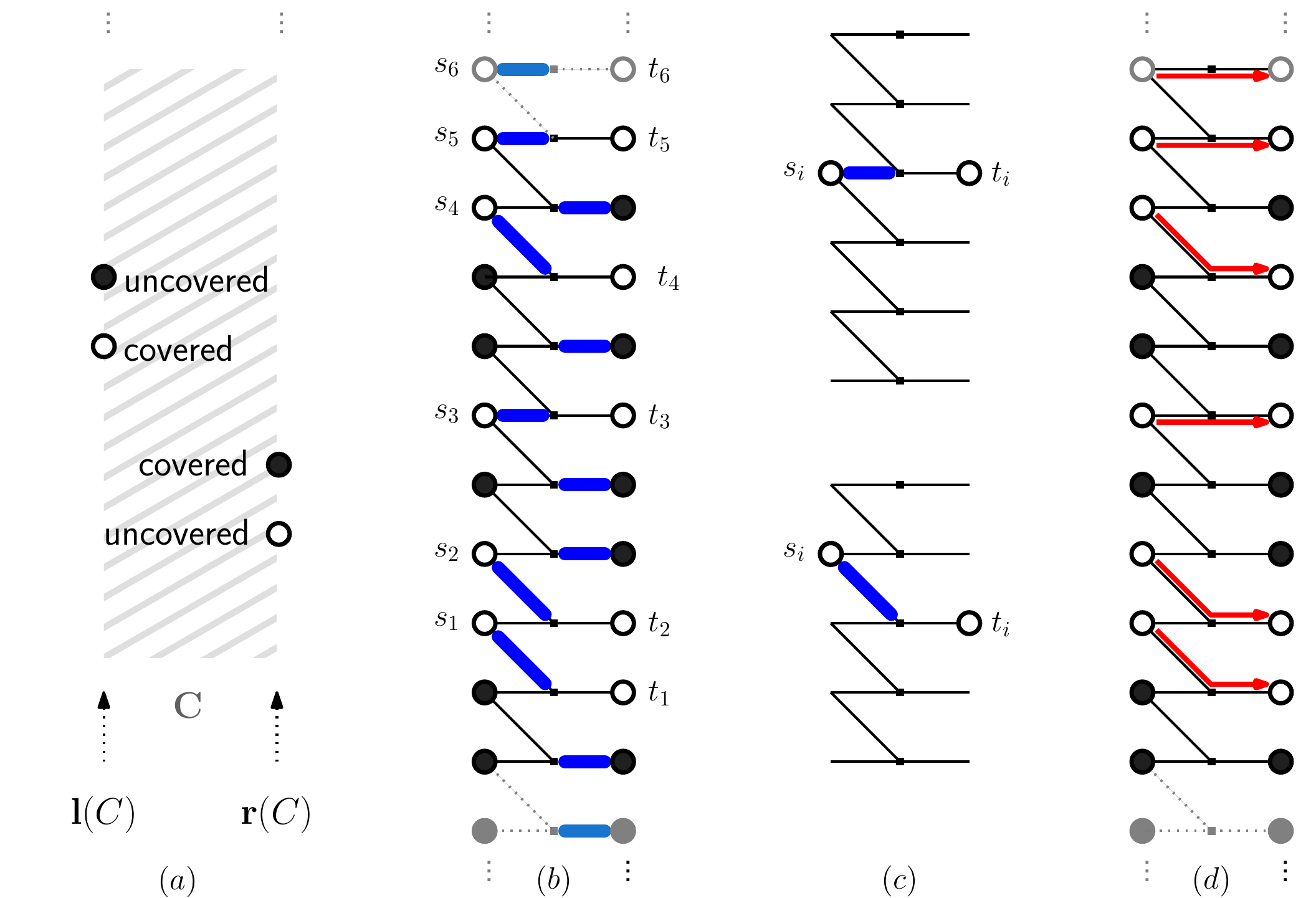}
  \caption{(a) The rules~\eqref{eq:lbsdef}-\eqref{eq:rbsdef} that define the
two Maya diagrams $\mathbf{l}(C)$ and $\mathbf{r}(C)$. (b) An admissible dimer
covering of the elementary rail yard graph $L+$. White marbles are numbered
as in the proof of Proposition~\ref{prop:RYGtransfer}. (c) 
If the left boundary vertex at ordinate $s_i$ is covered by a horizontal (resp.
diagonal) dimer, then the right boundary vertex at ordinate $s_i$ (resp.
$s_i-1$) is necessarily uncovered. Therefore there is a white
marble at this ordinate on the right boundary, and induction implies that it is the $i$-th white marble on the right boundary. This proves that $s_i-t_i \in \{0,1\}$ and that $\sum_i
(s_i-t_i)$ is equal to the number of diagonal dimers. Moreover, since dimers incident to the left boundary can be recovered from the knowledge of the sequences $(s_i)$ and $(t_i)$, it is clear that there is at most one dimer configuration with given boundary states $\mathbf{l}$ and $\mathbf{r}$, and that there is one if and only if \eqref{eq:intercond1} and \eqref{eq:intercond2} hold.
 (d) Condition~\eqref{eq:intercond1} can be
interpreted by saying that white marbles are ``jumping downwards'' by~$0$
or~$1$. 
 The cases of the elementary graphs $L-$, $R+$, and $R-$ have similar
interpretation, respectively with white marbles jumping upwards, black
marbles jumping upwards, and black marbles jumping downwards, in all cases
by $0$ or $1$.}
          \label{fig:whiteParticleJumping}
\end{figure}

  It remains to verify \eqref{eq:Zvert} for $\ell<r$, which may be
  easily done by induction: it suffices to observe that, if $G'$ is
  the rail yard graph obtained by removing the last ``strip'' of $G$,
  then any admissible dimer covering $C$ of $G$ with boundary states
  $\mathbf{l},\mathbf{r}$ is uniquely decomposed into a pair formed by
  an admissible dimer covering $C'$ of $G'$ with boundary states
  $\mathbf{l},\mathbf{m}$, for some Maya diagram $\mathbf{m}$, and an
  elementary dimer covering $E$ with boundary states
  $\mathbf{m},\mathbf{r}$, such that $w(C)=w(C')w(E)$.
\end{proof}

\subsection{Proof of enumeration results and computation of the partition function}
\label{sec:enumproof}

We are now ready to prove Theorems~\ref{thm:enum1}
and \ref{thm:enum2}. The first one is a direct consequence of the formalism developed above, whereas the second deserves an inspection of the different types of flips in rail yard graphs.

\begin{proof}[Proof of Theorem~\ref{thm:enum1}] We simply have to
  evaluate the right-hand side of \eqref{eq:Zpurvert}, which can be
  done as in in \cite[Section 4.1]{OR2}, \cite[Section
  4]{Young:orbifolds} or~\cite[Section 5.1]{BCC}: first observe that,
  by~\eqref{eq:emptyKilled}, for any $k, m$, any
  $(c_1,c_2,\dots c_{k+m}) \in \{L,R\}^{k+m}$, one has:
 \begin{equation}
        \langle \emptyset |
    \prod_{i=1}^k\Gamma_{c_i -}(z_i)
    \prod_{i=k+1}^{k+m}\Gamma_{c_i +}(z_i)
    | \emptyset \rangle = 1,
 \end{equation}
the $z_i$ being formal variables.

Now, by applying successively the commutation relations of Proposition~\ref{prop:commut}, one can transform~\eqref{eq:Zpurvert} into a scalar product of this form, up to a multiplicative prefactor, by moving to the left all the operators $\Gamma_{a_j b_j}(x_j)$ such that $b_j=-$. In order to do that, we have, for each $\ell\leq i<j\leq r$ such that $b_i=+$ and $b_j=-$, to transform the product of operators $\Gamma_{a_i+}(x_i)\Gamma_{b_j-}(x_j)$ into the product 
 $\Gamma_{b_j-}(x_j)\Gamma_{a_i+}(x_i)$. For each such transformation, we obtain a multiplicative contribution given by \eqref{eq:commutgam}, and the result follows.
\end{proof} 

\begin{proof}[Proof of Theorem~\ref{thm:enum2}]
As explained before the statement of Theorem~\ref{thm:enum2}, we have to prove that, specializing for $i\in [l..r]$ the weights \eqref{eq:xweights} to $x_i=q^i$ if $b_i=-$ and $x_i=1/q^i$ if $b_i=+$ amounts to attaching to each configuration a weight $q^d$, where $d$ is the flip distance to the fundamental configuration.

 First, this is true for the fundamental configuration that receives a weight $1$ in both cases. Second, since by Propp's theory (recalled in Section~\ref{sec:flipdef}) each shortest path from the fundamental state to any configuration is realized using positive flips only, it is enough to check that, in this specialization, each positive flip increases the weight of a configuration by a factor of $q$. 

\begin{figure}[htpb]
  \centering
  \includegraphics[width=1\linewidth]{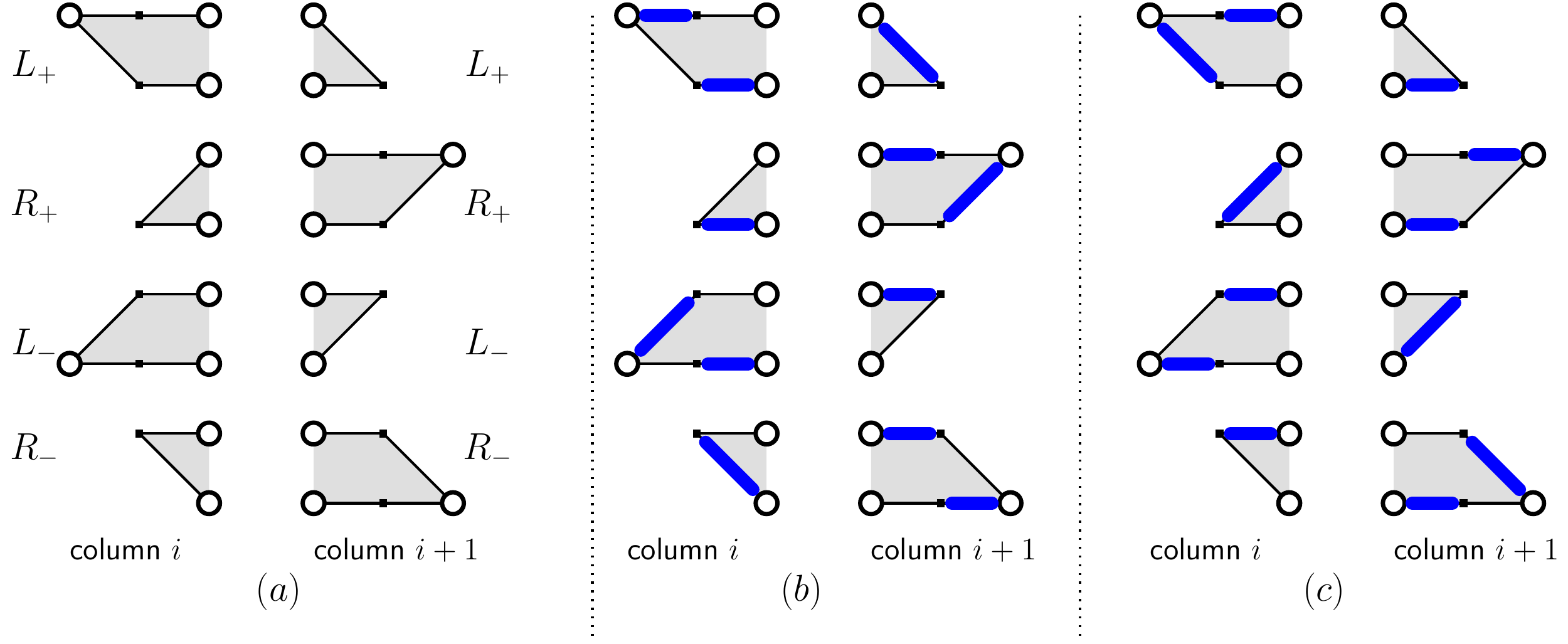}
  \caption{(a) The 16 possible face types of rail yard graphs are obtained by
    matching one of the half-face types on the left with one on the right. (b-c)
    The dimer configuration around half-faces before (b) and after (c) a
  positive flip.}
  \label{fig:flipWeights}
\end{figure}

 Consider an inner face $f$ in a rail yard graph. Then $f$ is made by the union of two half-faces as shown on Figure~\ref{fig:flipWeights}(a).
Each of these two half-faces is incident to a diagonal edge, one in column $i$, and one in column $i+1$, in the sense of Section~\ref{sec:enum}, for some $i\in[l..r-1]$.
Then, a case inspection (see Figure~\ref{fig:flipWeights}(b-c)) shows that the following is true: when performing a positive flip on $f$, the number of diagonal dimers on column $i$ increases (resp. decreases) by $1$ if $b_i=+$ (resp. $b_i=-$), and the number of diagonal dimers on column $i+1$ decreases (resp. increases) by $1$ if $b_{i+1}=+$ (resp. $b_{i+1}=-$).

Therefore this flip multiplies the weight~\eqref{eq:xweights} of the configuration by a factor of:
\begin{eqnarray}\label{eq:flipFactor}
x_i^{{b_i}}/x_{i+1}^{b_{i+1}},
\end{eqnarray}
where we identified $+,-$ with $+1,-1$ respectively.
But, in the specialization we are considering, we have $x_j^{b_j}=q^{-j}$ for all $j \in [l..r]$, so~\eqref{eq:flipFactor} is equal to $q$ and the proof is complete.
\end{proof}

\section{Fermionic operators}
\label{sec:ferm}

The purpose of this section is to establish
Theorem~\ref{thm:det_corr}. We start in Section~\ref{sec:fermrem} by
recalling the definitions and basic properties of fermionic
operators. In Section~\ref{sec:dimerop}, we show that these
operators can be used to construct \emph{constrained transfer
  matrices}, that is transfer matrices enumerating dimer
configurations containing a given subset of edges. We rewrite the
product of constrained transfer matrices in another convenient form in
Section~\ref{sec:schrheis}, and complete the proof of
Theorem~\ref{thm:det_corr} in
Section~\ref{sec:det_corr_proof}. Finally, we elucidate the connection
with Kasteleyn's theory in Section~\ref{sec:fermkast}.

\subsection{Reminders}
\label{sec:fermrem}

Recall that the fermionic Fock space $\mathcal{F}$, introduced at the
end of Section~\ref{sec:bosrem}, is the infinite dimensional Hilbert
space spanned by orthonormal basis vectors $|\mathbf{m}\rangle$ where
$\mathbf{m}$ runs over the set of all Maya diagrams. For
$k \in \mathbb{Z}+1/2$, we define the \emph{fermionic operators}
$\psi_k$ and $\psi^*_k$ (also called creation/annihilation
operators) through their action on a basis vector $|\mathbf{m}\rangle$
by
\begin{equation}
  \label{eq:fermopdef}
  \begin{split}
    \psi_k |\mathbf{m}\rangle &=
    \begin{cases}
      (-1)^{\# \{j>k, \mathbf{m}_j=\bullet\}}
      |\mathbf{m}^{(k)}\rangle & \text{if $\mathbf{m}_k=\circ$,}\\
      0 & \text{otherwise},
    \end{cases} \\
    \psi^*_k |\mathbf{m}\rangle &=
    \begin{cases}
      (-1)^{\# \{j>k, \mathbf{m}_j=\bullet\}}
      |\mathbf{m}^{(k)}\rangle & \text{if $\mathbf{m}_k=\bullet$,}\\
      0 & \text{otherwise},
    \end{cases}
  \end{split}
\end{equation}
where $\mathbf{m}^{(k)}$ is the Maya diagram obtained from
$\mathbf{m}$ by inverting the color of the marble on site $k$. Observe
that the operators $\psi_k$ and $\psi^*_k$ are adjoint to one another.
In particular, $\psi_k\psi^*_k$ (resp. $\psi^*_k\psi_k$) is the
orthogonal projector on the space spanned by Maya diagrams
$\mathbf{m}$ with $\mathbf{m}_k=\bullet$
(resp. $\mathbf{m}_k=\circ$). Fermionic operators obey the following
well-known canonical anticommutation relations:

\begin{prop}
\label{prop:anticom}
For any $k$ and $k'$ in $\mathbb{Z}+1/2$, we have
 \begin{equation}
  \{\psi_k,\psi_{k'}\}=0, \qquad
  \{\psi^*_k,\psi^*_{k'}\}=0, \qquad
  \{\psi_k,\psi^*_{k'}\}=\delta_{k,k'}.
 \end{equation}
 Here $\{a,b\}$ denotes the anticommutator of $a$ and $b$ :
 $\{a,b\}=ab+ba$.
\end{prop}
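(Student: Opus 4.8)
The plan is to verify the three relations directly on the orthonormal basis $\{|\mathbf{m}\rangle\}$ of $\mathcal{F}$, reducing to this case by linearity. Before anything else I would isolate the one piece of sign bookkeeping that drives the whole computation: since the exponent $\#\{j>k : \mathbf{m}_j=\bullet\}$ appearing in \eqref{eq:fermopdef} only involves sites strictly above $k$, it is insensitive to flipping the marble at site $k$ itself; on the other hand, if $k'<k$, then flipping the marble at site $k$ changes the parity of $\#\{j>k' : \mathbf{m}_j=\bullet\}$ by exactly one. Everything below is a formal consequence of these two observations.

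For $\{\psi_k,\psi_{k'}\}=0$: the case $k=k'$ is immediate, since $\psi_k$ is nonzero only on diagrams with $\mathbf{m}_k=\circ$ but outputs a diagram with $\mathbf{m}_k=\bullet$, so $\psi_k^2=0$. For $k\neq k'$, using the symmetry of the anticommutator I may assume $k>k'$. Applied to $|\mathbf{m}\rangle$, both $\psi_k\psi_{k'}$ and $\psi_{k'}\psi_k$ are nonzero under exactly the same condition $\mathbf{m}_k=\mathbf{m}_{k'}=\circ$, and in that case they produce the same basis vector (flipping two distinct sites commutes). Comparing signs: in $\psi_k\psi_{k'}|\mathbf{m}\rangle$ the outer operator $\psi_k$ acts on $\mathbf{m}^{(k')}$, but as $k>k'$ its exponent only sees sites $>k$, on which $\mathbf{m}^{(k')}$ agrees with $\mathbf{m}$, so the total sign is $(-1)^{\#\{j>k':\mathbf{m}_j=\bullet\}+\#\{j>k:\mathbf{m}_j=\bullet\}}$; in $\psi_{k'}\psi_k|\mathbf{m}\rangle$ the outer operator $\psi_{k'}$ acts on $\mathbf{m}^{(k)}$, whose marble at site $k>k'$ has been flipped from $\circ$ to $\bullet$, which by the parity observation adds $1$ to the exponent, making the total sign opposite. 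Hence $\psi_k\psi_{k'}=-\psi_{k'}\psi_k$. The relation $\{\psi^*_k,\psi^*_{k'}\}=0$ follows identically, or simply by adjunction since $\psi^*_k=(\psi_k)^*$ and $\{a,b\}^*=\{a^*,b^*\}$.

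For the mixed relation, I would treat $k\neq k'$ first: $\psi_k\psi^*_{k'}$ and $\psi^*_{k'}\psi_k$ are simultaneously nonzero on $|\mathbf{m}\rangle$ — precisely when $\mathbf{m}_{k'}=\bullet$ and $\mathbf{m}_k=\circ$ — produce the same basis vector, and the same parity bookkeeping (now carried out in both subcases $k<k'$ and $k>k'$) shows the two accumulated sign exponents differ by one, so $\{\psi_k,\psi^*_{k'}\}=0$. For $k=k'$, one checks, as already recorded in the text following \eqref{eq:fermopdef}, that $\psi_k\psi^*_k$ is the orthogonal projector onto the span of $\{\mathbf{m}:\mathbf{m}_k=\bullet\}$ and $\psi^*_k\psi_k$ the projector onto the span of $\{\mathbf{m}:\mathbf{m}_k=\circ\}$ — here the two sign factors multiply to $+1$ because the inner operator acts on a diagram differing from $|\mathbf{m}\rangle$ only at site $k$, above which nothing changes — and since every Maya diagram has $\mathbf{m}_k$ equal to $\bullet$ or $\circ$, these complementary projectors sum to the identity, giving $\{\psi_k,\psi^*_k\}=1$.

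The only delicate point is the sign accounting, and the parity observation of the first paragraph reduces it to a one-line check in each case, so there is no real obstacle; one could alternatively just cite \cite[Chapter~14]{Kac}, but the direct verification is short enough to include.
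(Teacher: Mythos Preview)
Your proof is correct and is precisely the direct basis-by-basis verification that the paper has in mind: the paper's own proof consists of the single word ``Easy.'' Your sign bookkeeping is accurate in every case, so there is nothing to add.
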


\begin{proof}
 Easy.
\end{proof}

Define the fermionic generating functions
\begin{equation}
  \label{eq:psigen}
 \psi(z)=\sum\limits_{k\in\mathbb{Z}+\frac{1}{2}}{z^k\psi_k},\qquad
 \psi^*(z)=\sum\limits_{k\in\mathbb{Z}+\frac{1}{2}}{z^{-k}\psi^*_k}.
\end{equation}
Proposition~\ref{prop:anticom} translates into
\begin{equation}
  \label{eq:anticomgen}
  \{\psi(z),\psi(w)\}=0, \qquad
  \{\psi^*(z),\psi^*(w)\}=0, \qquad
  \{\psi(z),\psi^*(w)\}=\delta(z,w)
\end{equation}
where $\delta(z,w)=\sum_{k\in\mathbb{Z}+\frac{1}{2}} (z/w)^k$ is the
formal Dirac delta function. It is straightforward to check that
\begin{equation}
  \label{eq:psipsi}
  \begin{split}
    \langle\emptyset|\psi(z)\psi^*(w)|\emptyset\rangle
    &=\sum_{\substack{k \in \mathbb{Z}+1/2 \\ k<0}}{
      \left(\frac{z}{w}\right)^k} =\frac{\sqrt{zw}}{z-w}
    \quad \text{for $|z|>|w|$,}\\
    \langle\emptyset|\psi^*(w)\psi(z)|\emptyset\rangle
    &=\sum_{\substack{k \in \mathbb{Z}+1/2 \\ k>0}}{
      \left(\frac{z}{w}\right)^k} =-\frac{\sqrt{zw}}{z-w}
    \quad \text{for $|w|>|z|$.}
  \end{split}
\end{equation}
Here the leftmost equal signs correspond to formal identities, but the
rightmost equal signs require to treat $z$ and $w$ as complex
variables. Let us also mention a lesser-known fact about the action
of the involution $\omega$ on the fermionic operators (recall that
$\omega$ is the involution that maps a charged partition $(\lambda,c)$
to $(\lambda',-c)$, hence can be seen as acting on $\mathcal{F}$).

\begin{prop}
  \label{prop:fermomega}
  For $k \in \mathbb{Z}+1/2$, we have
  \begin{equation}
    \label{eq:psiconj}
    \omega \psi^*_k \omega = (-1)^{C+k+1/2} \psi_{-k}, \qquad \text{i.e.}
    \qquad \omega \psi^*(z) \omega = (-1)^{C+1/2} \psi(-z)
  \end{equation}
  where $C$ is the charge operator (acting on $\mathcal{F}_c$ as the
  multiplication by $c$).
\end{prop}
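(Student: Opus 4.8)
The plan is to verify \eqref{eq:psiconj} at the level of matrix elements in the orthonormal basis $\{|\mathbf{m}\rangle\}$ of $\mathcal{F}$. Recall that $\omega$ acts on Maya diagrams by $(\omega\mathbf{m})_j=\overline{\mathbf{m}_{-j}}$ (reflection across $0$ followed by exchange of the two colours, denoted by the bar), and that $\omega$ carries no sign on basis vectors, so that $\omega$ is a self-adjoint involution of $\mathcal{F}$ with $\omega|\mathbf{m}\rangle=|\omega\mathbf{m}\rangle$. Note also that $\omega\big((\omega\mathbf{m})^{(k)}\big)=\mathbf{m}^{(-k)}$, since flipping the colour at site $k$ in $\omega\mathbf{m}$ and then applying $\omega$ amounts to flipping the colour at site $-k$ in $\mathbf{m}$.

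First I would unwind, using \eqref{eq:fermopdef}, the identity $\langle\mathbf{n}|\omega\psi^*_k\omega|\mathbf{m}\rangle=\langle\omega\mathbf{n}|\psi^*_k|\omega\mathbf{m}\rangle$: this matrix element is nonzero only when $(\omega\mathbf{m})_k=\bullet$, i.e.\ $\mathbf{m}_{-k}=\circ$, and when $\omega\mathbf{n}=(\omega\mathbf{m})^{(k)}$, i.e.\ $\mathbf{n}=\mathbf{m}^{(-k)}$; in that case its value is $(-1)^{\#\{j>k:\,(\omega\mathbf{m})_j=\bullet\}}=(-1)^{\#\{i<-k:\,\mathbf{m}_i=\circ\}}$. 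On the other hand $\langle\mathbf{n}|\psi_{-k}|\mathbf{m}\rangle$ is nonzero under exactly the same two conditions, with value $(-1)^{\#\{j>-k:\,\mathbf{m}_j=\bullet\}}$. Hence the two families of matrix elements vanish simultaneously, and when nonzero their ratio is $(-1)^{N}$ with $N=\#\{i<-k:\,\mathbf{m}_i=\circ\}+\#\{j>-k:\,\mathbf{m}_j=\bullet\}$, which depends only on $\mathbf{m}$ and $k$.

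It then remains to identify this sign, for which I would establish the following parity identity: for any Maya diagram $\mathbf{m}$ of charge $c$ and any $p\in\mathbb{Z}+1/2$ with $\mathbf{m}_p=\circ$,
\begin{equation*}
  \#\{i<p:\,\mathbf{m}_i=\circ\}-\#\{j>p:\,\mathbf{m}_j=\bullet\}=\big(p-\tfrac12\big)-c .
\end{equation*}
This is obtained by rewriting the two sums defining $c$ (which use the cut at $0$) in terms of the cut at $p$: the half-integers strictly between $0$ and $p$ contribute, irrespective of their colours, their total number $|p|-\tfrac12$, while the site $p$ itself is handled separately using the hypothesis $\mathbf{m}_p=\circ$; a one-line case distinction according to the sign of $p$ gives the formula. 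Applying it with $p=-k$ yields $N\equiv(-k-\tfrac12)-c\equiv c+k+\tfrac12\pmod 2$ (as $2k$ is odd), hence $\langle\mathbf{n}|\omega\psi^*_k\omega|\mathbf{m}\rangle=(-1)^{c+k+1/2}\langle\mathbf{n}|\psi_{-k}|\mathbf{m}\rangle$ for all $\mathbf{m},\mathbf{n}$, which is \eqref{eq:psiconj} with the understanding that $(-1)^C$ acts on the state to its right. The generating-function form then follows by summing over $k\in\mathbb{Z}+1/2$, once a consistent value of $(-1)^{1/2}$ is fixed in $\psi(-z)$ and in $(-1)^{C+1/2}$.

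I expect the only genuine difficulty to be the sign bookkeeping: one must carefully transport the sign from \eqref{eq:fermopdef} through the conjugation by $\omega$, and --- since $\psi_{-k}$ shifts the charge by $1$, so that $(-1)^C$ and $\psi_{-k}$ anticommute --- be precise about the meaning of $(-1)^C$ in the statement (namely, it is the charge of the input state). Everything else is a routine unravelling of the definitions of $\omega$, $\psi_k$, $\psi^*_k$ and of the charge of a Maya diagram; evaluating both sides on $|\emptyset\rangle$ provides a convenient sanity check for the overall sign.
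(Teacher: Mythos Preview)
Your proof is correct and follows essentially the same route as the paper's: the paper's (one-line) proof simply records the counting identity
\[
  \#\{j>k':\mathbf{m}_j=\bullet\}-\#\{j<k':\mathbf{m}_j=\circ\}=c-k'
\]
for \emph{integer} $k'$, which is exactly your parity identity with $p=k'+\tfrac12$ (your hypothesis $\mathbf{m}_p=\circ$ compensates for shifting the cut from the integer $k'$ to the half-integer $p$). Your careful discussion of the ordering ambiguity in $(-1)^C\psi_{-k}$ is a welcome clarification that the paper leaves implicit.
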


\begin{proof}
  Follows from the fact that, for any \emph{integer} $k'$ and any Maya
  diagram $\mathbf{m}$ of charge $c$, we have
  \begin{equation}
    \label{eq:def_sl}
    \#\{j>k', \mathbf{m}_j=\bullet\} -
    \#\{j<k', \mathbf{m}_j=\circ\}
    = c-k'.
    \qedhere
  \end{equation}
\end{proof}

Last but not least, we have the following commutation relations
between bosonic and fermionic operators.

\begin{prop}
\label{prop:com_gamma_psi}
Given two formal variables $x,z$ we have
\begin{align}
 \Gamma_{L+}(x)\psi(z) &= \frac{1}{1-xz}\psi(z)\Gamma_{L+}(x) 
\label{eq:cgp1}\\
 \Gamma_{R+}(x)\psi(z) &= (1+xz)\psi(z)\Gamma_{R+}(x) 
\label{eq:cgp2}\\
 \Gamma_{L+}(x)\psi^*(z) &= (1-xz)\psi^*(z)\Gamma_{L+}(x) 
\label{eq:cgp5}\\
 \Gamma_{R+}(x)\psi^*(z) &= \frac{1}{1+xz}\psi^*(z)\Gamma_{R+}(x) 
\label{eq:cgp6}\\
 \psi(z)\Gamma_{L-}(x) &= (1-x/z)\Gamma_{L-}(x)\psi(z)
\label{eq:cgp3}\\
 \psi(z)\Gamma_{R-}(x) &= \frac{1}{1+x/z}\Gamma_{R-}(x)\psi(z)
\label{eq:cgp4}\\
 \psi^*(z)\Gamma_{L-}(x) &= \frac{1}{1-x/z}\Gamma_{L-}(x)\psi^*(z)
\label{eq:cgp7}\\
 \psi^*(z)\Gamma_{R-}(x) &= (1+x/z)\Gamma_{R-}(x)\psi^*(z)
\label{eq:cgp8}
\end{align}
The first four (resp. last four) formal identities correspond to
actual converging series when $|z|<x^{-1}$ (resp. $|z| > x$).
\end{prop}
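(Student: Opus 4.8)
To establish Proposition~\ref{prop:com_gamma_psi}, the plan is to prove two of the eight identities — namely \eqref{eq:cgp1} and \eqref{eq:cgp5} — by a direct computation in the Maya-diagram basis, and to recover the other six for free from two symmetries of the setup. Indeed, under the Klein four-group generated by the adjunction operation and by conjugation with the involution $\omega$, the eight relations break into exactly two orbits: \eqref{eq:cgp1} together with \eqref{eq:cgp4}, \eqref{eq:cgp6}, \eqref{eq:cgp7}, and \eqref{eq:cgp5} together with \eqref{eq:cgp2}, \eqref{eq:cgp3}, \eqref{eq:cgp8}. Hence two base cases suffice.

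\textbf{The two symmetry reductions.} For adjunction, recall $\psi_k^*=\psi_k^{\dagger}$ and $\Gamma_{L-}(x)=\Gamma_{L+}(x)^{\dagger}$ (for real $x$), together with $\psi(z)^{\dagger}=\psi^*(1/\bar z)$ and $\psi^*(z)^{\dagger}=\psi(1/\bar z)$; then the adjoint of \eqref{eq:cgp1} is \eqref{eq:cgp7} and the adjoint of \eqref{eq:cgp5} is \eqref{eq:cgp3}, after relabelling the variable. For the $\omega$-conjugation, one has $\omega\,\Gamma_{L\pm}(x)\,\omega=\Gamma_{R\pm}(x)$ (immediate from \eqref{eq:gdef} and $\omega|\lambda\rangle=|\lambda'\rangle$), and Proposition~\ref{prop:fermomega} gives $\omega\,\psi^*(z)\,\omega=(-1)^{C+1/2}\psi(-z)$, hence $\omega\,\psi(z)\,\omega=(-1)^{C+1/2}\psi^*(-z)$ up to an overall sign. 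Conjugating each of the four $\Gamma_{L\pm}$ relations by $\omega$ and then substituting $z\mapsto-z$ yields the four $\Gamma_{R\pm}$ relations: the sign operator $(-1)^{C+1/2}$ commutes with the charge-preserving $\Gamma_{R\pm}$ and cancels from both sides, while the substitution turns each prefactor $1\mp xz$ (resp.\ $1\mp x/z$) into $1\pm xz$ (resp.\ $1\pm x/z$), exactly as in the statement. Since all eight relations are a priori identities of formal Laurent series in $(x,z)$, the substitutions $(x,z)\mapsto(x,1/\bar z)$ and $z\mapsto-z$ used along the way are innocuous.

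\textbf{The two base cases.} Here I would compute, for fixed Maya diagrams $\mathbf m,\mathbf n$, the matrix element $\langle\mathbf n|\,\cdot\,|\mathbf m\rangle$ of both sides. Unwinding \eqref{eq:gdef} via $k_i=\lambda_i+c-i+1/2$, the operator $\Gamma_{L+}(x)$ sends $|\mathbf m\rangle$ to the sum, with weight $x$ to the total leftward displacement, over all Maya diagrams obtained by sliding each black marble weakly towards $-\infty$ so that no marble passes the original position of the black marble immediately to its left; and $\psi(z)$ (resp.\ $\psi^*(z)$) inserts (resp.\ removes) one black marble at a site $k$ with weight $z^{k}$ (resp.\ $z^{-k}$) and sign $(-1)^{\#\{j>k:\,\mathbf m_j=\bullet\}}$. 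Comparing the two orders of operations ``insert/remove, then slide'' versus ``slide, then insert/remove'', the two sides of \eqref{eq:cgp1} are matched by a weighted bijection in which the newly inserted black marble is allowed to be carried further left by the sliding step; summing this surplus against the weight $z^{k}$ produces the formal geometric series $\sum_{m\ge0}(xz)^m=(1-xz)^{-1}$. For \eqref{eq:cgp5} the analogous (signed) comparison is the ``finite'' one: removing a black marble lifts the obstruction it imposed on its left neighbour, and the potential higher-order contributions cancel in sign-reversing pairs, leaving the polynomial prefactor $1-xz$. Reconciling the different sliding constraints on the two sides requires a bumping-type argument.

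\textbf{Convergence, and an alternative.} The last assertion then follows at once: each $\Gamma_{L\pm}(x)$ acts by finite sums on every basis vector, so the only convergence to check is that of the scalar series $(1\mp xz)^{\pm1}$ and $(1\mp x/z)^{\pm1}$, valid for $|z|<x^{-1}$ and $|z|>x$ respectively. The step I expect to be the main obstacle is the fermionic sign bookkeeping in the two base cases — checking that the bijection matching the two orders of operations is sign-preserving once $(-1)^{\#\{j>k:\,\mathbf m_j=\bullet\}}$ is tracked through an arbitrary $\Gamma_{L+}$-move, and, for \eqref{eq:cgp5}, pinning down the sign-reversing involution responsible for the cancellations; I would isolate this as a sign lemma and dispatch it first. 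For readers who prefer an algebraic argument, one can instead invoke the boson--fermion correspondence to express $\psi(z)$ and $\psi^*(z)$ as products of $\Gamma_{L-}$, $\Gamma_{R+}$ and charge-shift operators and then derive \eqref{eq:cgp1}--\eqref{eq:cgp8} by repeated use of Proposition~\ref{prop:commut} and \eqref{eq:gaminv}; equivalently, introducing the bosonic current modes $\alpha_n$ one has $[\alpha_n,\psi(z)]=z^n\psi(z)$ and $[\alpha_n,\psi^*(z)]=-z^n\psi^*(z)$ (immediate from Proposition~\ref{prop:anticom}) and concludes via $e^{A}\psi(z)e^{-A}=e^{\mathrm{ad}_A}\psi(z)$; both routes are standard — see the references at the start of Section~\ref{sec:bos} — but since they presuppose structure not set up here, I would only mention them in a remark.
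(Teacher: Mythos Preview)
Your plan is correct and follows essentially the same route as the paper: prove a small number of the eight identities by direct inspection in the Maya-diagram basis, then propagate via adjunction and $\omega$-conjugation. A few differences worth noting. First, the paper takes \eqref{eq:cgp2} (and, by a one-line variant, \eqref{eq:cgp4}) as its base case rather than your \eqref{eq:cgp1}/\eqref{eq:cgp5}; the advantage of working with $\Gamma_{R+}$ is that black marbles move by at most one step, so the relation $\Gamma_{R+}(x)\psi_k=(\psi_k+x\psi_{k-1})\Gamma_{R+}(x)$ is checked by a short two-case analysis (white vs.\ black at position $k$) using exactly the dimer-localization bijections of Section~\ref{sec:dimerop}, with no bumping and with the sign bookkeeping essentially automatic. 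Your choice of $\Gamma_{L+}$, where black marbles may slide arbitrarily far, is what forces the geometric-series/bumping argument and the extra sign lemma you flag. Second, the paper observes (just before the appendix proof) a third symmetry you do not use, namely the inverse relation $\Gamma_{L\pm}(x)\Gamma_{R\pm}(-x)=1$ from \eqref{eq:gaminv}: combined with adjunction and $\omega$-conjugation this collapses the two orbits into one, so a single base case suffices. Either way the argument goes through; the paper's choices simply minimise the combinatorial work.
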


We give in Appendix~\ref{sec:fermcom} a combinatorial proof of these
identities (note that it is sufficient to establish only one of them,
the others follow by taking duals, inverses and conjugates by
$\omega$). For algebraic derivations, see for instance the references
given at the beginning of Section~\ref{sec:bos}.

\subsection{Constrained transfer matrices}
\label{sec:dimerop}

The fermionic operators can be used to enumerate constrained dimer
configurations. A first natural idea, already used in \cite{OR1},
consists in inserting some orthogonal projectors $\psi_k \psi_k^*$ or
$\psi^*_k \psi_k$ (with various $k$'s) within the product of bosonic
operators \eqref{eq:Zpurvert} forming the partition function, which
has the effect of forcing black or white marbles to be present at
given positions. However, this does not fully determine the positions
of the dimers (there are ambiguities for the columns containing both
horizontal and diagonal edges). Remarkably, for an arbitrary rail yard
graph and an arbitrary finite set $E$ of edges, there is a suitable
way of inserting fermionic operators which precisely forces each edge
of $E$ to be covered by a dimer.

We first introduce convenient notations. Recall that writing
$(\alpha,\beta)$ for an edge implies that its endpoints $\alpha$ and
$\beta$ are such that $\alpha^{\mathrm{x}}$ is even and
$\beta^{\mathrm{x}}$ is odd. Any finite set $E$ of edges of a rail
yard graph can be decomposed ``column by column'', hence written in
the form
\begin{equation}
  \label{eq:Edec}
  E = \bigcup_{i\in\mathbb{Z}} \{(\alpha_{i,1},\beta_{i,1}),\ldots,
  (\alpha_{i,m_i},\beta_{i,m_i}),(\gamma_{i,1},\delta_{i,1}),\ldots,
  (\gamma_{i,m'_i},\delta_{i,m'_i})\}
\end{equation}
where $\alpha_{i,j}^{\mathrm{x}}=\gamma_{i,j}^{\mathrm{x}}=2i$,
$\beta_{i,j}^{\mathrm{x}}=2i-1$ and
$\delta_{i,j}^{\mathrm{x}}=2i+1$. Here $m_i$ (resp.\ $m'_i$) is the
number of edges of $E$ connecting vertices with abscissas $2i-1$ and $2i$ (resp.\ $2i$ and
$2i+1$), and is zero except for finitely many $i$.

\begin{thm}[Constrained transfer matrix decomposition]
  \label{thm:fermdim}
  Let $E$ be an arbitrary finite subset of edges of the graph
  $G=\RYG(\ell,r,\underline{a},\underline{b})$, which we decompose as
  in \eqref{eq:Edec}, and let $n_i$ ($i=\ell..r$) be the number
  of diagonal edges of $E$ with an endpoint of abscissa $2i$. The sum
  of the weights \eqref{eq:xweights} of all pure dimer configurations
  containing $E$ reads
  \begin{equation}
    \label{eq:fermdim}
    W(G;\underline{x},E) = \langle \emptyset | T_\ell T_{\ell+1} \cdots T_r
    | \emptyset \rangle
  \end{equation}
  where, for all $i\in[\ell..r]$, the \emph{constrained transfer
    matrix} $T_i$ is given by
  \begin{equation}
    \label{eq:fermdimel}
    T_i = (-1)^{k_i} x_i^{n_i}
      \left(\prod_{j=1}^{m_i} \psi_{\beta_{i,j}^{\mathrm{y}}}^* \right)
      \left(\prod_{j=1}^{m_i} \psi_{\alpha_{i,j}^{\mathrm{y}}} \right)
      \left(\prod_{j=1}^{m'_i} \psi_{\gamma_{i,j}^{\mathrm{y}}} \right)
      \Gamma_{R,b_i}(x_i)
      \left(\prod_{j=1}^{m'_i} \psi_{\delta_{i,j}^{\mathrm{y}}}^* \right)
  \end{equation}
  with $k_i=m_i(m_i-1)/2  + m'_i(m'_i-1)/2$ if $a_i=R$, and
  \begin{equation}
    \label{eq:fermdimelbis}
    T_i = (-1)^{k_i} x_i^{n_i}
      \left(\prod_{j=1}^{m_i} \psi_{\beta_{i,j}^{\mathrm{y}}}^* \right)
      \Gamma_{L,b_i}(x_i)
      \left(\prod_{j=1}^{m_i} \psi_{\alpha_{i,j}^{\mathrm{y}}} \right)
      \left(\prod_{j=1}^{m'_i} \psi_{\gamma_{i,j}^{\mathrm{y}}} \right)
      \left(\prod_{j=1}^{m'_i} \psi_{\delta_{i,j}^{\mathrm{y}}}^* \right)
  \end{equation}
  with $k_i=m_i(m_i-1)/2 + m'_i(m'_i-1)/2 + n_i$ if $a_i=L$.  More
  generally, the sum of the weights of all admissible dimer coverings
  with left boundary state $\mathbf{l}$ and right boundary state
  $\mathbf{r}$ containing $E$ reads
  \begin{equation}
    \label{eq:fermdimgen}
    W(G,\mathbf{l},\mathbf{r};\underline{x},E) = \langle \mathbf{l} |
    T_\ell T_{\ell+1} \cdots T_r
    | \mathbf{r} \rangle.
  \end{equation}
\end{thm}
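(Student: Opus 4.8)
The plan is to follow the same two‑step scheme as in the proof of Proposition~\ref{prop:RYGtransfer}: first reduce to elementary columns, then establish the column formula by a careful bookkeeping of fermionic signs. Since \eqref{eq:fermdim} is the special case $\mathbf{l}=\mathbf{r}=\emptyset$ of \eqref{eq:fermdimgen}, it suffices to prove the latter. Write $G=G_\ell G_{\ell+1}\cdots G_r$ as the concatenation of its elementary columns, and let $E_i$ be the set of edges of $E$ having their even endpoint at abscissa $2i$; this is exactly the $i$‑th group of edges in the decomposition \eqref{eq:Edec}, and $n_i$ is its number of diagonal edges. Any admissible covering $C$ of $G$ with boundary states $\mathbf{l},\mathbf{r}$ decomposes uniquely into sequentially compatible elementary coverings $C_i$ of $G_i$, with $\mathbf{r}(C_i)=\mathbf{l}(C_{i+1})=:\mathbf{m}_{i+1}$ (and $\mathbf{m}_\ell=\mathbf{l}$, $\mathbf{m}_{r+1}=\mathbf{r}$); moreover $C\supseteq E$ if and only if $C_i\supseteq E_i$ for all $i$, and $w(C)=\prod_i w(C_i)$ since the diagonal dimers in column $i$ of $C$ are exactly those of $C_i$. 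Summing over the intermediate states $\mathbf{m}_i$ and inserting the resolution of the identity $\sum_{\mathbf{m}}|\mathbf{m}\rangle\langle\mathbf{m}|=\Id$ between consecutive factors $T_i$, the statement reduces to the elementary identity $W(G_i,\mathbf{l},\mathbf{r};x_i,E_i)=\langle\mathbf{l}|T_i|\mathbf{r}\rangle$, for every elementary column $G_i$ and all Maya diagrams $\mathbf{l},\mathbf{r}$.

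I would prove this elementary identity for $a_i=R$ in detail, the case $a_i=L$ being entirely analogous with the two sides of the column interchanged; this interchange, combined with the facts that the diagonal edges of an $L$‑column lie among the $\beta$‑edges and that the creation operator of such an edge acquires a sign when commuted across $\Gamma_{L,b_i}(x_i)$ (Proposition~\ref{prop:com_gamma_psi}), is what produces the extra summand $n_i$ in $k_i$. For $a_i=R$ one has $\alpha_{i,j}^{\mathrm y}=\beta_{i,j}^{\mathrm y}$ (the $\beta$‑edges are horizontal), while each $\delta$‑edge is horizontal ($\delta_{i,j}^{\mathrm y}=\gamma_{i,j}^{\mathrm y}$) or diagonal ($|\delta_{i,j}^{\mathrm y}-\gamma_{i,j}^{\mathrm y}|=1$); since $T_i$ is unchanged under a simultaneous permutation of the $\beta$‑ and $\alpha$‑labels, resp.\ of the $\gamma$‑ and $\delta$‑labels, I may index both families by decreasing ordinate. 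Using only the canonical anticommutation relations of Proposition~\ref{prop:anticom}, I would evaluate $\langle\mathbf{l}|T_i|\mathbf{r}\rangle$ from the outside in: the blocks $\psi^*_{\beta_{i,j}^{\mathrm y}}\cdots\psi_{\alpha_{i,j}^{\mathrm y}}$ reorder and collapse into orthogonal projectors imposing that the left endpoints of the $\beta$‑edges of $E_i$ be covered, at the cost of the sign $(-1)^{\binom{m_i}{2}}$; the operators $\prod_j\psi_{\gamma_{i,j}^{\mathrm y}}$ and $\prod_j\psi^*_{\delta_{i,j}^{\mathrm y}}$ replace $\mathbf{l},\mathbf{r}$ by the Maya diagrams $\mathbf{l}'',\mathbf{r}'$ obtained by vacating the sites $\gamma_{i,j}^{\mathrm y}$ and $\delta_{i,j}^{\mathrm y}$, at the cost of $(-1)^{\binom{m'_i}{2}}$ times a product of Jordan–Wigner string signs. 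Since $k_i=\binom{m_i}{2}+\binom{m'_i}{2}$, the prefactor $(-1)^{k_i}$ cancels these two reordering signs and one is left with $\langle\mathbf{l}|T_i|\mathbf{r}\rangle=(-1)^{S}\,x_i^{n_i}\,\langle\mathbf{l}''|\Gamma_{R,b_i}(x_i)|\mathbf{r}'\rangle$, where $S$ is the total exponent of the remaining Jordan–Wigner strings.

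It then remains to identify $x_i^{n_i}\langle\mathbf{l}''|\Gamma_{R,b_i}(x_i)|\mathbf{r}'\rangle$ with $W(G_i,\mathbf{l},\mathbf{r};x_i,E_i)$, and to check that $(-1)^S=1$ whenever this quantity does not vanish. For the first point I would use the description of elementary coverings from the proof of Proposition~\ref{prop:RYGtransfer} in terms of marbles jumping (upwards or downwards, according to $b_i$) by $0$ or $1$: vacating the sites $\gamma_{i,j}^{\mathrm y}$ and $\delta_{i,j}^{\mathrm y}$ amounts exactly to pre‑placing the forced $\delta$‑edges of $E_i$, so $\langle\mathbf{l}''|\Gamma_{R,b_i}(x_i)|\mathbf{r}'\rangle$ enumerates, with weight $x_i$ per \emph{non‑forced} diagonal dimer, the unique (if any) elementary covering of $G_i$ compatible with $\mathbf{l},\mathbf{r}$ and containing $E_i$; the explicit factor $x_i^{n_i}$ then restores the weight of the forced diagonal dimers, which $\Gamma_{R,b_i}$ no longer records. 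For the second point, the key observation is that for each forced $\delta$‑edge the number of black marbles of $\mathbf{l}$ strictly above $\gamma_{i,j}^{\mathrm y}$ equals the number of black marbles of $\mathbf{r}$ strictly above $\delta_{i,j}^{\mathrm y}$: indeed, on the support of the constrained transfer matrix the marble‑jumping bijection attached to $\Gamma_{R,b_i}$ restricts to a bijection between the black marbles of $\mathbf{l}$ above $\gamma_{i,j}^{\mathrm y}$ and those of $\mathbf{r}$ above its image $\delta_{i,j}^{\mathrm y}$. The Jordan–Wigner strings therefore pair up and $S$ is even.

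I expect this last sign analysis to be the main obstacle. The Jordan–Wigner strings genuinely depend on the intermediate Maya diagrams, so one must verify that, once the purely combinatorial reordering sign $(-1)^{k_i}$ has been isolated, everything that remains telescopes to $+1$ exactly on the support of the constrained transfer matrix; carrying this out uniformly over the four column types and over the three kinds of forced edges (horizontal to the left, horizontal to the right, diagonal) — in particular getting the asymmetry between the $R$‑ and $L$‑placements of $\Gamma$ and the resulting $+n_i$ in $k_i$ exactly right — is where the bulk of the work lies.
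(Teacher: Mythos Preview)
Your approach mirrors the paper's: reduce to elementary columns by concatenation, then for $a_i=R$ interpret the $\psi^*_\beta\psi_\alpha$ block as projectors forcing the horizontal left edges, and the $\psi_\gamma\cdots\Gamma\cdots\psi^*_\delta$ block via the ``particle hopping'' bijection that removes the forced right edges by flipping marbles at $\gamma_j^{\mathrm y}$ and $\delta_j^{\mathrm y}$. Your sign argument (Jordan--Wigner strings pair up because the marble-jumping bijection attached to $\Gamma_{R,b_i}$ matches black marbles above $\gamma_j^{\mathrm y}$ in $\mathbf{l}$ with those above $\delta_j^{\mathrm y}$ in $\mathbf{r}$) is exactly what the paper sweeps under ``it is easily seen that the signs induced by the $\psi/\psi^*$ all cancel out''.

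There is, however, a gap in your treatment of the $L$ case. You invoke Proposition~\ref{prop:com_gamma_psi} to explain the extra $(-1)^{n_i}$, but that proposition gives commutation relations between $\Gamma$'s and the generating series $\psi(z),\psi^*(z)$, not individual modes; commuting a single $\psi_k$ across $\Gamma_{L,b_i}(x_i)$ produces an infinite linear combination, not a sign. The paper does not attempt a direct mirror-image analysis. Instead it derives the $L$ formula from the already-established $R$ formula by taking the dual (vertical reflection) and then conjugating by the involution $\omega$ (horizontal reflection and marble-color swap), using Proposition~\ref{prop:fermomega}: the relation $\omega\psi^*_k\omega=(-1)^{C+k+1/2}\psi_{-k}$ produces matching signs for the two fermions of a horizontal edge (whose ordinates coincide) but mismatched signs for a diagonal edge (whose endpoints differ by one unit in ordinate), yielding exactly the extra $(-1)^{n_i}$. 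A direct column-by-column argument for the $L$ case can be made to work, but it requires a fresh parity count of the Jordan--Wigner strings, not an appeal to Proposition~\ref{prop:com_gamma_psi}.
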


Observe that we recover \eqref{eq:Zpurvert} and
\eqref{eq:Zvert} when $E$ is empty. The order in which we take the
products of fermionic operators in \eqref{eq:fermdimel} and
\eqref{eq:fermdimelbis} is irrelevant, as long as we take the same
order for both products from $1$ to $m_i$, and for both products from
$1$ to $m'_i$ (otherwise, we might get a wrong sign).

\begin{proof}[Proof of Theorem~\ref{thm:fermdim}]
  It is sufficient to prove \eqref{eq:fermdimgen} in the case of an
  elementary RYG, i.e.\ to prove that $T_i$ is indeed the wanted
  constrained transfer matrix. The general case immediately follows by
  concatenation (i.e.\ by the transfer matrix method), as done in the
  proof of Proposition~\ref{prop:RYGtransfer}.

  Let $G=\RYG(i,i,a_i,b_i)$ be an elementary RYG,
  $\mathbf{l},\mathbf{r}$ two boundary states, and $E$ a finite subset
  of edges of $G$. The general decomposition \eqref{eq:Edec} reads
  here simply
  \begin{equation}
    \label{eq:Edecel}
    E = \{(\alpha_1,\beta_1),\ldots,
    (\alpha_m,\beta_m),(\gamma_1,\delta_1),\ldots,
    (\gamma_{m'},\delta_{m'})\}.
  \end{equation}

  \begin{figure}[t]
    \centering
    \begin{center}
      \includegraphics[width=0.8\textwidth]{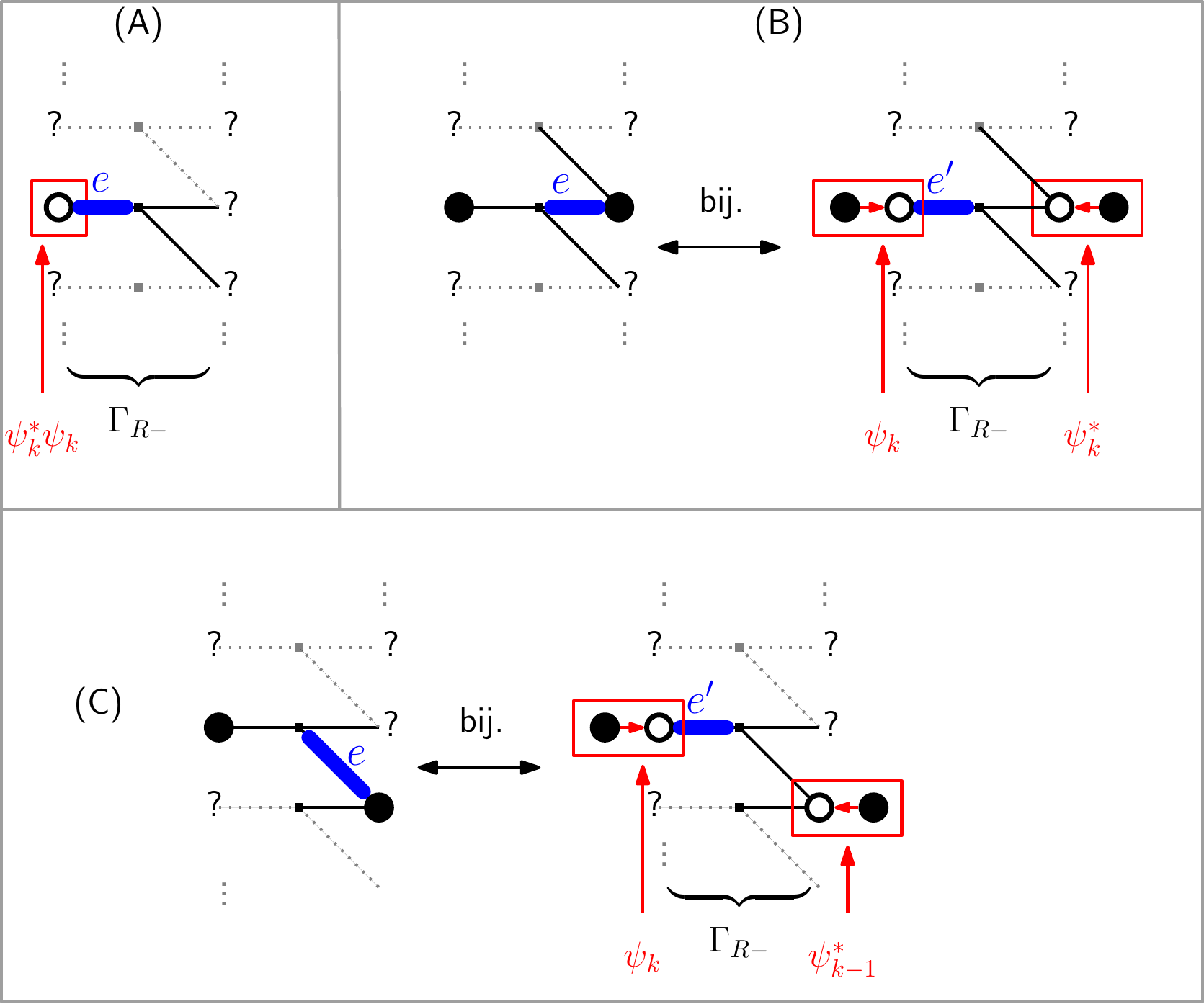}
    \end{center}
    \caption{Localization of dimers in the case of the elementary
      graph or type $R-$ (the case of $R+$ is similar). Marbles
      (resp. edges) whose status is not fixed by the discussion are
      represented with question marks (resp. dotted lines).  (A) There
      is a horizontal dimer in the left column if and only if the
      left odd vertex is occupied by a white marble. This marble is
      localized by applying the operator $\psi_k^*\psi_k$.  (B)
      Configurations with a horizontal dimer in the right column are
      such that both marbles on this level are black (although this
      condition is not sufficient). They are in bijection with
      configurations with a dimer in the left column such that both
      marbles on this level are white. The operator $\psi^*_k$
      (resp. $\psi_k$) inserted on the right (resp. left) has the
      double effect of switching the color of marbles on the $k$-th
      level and of forcing the colors of these marbles. (C) The case
      of a diagonal dimer.  }
    \label{fig:dimerOperators}
  \end{figure}

  Let us first assume that $a_i=R$, we then need to check that
  \begin{equation}
    \label{eq:fermdimgenel}
    W(G,\mathbf{l},\mathbf{r};x_i,E) = \langle \mathbf{l} |
    T_i | \mathbf{r} \rangle.
  \end{equation}
  with $T_i$ given by \eqref{eq:fermdimel}. This is immediate in the
  case $m'=0$: indeed the presence of a dimer on the edge
  $(\alpha_j,\beta_j)$ (which is necessarily horizontal) is tantamount
  to having a white marble $\circ$ at position
  $\alpha_j^{\mathrm{y}}=\beta_j^{\mathrm{y}}$ in $\mathbf{l}$, see
  Figure~\ref{fig:dimerOperators}(A). This can be achieved at the
  level of transfer matrices by multiplying the unconstrained transfer
  matrix $\Gamma_{a_i,b_i}(x_i)$ on the left by the projectors
  $\psi^*_k \psi_k$ with $k=\beta_j^{\mathrm{y}}$, $j=1..m$, and we
  get \eqref{eq:fermdimel} upon anticommuting all $\psi^*$'s to the
  left.

  The case $m'\neq 0$ is slightly more involved and requires the
  introduction of suitable ``particle hopping'' operators. Recall that the notation
  $\mathbf{m}^{(k)}$ denotes the Maya diagram obtained from
  $\mathbf{m}$ by inverting the color of the marble on site $k$. Let us
  consider an edge $e_j=(\gamma_j,\delta_j)$: having this edge covered
  by a dimer implies that
  $\mathbf{l}_{\gamma_j^{\mathrm{y}}}=\mathbf{r}_{\delta_j^{\mathrm{y}}}=\bullet$,
  but the converse is not true. However there is a bijection between,
  on the one hand, admissible dimer configurations with boundary
  states $\mathbf{l},\mathbf{r}$ containing $e_j$ and, on the other
  hand, admissible dimer configurations with boundary states
  $\mathbf{l}^{(\gamma_j^{\mathrm{y}})},\mathbf{r}^{(\delta_j^{\mathrm{y}})}$,
  that necessarily contain the edge $e'_j$ on the left of $\gamma_j$,
  see Figure~\ref{fig:dimerOperators}(B-C). We
  deduce that, at the transfer matrix level, we can force the presence
  of $e_j$ by multiplying the unconstrained transfer matrix
  $\Gamma_{a_i,b_i}(x_i)$ by $\psi_{\gamma_j^{\mathrm{y}}}$ on the
  left and by $\psi^*_{\delta_j^{\mathrm{y}}}$ on the right (if
  $\mathbf{l}$ or $\mathbf{r}$ do not have black marbles at the
  required positions then $\langle \mathbf{l}|$ or
  $|\mathbf{r}\rangle$ will be killed by this $\psi$ or $\psi^*$, as
  it should be). More generally, the product
  $\langle \mathbf{l}| \psi_{\gamma_1^{\mathrm{y}}} \cdots
  \psi_{\gamma_{m'}^{\mathrm{y}}} \Gamma_{a_i,b_i}(x_i)
  \psi^*_{\delta_{m'}^{\mathrm{y}}} \cdots
  \psi^*_{\delta_1^{\mathrm{y}}} |\mathbf{r}\rangle$
  is nonzero if and only if there is a dimer configuration with
  boundary states $\mathbf{l},\mathbf{r}$ containing all the $e_j$,
  $j=1..m'$. In that case this configuration is unique and the product
  is equal to $x_i^n$, with $n$ the number of dimers on diagonal edges
  other than the $e_j$ (it is easily seen that the signs induced by
  the $\psi/\psi^*$ all cancel out). Upon reordering the $\psi^*$'s in
  reverse order, multiplying by $x_i^{n_i}$ (to account for the weight of
	the dimers on diagonal edges in $e_j$, $j=1..m'$) 
	and multiplying by projectors on the left, we
  conclude that \eqref{eq:fermdimgenel} is true with $T_i$ given by
  \eqref{eq:fermdimel}.

  The case $a_i=L$ can be deduced from the discussion of the case
  $a_i=R$, by performing a central symmetry and exchanging the colors
  of the marbles. In other words, we simply need to take the dual of
  \eqref{eq:fermdimel} (vertical symmetry) and conjugate with the
  involution $\omega$ acting on charged partitions/Maya diagrams
  (horizontal and color symmetry). There is a slight subtlety
  regarding the sign though, which can be treated using
  Proposition~\ref{prop:fermomega}.  When taking the dual of
  $\eqref{eq:fermdimel}$, the order of the operators is reversed,
  $\Gamma_{R,\pm}(x_i)$ is changed into $\Gamma_{R,\mp}(x_i)$ and each
  $\psi$ is changed into a $\psi^*$ (and vice versa). When conjugating
  by $\omega$, $\Gamma_{R,\mp}(x_i)$ is changed into
  $\Gamma_{L,\mp}(x_i)$ and each $\psi^*$ is changed back into a
  $\psi$ (and vice versa), up to a sign. By \eqref{eq:psiconj}, the
  signs cancel out for horizontal edges in $E$, but combine into a
  sign $-1$ for each diagonal edge in $E$, which explains why the sign
  $(-1)^{k_i}$ is different in \eqref{eq:fermdimelbis}.
\end{proof}

\begin{rem}
  Similar bijective arguments are used in Appendix~\ref{sec:fermcom}
  to prove the commutation relations between bosonic and fermionic
  operators stated in Proposition~\ref{prop:com_gamma_psi}.
\end{rem}

\subsection{From the Schr\"odinger to the Heisenberg picture}
\label{sec:schrheis}

Theorem~\ref{thm:fermdim} expresses $W(G;\underline{x},E)$ as a
product of bosonic and fermionic operators taken between two vacuum
states, which we may rewrite using a strategy coming from \cite{OR1},
similar to that used in Section~\ref{sec:enumproof} for the proof of
Theorem~\ref{thm:enum1}: move all $\Gamma_+$'s to the right, and all
$\Gamma_-$'s to the left, so that they are absorbed by the vacuum
states at the end. In this process, we first pick multiplicative
factors due to the commutations between $\Gamma_+$'s and $\Gamma_-$'s:
those are precisely the same as for the partition function
$Z(G;\underline{x})$. Second, the fermionic operators get
``conjugated'' by the $\Gamma$'s crossing them. All this allows to
rewrite
\begin{equation}
  \label{eq:fermdimrew}
  W(G;\underline{x},E) = Z(G;\underline{x}) \langle \emptyset | \tilde{T}_\ell \tilde{T}_{\ell+1} \cdots \tilde{T}_r
    | \emptyset \rangle
\end{equation}
where, for all $i\in[\ell..r]$, we set
\begin{equation}
  \label{eq:fermdimelrew}
  \tilde{T}_i = (-1)^{k_i} x_i^{n_i}
  \left(\prod_{j=1}^{m_i} \Psi^*(\beta_{i,j}) \right)
  \left(\prod_{j=1}^{m_i} \Psi(\alpha_{i,j}) \right)
  \left(\prod_{j=1}^{m'_i} \Psi(\gamma_{i,j}) \right)
  \left(\prod_{j=1}^{m'_i} \Psi^*(\delta_{i,j}) \right)
\end{equation}
with, for $\beta,\alpha$ respectively odd and even vertices of $G$,
\begin{equation}
  \label{eq:Psidef}
  \begin{split}
    \Psi^*(\beta) &= \Ad\left(
      \prod_{\substack{i \leq \lfloor \beta^{\mathrm{x}}/2 \rfloor \\ b_i=+}}
      \Gamma_{a_i,+}(x_i)
      \prod_{\substack{i \geq \lceil \beta^{\mathrm{x}}/2 \rceil \\ b_i=-}}
      \Gamma^{-1}_{a_i,-}(x_i)
    \right) \psi^*_{\beta^{\mathrm{y}}},\\
    \Psi(\alpha) &= \Ad\left(
      \prod_{\substack{i \leq \alpha^{\mathrm{x}}/2 \text{ if } a_i=L\\
          i < \alpha^{\mathrm{x}}/2 \text{ if } a_i=R\\ b_i=+}}
      \Gamma_{a_i,+}(x_i)
      \prod_{\substack{i \geq \alpha^{\mathrm{x}}/2 \text{ if } a_i=R\\
          i > \alpha^{\mathrm{x}}/2 \text{ if } a_i=L\\ b_i=-}}
      \Gamma^{-1}_{a_i,-}(x_i)
    \right) \psi_{\alpha^{\mathrm{y}}}.
  \end{split}
\end{equation}
Here $\Ad$ denotes the adjoint action:
\begin{equation}
  \Ad(A)B=A B A^{-1}
\end{equation}
with $A,B$ operators acting on $\mathcal{F}$, with $A$ invertible
(recall that, by \eqref{eq:gaminv}, this is the case for the
$\Gamma$'s). Two remarks are in order. First, we may put the
$\Gamma_+$'s and the $\Gamma_-$'s in any order we want in
\eqref{eq:Psidef}, as this does not change their adjoint action.
Second, by Proposition~\ref{prop:com_gamma_psi}, $\Psi^*(\beta)$ and
$\Psi(\alpha)$ are (formal) linear combinations of $\psi^*$'s and
$\psi$'s, respectively.

\begin{rem} In physical terms, passing from the $\psi^*$/$\psi$'s to
  the $\Psi^*$/$\Psi$'s can indeed be interpreted as going from the
  Schr\"odinger to the Heisenberg picture of quantum mechanics (the
  abscissa playing the role of time). It appears from
  \eqref{eq:Psidef} that creation and annihilation operators are
  naturally attached to respectively even and odd sites. An analogous
  situation appears in the path integral formalism: if $K$ denotes the
  Kasteleyn matrix of a finite planar bipartite graph, then the
  determinant of $K$, yielding the dimer partition function, can be
  written as a Grassmann-Berezin integral \cite{MR1175176}
  \begin{equation}
    \det K = \int e^{\sum_{i,j} \xi_i K_{i,j} \xi^*_j}
    \prod_i d\xi_i \prod_j d\xi^*_j
  \end{equation}
  where the $\xi_i$'s and $\xi^*_j$ are Grassmann variables attached
  to the even and odd vertices of the graph,
  respectively. Furthermore, the contribution of dimer configurations
  containing a given collection of edges $(i_1,j_1),\ldots,(i_s,j_s)$
  is proportional to
  \begin{equation}
    \int \xi_{i_1} \xi^*_{j_1} \cdots \xi_{i_s} \xi^*_{j_s}
    e^{\sum_{i,j} \xi_i K_{i,j} \xi^*_j} \prod_i d\xi_i \prod_j d\xi^*_j.
  \end{equation}
  In other words, dimer correlations are given by the expectation
  value of a product of fermionic operators, whose form is reminiscent
  of \eqref{eq:fermdimelrew}. Note however that the approach followed
  in this paper is more akin to canonical quantization.
\end{rem}

\subsection{Proof of Theorem~\ref{thm:det_corr}}
\label{sec:det_corr_proof}

By \eqref{eq:fermdimrew} and \eqref{eq:fermdimelrew}, the ratio
\begin{equation}
  \label{eq:PEratio}
  P_{G;\underline{x}}(E)=\frac{W(G;\underline{x},E)}{Z(G;\underline{x})}
\end{equation}
is given by a product of fermionic operators taken between two vacuum
states. This can be rewritten as a determinant using Wick's formula,
as follows.

For an even vertex $\alpha$ and an odd vertex $\beta$, define the
\emph{naturally ordered product} (or ``time-ordered product'') of
$\Psi(\alpha)$ and $\Psi^*(\beta)$ by
\begin{equation}
  \label{eq:tdef}
  \mathcal{T}\left( \Psi(\alpha), \Psi^*(\beta) \right) =
  \begin{cases}
    \Psi(\alpha) \Psi^*(\beta) &
    \text{if $\alpha^{\mathrm{x}} < \beta^{\mathrm{x}}$}, \\
    - \Psi^*(\beta) \Psi(\alpha)  &
    \text{if $\alpha^{\mathrm{x}} > \beta^{\mathrm{x}}$}.
  \end{cases}
\end{equation}
We may more generally consider the naturally ordered product of more
than two $\Psi$ or $\Psi^*$, by ordering them according to the
abscissa of their argument and multiplying by the sign of the
corresponding permutation (note that operators with the same abscissa
anticommute, hence their order is irrelevant). Observe that
$\tilde{T}_\ell \tilde{T}_{\ell+1} \cdots \tilde{T}_r$ is, up to a
factor, the naturally ordered product of the $\Psi$ and $\Psi^*$
associated with the endpoints of the edges of $E$. Denoting now by
$\{e_1,\ldots,e_s\}$ the edges of $E$, with $e_i=(\alpha_i,\beta_i)$,
and by $H(E)$ the number of horizontal edges in $E$ whose
right endpoint is at an even abscissa,
we have
\begin{equation}
  \label{eq:wickapply}
  \begin{split}
    P_{G;\underline{x}}(E) &= (-1)^{H(E)} \left( \prod_{i=\ell}^r x_i^{n_i}\right)
    \langle \emptyset |
    \mathcal{T}\left(\Psi(\alpha_1),\Psi^*(\beta_1),\ldots,
      \Psi(\alpha_s),\Psi^*(\beta_s)\right) | \emptyset \rangle \\
    &= (-1)^{H(E)} \left( \prod_{i=\ell}^r x_i^{n_i}\right) \det_{1 \leq i,j \leq s}
    \langle \emptyset | \mathcal{T}\left(\Psi(\alpha_i),\Psi^*(\beta_j)\right) | \emptyset \rangle
  \end{split}
\end{equation}
where we apply Wick's formula to pass from the first to the second
line, and where the sign $(-1)^{H(E)}$ arises from the reordering of
the fermionic operators (in particular, dimers having their right
endpoint at an even abscissa appear in the ``wrong order'' in the
naturally ordered product, but the resulting sign is cancelled in the
case of diagonal dimers by that present in
Theorem~\ref{thm:fermdim}). For completeness, we provide a detailed
derivation of Wick's formula in Appendix~\ref{sec:wick}.  To complete
the proof of Theorem~\ref{thm:det_corr}, it remains to check that
\begin{equation}
  \label{eq:CTprod}
  \mathcal{C}_{\alpha,\beta} = \langle \emptyset |
  \mathcal{T}\left(\Psi(\alpha),\Psi^*(\beta)\right) | \emptyset \rangle
\end{equation}
has the announced expression \eqref{eq:Cdef}. At this stage we need to
discuss a bit analyticity conditions (so far all our
computations were done by treating the weights $x_i$'s as formal
variables). Recall that we aim at proving Theorem~\ref{thm:det_corr}
under the mere assumption that the partition function
$Z(G;\underline{x})$ is a convergent sum, which boils down to the
conditions \eqref{eq:cc1} and \eqref{eq:cc2} (see again
Remark~\ref{rem:ccgam}).

Let us temporarily strengthen \eqref{eq:cc1} into the condition
\begin{equation}
  \label{eq:cc1bis}
  x_i x_j < 1 \text{ for all $i<j$ such that $b_i=+$ and
    $b_j=-$}
\end{equation}
(that is, we also impose $x_i x_j < 1$ when $a_i \neq a_j$). We
introduce the quantities
\begin{equation}
  \label{eq:rhodef}
    \rho_+(k) = \inf_{\substack{i:b_i=+ \\ 2i \leq k}}
    \left(\frac{1}{x_i}\right), \qquad
    \rho_{-}(k) = \sup_{\substack{j:b_j=- \\ 2j \geq k}} x_j,
\end{equation}
which are nonnegative nonincreasing functions of $k$, such that
$\rho_-(k) < \rho_+(k)$ by \eqref{eq:cc1bis} and
\eqref{eq:cc2}. Recalling the definition \eqref{eq:psigen} of the
fermionic generating functions $\psi(z)$ and $\psi^*(z)$, the
definition \eqref{eq:Psidef} of $\Psi(\alpha)$ and $\Psi^*(\beta)$,
and the bosonic-fermionic commutation relations of
Proposition~\ref{prop:com_gamma_psi}, we may write
\begin{equation}
  \Psi(\alpha) = \frac{1}{2i\pi} \oint_{C_{z}}
  \frac{F_{\alpha^{\mathrm{x}}}(z) \psi(z)}{z^{\alpha^{\mathrm{y}}}} \frac{dz}{z}, 
  \qquad
  \Psi^*(\beta) = \frac{1}{2i\pi} \oint_{C_{w}}
  \frac{\psi^*(w) w^{\beta^{\mathrm{y}}}}{F_{\beta^{\mathrm{x}}}(w)} \frac{dw}{w}
\end{equation}
with $C_z$ (resp.\ $C_w$) a circle centered at $0$ with radius
comprised between $\rho_-(\alpha^{\mathrm{x}})$ and
$\rho_+(\alpha^{\mathrm{x}})$ (resp.\ between
$\rho_-(\beta^{\mathrm{x}})$ and $\rho_+(\beta^{\mathrm{x}})$). We
deduce that
$\langle \emptyset |
\mathcal{T}\left(\Psi(\alpha),\Psi^*(\beta)\right) | \emptyset
\rangle$
is equal to the wanted expression \eqref{eq:Cdef} provided that
\begin{equation}
  \label{eq:tpsipsi}
  \langle \emptyset | \mathcal{T}\left(\psi(z),\psi^*(w)\right)
  | \emptyset \rangle = \frac{\sqrt{zw}}{z-w}.
\end{equation}
But this readily follows from the definition \eqref{eq:tdef} of the
naturally ordered product and from the relations \eqref{eq:psipsi},
provided that we take the radius of $C_z$ to be strictly larger than
that of $C_w$ if $\alpha^{\mathrm{x}}<\beta^{\mathrm{x}}$, and vice
versa otherwise. We may now freely deform the contours $C_z$ and
$C_w$, as long as we hit no pole of the integrand: this establishes
\eqref{eq:Cdef} under the conditions (i)-(iii) for the contours, hence
Theorem~\ref{thm:det_corr} under the assumption \eqref{eq:cc1bis}.

We now explain how to relax this assumption into \eqref{eq:cc1}.
We proceed by multiplying each $x_i$ by a factor $t \in [0,1]$, and
noting that the assumption \eqref{eq:cc1bis} hence the identity
\eqref{eq:det_corr} are satisfied for $t$ small enough. We will show
that both sides of \eqref{eq:det_corr} have an analytic
continuation in $t$ to a domain containing the closed unit disk. As
apparent from Theorem~\ref{thm:enum1}, this is the case for the
partition function $Z(G;t \underline{x})$ (provided that
\eqref{eq:cc1} is satisfied, of course). The quantity
$W(G;t\underline{x},E)$, being a sum over a restricted subset of
configurations, is analytic too, hence so does
$P_{G;t\underline{x}}(E)$ which is the left-hand side of
\eqref{eq:det_corr}. As for the right-hand side, let us show that we
may find two contours $C_z$ and $C_w$ such that
$\mathcal{C}_{\alpha,\beta}^{(t)}$, as defined by \eqref{eq:Cdef} with
the $x_i$'s multiplied by $t$, is manifestly analytic as $t$ varies
over the unit disk. For $t=1$, let us introduce the quantities
\begin{equation}
  \label{eq:rhodefref}
  \begin{split}
    \rho_{R+}(k) = \inf_{\substack{i:(a_i,b_i)=(R,+) \\ 2i<k}}
    \left(\frac{1}{x_i}\right), \qquad
    \rho_{L-}(k) = \sup_{\substack{j:(a_j,b_j)=(L,-) \\ 2j>k}} x_j, \\
    \rho_{L+}(k) = \inf_{\substack{i:(a_i,b_i)=(L,+) \\ 2i\leq k}}
    \left(\frac{1}{x_i}\right), \qquad
    \rho_{R-}(k) = \sup_{\substack{j:(a_j,b_j)=(R,-) \\ 2j\geq k}} x_j, \\
  \end{split}
\end{equation}
which are refinements of $\rho_+(k)$ and $\rho_-(k)$. We take $C_z$
to be a circle centered on the real axis, surrounding the real
interval $[-\rho_{R-}(\alpha^{\mathrm{x}}),0]$ but not intersecting
$[\rho_{L+}(\alpha^{\mathrm{x}}),+\infty[$. Similarly, we take $C_w$
to be a circle centered on the real axis, surrounding the real
interval $[0,\rho_{L-}(\beta^{\mathrm{x}})]$ but not intersecting
$]-\infty,-\rho_{R+}(\beta^{\mathrm{x}})]$. We furthermore want $C_z$
and $C_w$ not to intersect, with $C_z$ surrounding $C_w$ iff
$\alpha^x < \beta^x$: this is possible because the functions
$\rho_{L\pm}$ are decreasing with $k$ and such that
$\rho_{L-}(k) < \rho_{L+}(k)$ for all $k$ hence, for
$\alpha^x < \beta^x$, we have
$\rho_{L-}(\beta^x) \leq \rho_{L-}(\alpha^x) < \rho_{L+}(\alpha^x)$ so
that the ``window'' $]\rho_{L-}(\beta^x),\rho_{L+}(\alpha^x)[$ in
which both $C_z$ and $C_w$ must pass is nonempty, see Figure~\ref{fig:contours} (and the case
$\alpha^x > \beta^x$ is treated by a similar reasoning on
$\rho_{R\pm}$).  Observe that these contours satisfy precisely the
conditions (i)-(iii) stated below equation \eqref{eq:Cdef}. Keeping these
contours fixed, we let $t$ vary in the interval $[0,1]$: no pole of
$G_{\alpha^{\mathrm{x}},\beta^{\mathrm{x}}}^{(t)}(z,w)$ ever hits the
contours, the conditions (i)-(iii) remain satisfied, and we conclude
that \eqref{eq:Cdef} defines the wanted analytic continuation of
$\mathcal{C}_{\alpha,\beta}^{(t)}$. \hfill $\square$
\begin{figure}
  \centering
  \includegraphics[width=0.75\linewidth]{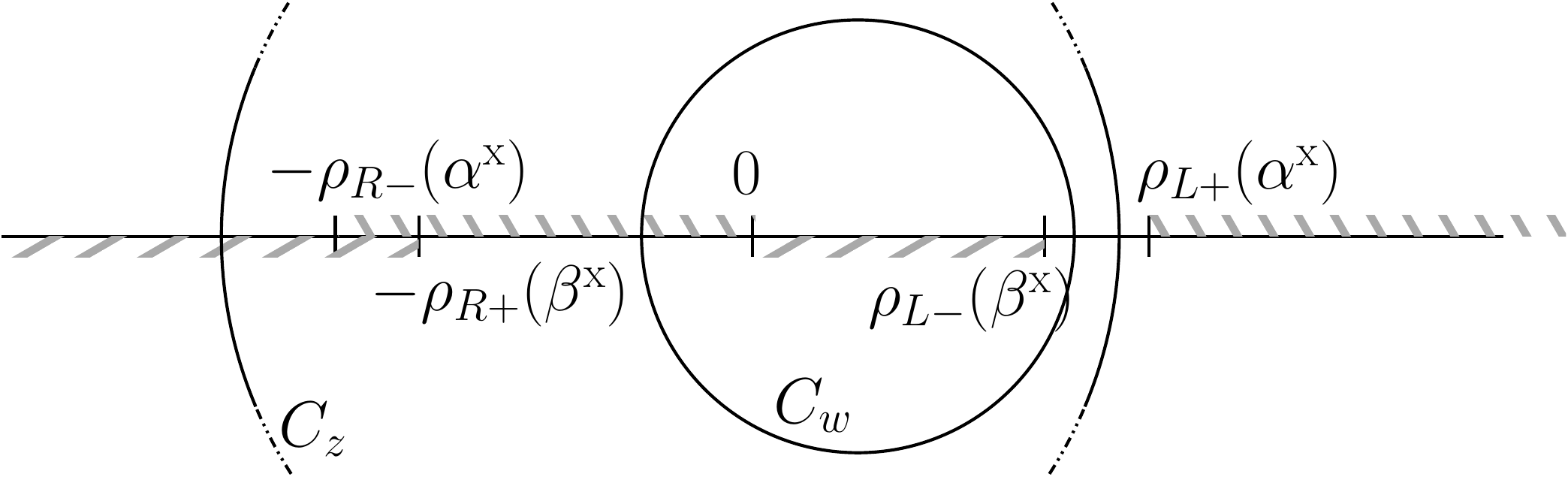}
  \caption{The contours $C_z$ and $C_w$ in the case $\beta^\mathrm{x}>\alpha^\mathrm{x}$. The upper (resp. lower) dashed regions are the intervals that the contour $C_z$ (resp. $C_w$) must avoid. On this figure it is assumed that $\rho_{R-}(\alpha^\mathrm{x})>\rho_{R+}(\beta^\mathrm{x})$ but the converse can also hold.}
  \label{fig:contours}
\end{figure}

\subsection{Inverse Kasteleyn matrix}
\label{sec:fermkast}

Let us now connect Theorem~\ref{thm:det_corr} to the general Kasteleyn
theory for the dimer model on planar bipartite graphs. We mention that
a similar connection was already observed in \cite{OR2} for the case
of lozenge tilings corresponding to skew plane partitions, see also
\cite[Section~5]{MR3148098}.

\begin{Def}
Let $G=(V,E)$ be a planar bipartite graph with no multiple edges. Consider a collection of weights on the edges $\omega:E\to\mathbb{R}_+$.

A \emph{Kasteleyn orientation} of $G$ is a map $\eta:E\to\{-1,1\}$ such that for any face $F$ the product over the edges surrounding $F$ gives :

\begin{equation}
\prod\limits_{e\in F}{\eta(e)}=
\begin{cases}
1& \mbox{if } F \mbox{ is of degree } 2 \pmod{4} \\
-1& \mbox{if } F \mbox{ is of degree } 0 \pmod{4}
\end{cases}
\end{equation}

The \emph{Kasteleyn matrix} of $G$ associated to $\eta$ is the matrix $\mathcal{K}$ whose rows (resp. columns) are indexed by white (resp. black) vertices of $G$, such that for any couple $(w,b)$ of a white vertex and a black vertex,

\begin{equation}
\mathcal{K}(w,b)=
\begin{cases}
0& \mbox{if } w \mbox{ not adjacent to } b \\
\eta(w,b)\omega(w,b)& \mbox{if there is an edge } (w,b)
\end{cases}.
\label{eq:def_kasteleyn}
\end{equation}
\end{Def}

Local statistics for dimers are known to be given by determinants of
submatrices of the Kasteleyn
matrix~\cite{Kasteleyn,Percus,Kenyon:loc_stat}:

\begin{thm}
\label{thm:kenyon}
Let $(G,\omega)$ be a finite weighted planar bipartite graph. The probability that the dimers $e_1=(w_1,b_1),\dots,e_s=(w_s,b_s)$ are present in a random dimer configuration sampled with a probability proportional to its weight is:
\begin{equation}
  P(e_1,\dots,e_s)=\left(\prod\limits_{i=1}^s{\mathcal{K}(w_i,b_i)}\right)\det\left(\mathcal{K}^{-1}(b_i,w_j)\right)_{1\leq i,j\leq s}.
\end{equation}
\end{thm}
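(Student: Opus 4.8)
The plan is to deduce this classical statement from Kasteleyn's theorem for $G$ together with Jacobi's identity relating complementary minors of $\mathcal{K}$ and of $\mathcal{K}^{-1}$; we assume throughout that $G$ has at least one perfect matching, since otherwise $Z(G)=0$, $\mathcal{K}$ is singular, and there is nothing to prove. Write $Z(H)=\sum_{M}\prod_{e\in M}\omega(e)$ for the weighted count of perfect matchings of a planar bipartite graph $H$. We take as input Kasteleyn's theorem \cite{Kasteleyn}: in the signed expansion $\det\mathcal{K}=\sum_{\sigma}\mathrm{sgn}(\sigma)\prod_{w}\mathcal{K}(w,\sigma(w))$, the nonzero terms are indexed by the perfect matchings of $G$ and, because $\eta$ is a Kasteleyn orientation, all carry one and the same sign $\varepsilon_{0}\in\{-1,1\}$; hence $\det\mathcal{K}=\varepsilon_{0}\,Z(G)\neq0$ and $\mathcal{K}$ is invertible.

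First I would rewrite $P(e_{1},\dots,e_{s})$ as a ratio of partition functions. Put $I=\{w_{1},\dots,w_{s}\}$, $J=\{b_{1},\dots,b_{s}\}$ and let $G'=G\setminus(I\cup J)$ be obtained by deleting these vertices and their incident edges. Since each $e_{i}=(w_{i},b_{i})$ is an edge of $G$, adjoining $\{e_{1},\dots,e_{s}\}$ is a weight-preserving bijection between perfect matchings of $G'$ and perfect matchings of $G$ containing $e_{1},\dots,e_{s}$, so
\[
P(e_{1},\dots,e_{s})=\frac{\bigl(\prod_{i=1}^{s}\omega(w_{i},b_{i})\bigr)\,Z(G')}{Z(G)}.
\]
The crux is to identify $Z(G')$, up to sign, with the complementary minor $\det\mathcal{K}_{\widehat I,\widehat J}$ of $\mathcal{K}$ (delete rows $I$ and columns $J$). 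This determinant expands as a signed sum over perfect matchings of $G'$, and I claim that, again, all terms share one sign. The sign discrepancy between two perfect matchings $M',M'_{0}$ of $G'$ is the product, over the alternating cycles $\gamma$ of $M'\triangle M'_{0}$, of local factors of the form $(-1)^{|\gamma|/2-1}\prod_{e\in\gamma}\eta(e)$; the Kasteleyn relation around $\gamma$, namely $\prod_{e\in\gamma}\eta(e)=(-1)^{1+|\gamma|/2+v(\gamma)}$ with $v(\gamma)$ the number of vertices of $G$ enclosed by $\gamma$ (obtained from the face conditions of the definition by multiplying over all faces inside $\gamma$ and using Euler's formula), reduces each local factor to $(-1)^{v(\gamma)}$. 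Now $\gamma$ is an alternating cycle of $G$ for the perfect matching $M'\cup\{e_{1},\dots,e_{s}\}$, hence $v(\gamma)$ is even; so all local factors equal $+1$ and $\det\mathcal{K}_{\widehat I,\widehat J}=\varepsilon_{1}\,Z(G')$ for a single sign $\varepsilon_{1}$. Comparing one matching term on each side then gives $\varepsilon_{1}=\varepsilon_{0}\,(-1)^{\kappa}\prod_{i=1}^{s}\eta(w_{i},b_{i})$, where $(-1)^{\kappa}$ is the permutation sign created by reinserting the deleted rows and columns matched diagonally, so that $\kappa$ depends only on the positions of the $w_{i}$ among the white vertices and of the $b_{i}$ among the black vertices.

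Finally I would invoke Jacobi's identity for complementary minors of an invertible matrix: $\det\bigl(\mathcal{K}^{-1}(b_{i},w_{j})\bigr)_{1\le i,j\le s}=(-1)^{\kappa'}\,\det\mathcal{K}_{\widehat I,\widehat J}\,/\,\det\mathcal{K}$, with $(-1)^{\kappa'}$ again governed solely by the positions of the deleted indices, and in fact $\kappa\equiv\kappa'\pmod 2$. Substituting $\det\mathcal{K}=\varepsilon_{0}Z(G)$ and the formula for $\det\mathcal{K}_{\widehat I,\widehat J}$, the signs $\varepsilon_{0}$ and $(-1)^{\kappa+\kappa'}$ cancel and one is left with
\[
\det\bigl(\mathcal{K}^{-1}(b_{i},w_{j})\bigr)_{1\le i,j\le s}=\Bigl(\prod_{i=1}^{s}\eta(w_{i},b_{i})\Bigr)\frac{Z(G')}{Z(G)}=\Bigl(\prod_{i=1}^{s}\frac{1}{\mathcal{K}(w_{i},b_{i})}\Bigr)P(e_{1},\dots,e_{s}),
\]
using $\mathcal{K}(w_{i},b_{i})=\eta(w_{i},b_{i})\omega(w_{i},b_{i})$ and the displayed expression for $P$; rearranging yields the stated formula. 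I expect the only real difficulty to lie in the sign bookkeeping: the conceptual heart is the parity argument showing that passing to the minor produces no cancellation — which genuinely uses that the $e_{i}$ are honest edges, so that perfect matchings of $G'$ complete to perfect matchings of $G$ — whereas verifying that $\kappa$, $\kappa'$ and the factors $\eta(w_{i},b_{i})$ all cancel against $\varepsilon_{0}$ is routine but must be tracked carefully. (Alternatively, since this is a classical fact, one may simply cite \cite{Kasteleyn,Percus,Kenyon:loc_stat}.)
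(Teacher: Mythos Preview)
The paper does not prove this theorem at all: it is quoted as background, with the sentence ``Local statistics for dimers are known to be given by determinants of submatrices of the Kasteleyn matrix~\cite{Kasteleyn,Percus,Kenyon:loc_stat}'' immediately preceding the statement, and no argument is given. So there is nothing to compare against; your parenthetical remark that one may simply cite \cite{Kasteleyn,Percus,Kenyon:loc_stat} is exactly what the paper does.

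That said, your sketch is the standard derivation (essentially Kenyon's) and the logic is sound: reduce to a ratio $Z(G')/Z(G)$, identify $Z(G')$ with the complementary minor via the Kasteleyn sign argument (using that alternating cycles for $M'\cup\{e_1,\dots,e_s\}$ enclose an even number of vertices of $G$, since those vertices are matched among themselves), and then apply Jacobi's identity. The only soft spot is the sign bookkeeping you yourself flag: you assert $\kappa\equiv\kappa'\pmod 2$ without writing it out. This is true --- both are $\sum_i(\text{position of }w_i)+\sum_i(\text{position of }b_i)$ modulo $2$ in any fixed orderings of the white and black vertices --- but in a formal write-up you would want to state explicitly which ordering conventions you are using for rows/columns of $\mathcal{K}$ and verify the equality once. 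Everything else is fine.
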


Let us go back to the rail yard graphs. In this case, the black (resp. white) vertices are the even (resp. odd) vertices.
Recall that in pure dimer coverings of a rail yard graph $G$ , all the odd
vertices with negative (resp.~positive) ordinate on the left (resp.~right)
boundary are unmatched. These pure dimer coverings 
on $G$ correspond to classical dimer coverings with finitely many diagonal edges
on the graph $\tilde{G}$,
where those unmatched vertices and the edges attached
to them are removed.
Compare Figure~\ref{fig:truedimers} and Figure~\ref{fig:covering}.

\begin{figure}
  \centering
  \includegraphics[width=0.75\linewidth]{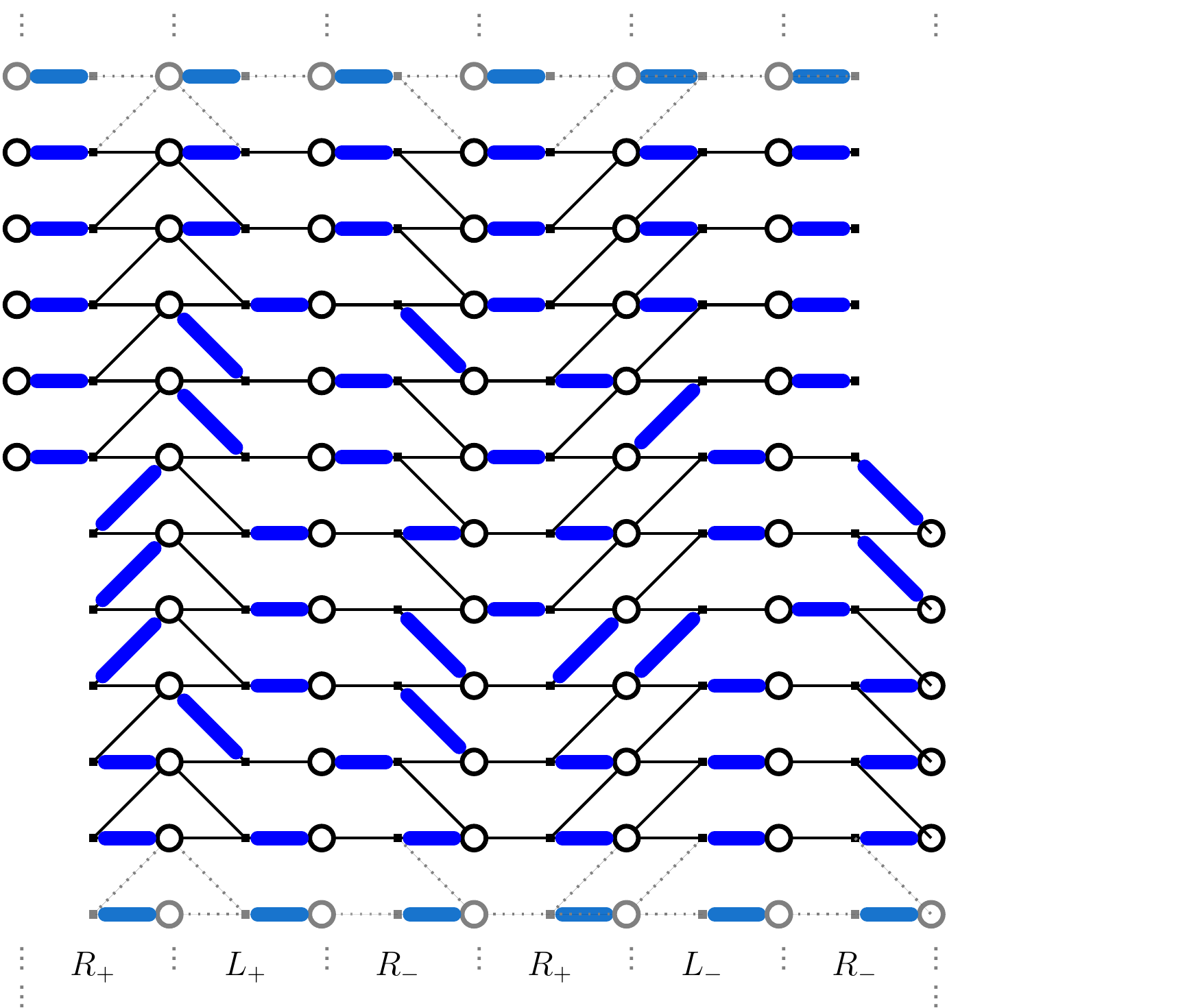}
  \caption{The perfect matching of the modified rail graph corresponding to the
    pure dimer covering of Figure~\ref{fig:covering}.}
  \label{fig:truedimers}
\end{figure}

Let us denote by $\mathfrak{M}$ the set of matched odd vertices of $G$.
Define $\eta:E\to\{-1,1\}$ by
\begin{equation}
\eta(e)=
\begin{cases}
-1 &\mbox{if $e$ is 
a horizontal edge whose right end is an even vertex,} \\
1 &\mbox{otherwise.}
\end{cases}
\end{equation}
It is easy to check that $\eta$ is a Kasteleyn orientation on $\tilde{G}$
(here the faces have degree $4$, $6$ or $8$). Construct the infinite
Kasteleyn matrix $\mathcal{K}$ on that graph as in \eqref{eq:def_kasteleyn}.
The restriction $\tilde{\mathcal{K}}$ of $\mathcal{K}$ to rows indexed by $\mathfrak{M}$
is a Kasteleyn matrix for $\tilde{G}$.
We now relate our correlation kernel $\mathcal{C}$ to the
infinite Kasteleyn matrix $\mathcal{K}$:

\begin{thm}
\label{thm:inverse_kast}
Let $G=\RYG(\ell,r,\underline{a}, \underline{b})$ be a rail yard graph and $\mathcal{K}$ be its Kasteleyn matrix for the previously defined orientation. Then if $\mathcal{C}$ is the matrix defined in Theorem~\ref{thm:det_corr}, we have:
\begin{enumerate}
  \item \text{for any $\alpha,\alpha'$ even vertices},
    $(\mathcal{C}\mathcal{K})_{\alpha,\alpha'}=\delta_{\alpha,\alpha'}$;
  \item \text{for any $\beta,\beta'$ odd vertices in $\mathfrak{M}$},
    $(\mathcal{K}\mathcal{C})_{\beta,\beta'}=\delta_{\beta,\beta'}$.
\end{enumerate}
\label{thminv}
\end{thm}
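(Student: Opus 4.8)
The plan is to verify the two identities by direct computation from the contour-integral formula \eqref{eq:Cdef}, using the explicit form of the Kasteleyn orientation $\eta$ and the functional equations satisfied by $F_k(z)$ as $k$ varies. The starting point is the observation that, as $k$ increases by $2$ (i.e.\ passing from one odd abscissa to the next even one, or vice versa), the function $F_k(z)$ is multiplied by exactly one of the factors $(1+x_iz)$, $(1-x_iz)^{-1}$, $(1+x_j/z)^{-1}$, $(1-x_j/z)$ occurring in \eqref{eq:def_Fk}, according to the type $(a_i,b_i)$ of the corresponding column. This is precisely the bosonic-fermionic commutation content already used to derive \eqref{eq:fermdimelrew}; concretely, $F_{k+2}(z)/F_k(z)$ is the scalar by which $\Gamma_{a_i,b_i}(x_i)$ conjugates $\psi(z)$.

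First I would fix an even vertex $\alpha$ and compute $(\mathcal{C}\mathcal{K})_{\alpha,\alpha'} = \sum_{\beta}\mathcal{C}_{\alpha,\beta}\,\mathcal{K}(\beta,\alpha')$, where the sum runs over the (at most three) odd neighbours $\beta$ of $\alpha'$: the two horizontal neighbours $(\alpha'^{\mathrm{x}}-1,\alpha'^{\mathrm{y}})$, $(\alpha'^{\mathrm{x}}+1,\alpha'^{\mathrm{y}})$ and the one diagonal neighbour, whose position and whose weight $x_{\alpha'^{\mathrm{x}}/2}$ are dictated by $(a_{\alpha'^{\mathrm{x}}/2}, b_{\alpha'^{\mathrm{x}}/2})$. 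Writing out each $\mathcal{C}_{\alpha,\beta}$ via \eqref{eq:Cdef}, the inner $w$-integral becomes a sum of (at most) three terms in which only $w^{\beta^{\mathrm{y}}}/F_{\beta^{\mathrm{x}}}(w)$ and the sign $\eta(\beta,\alpha')$ differ. Using $F_{\alpha'^{\mathrm{x}}-1}$ and $F_{\alpha'^{\mathrm{x}}+1}$ both expressed in terms of $F_{\alpha'^{\mathrm{x}}}$ (they differ from it by one of the elementary factors above), one checks by a short case analysis — one case per elementary type $L\pm$, $R\pm$ — that the three contributions combine, thanks to the signs $\eta$, into $w^{\alpha'^{\mathrm{y}}}/F_{\alpha'^{\mathrm{x}}}(w)$ times $(1/w)\cdot(w - \text{something})$, i.e.\ into something proportional to $\oint_{C_w} F_{\alpha'^{\mathrm{x}}}(w)^{-1} w^{\alpha'^{\mathrm{y}}}\,(\ldots)$. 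The resulting double contour integral is then recognized, via \eqref{eq:tpsipsi} and the residue at $z=w$, as $\delta_{\alpha,\alpha'}$: when the two abscissas coincide the kernel $\sqrt{zw}/(z-w)$ forces a genuine pole pickup contributing $1$ exactly when $\alpha^{\mathrm{y}} = \alpha'^{\mathrm{y}}$, and when $\alpha^{\mathrm{x}}\neq\alpha'^{\mathrm{x}}$ the contour configuration in (iii) of \eqref{eq:Cdef} makes the integral vanish. Identity (2) is handled symmetrically, fixing an odd vertex $\beta'\in\mathfrak{M}$, summing $\mathcal{K}(\beta',\alpha)\mathcal{C}_{\alpha,\beta}$ over the even neighbours $\alpha$ of $\beta'$, and using the analogous relations between $F_{\beta'^{\mathrm{x}}-1}$, $F_{\beta'^{\mathrm{x}}+1}$ and $F_{\beta'^{\mathrm{x}}}$ for the $z$-integral; the restriction $\beta'\in\mathfrak{M}$ is exactly what guarantees $\beta'$ has its full complement of even neighbours in $\tilde G$ so that no boundary term is lost.

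The main obstacle is bookkeeping the signs correctly: the orientation $\eta$ assigns $-1$ to a horizontal edge whose right end is even, and one must check that in each of the four elementary cases the combination of these $\pm1$'s with the factors $F_{k\pm 1}/F_k$ (which themselves carry signs like $(1+x/z)$ vs.\ $(1-xz)$) telescopes to give the single clean term one needs, with the correct overall sign matching $\delta_{\alpha,\alpha'}$ rather than $-\delta_{\alpha,\alpha'}$. A secondary subtlety is legitimizing the interchange of the (infinite) sum over $\beta$ and the contour integrals, and keeping track that the contours prescribed by conditions (i)--(iii) for $\mathcal{C}_{\alpha,\beta}$ with the various $\beta$'s can be chosen consistently — but this follows from the same monotonicity of the $\rho_{L\pm}, \rho_{R\pm}$ that was established in the proof of Theorem~\ref{thm:det_corr}, since neighbouring $\beta$'s have abscissas differing by $2$ from $\alpha'^{\mathrm{x}}$.
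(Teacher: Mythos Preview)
Your approach is essentially the contour-integral translation of the paper's operator-theoretic proof. The paper stays in the operator formalism: it rewrites $\mathcal{C}_{\alpha',\beta}$ as $\frac{1}{Z}\langle\emptyset|\cdots\psi_{k'}\cdots\psi^*_{\beta^{\mathrm{y}}}\cdots|\emptyset\rangle$ and proves a short Lemma showing that, for an even vertex $\alpha$ at height $k$, the weighted sum $\sum_{\beta\sim\alpha,\beta^{\mathrm{x}}<i}\mathcal{K}(\beta,\alpha)\psi^*_{\beta^{\mathrm{y}}}B^{-}=-B^{-}\psi^*_k$ (and three companion identities), which is exactly the operator incarnation of your functional equation $F_{2m+1}/F_{2m-1}=\text{(single elementary factor)}$. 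Both proofs then split according to whether $\alpha'^{\mathrm{x}}$ lies to the left of, to the right of, or at the same abscissa as $\alpha$; in the first two cases the contributions from the two sides cancel, and in the coincident case the anticommutator $\{\psi_k^*,\psi_{k'}\}=\delta_{k,k'}$ produces the Kronecker delta --- which is your ``pole pickup at $z=w$''. So the underlying mechanism is identical; you have just unwound the operators into their generating-function avatars.

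There is one genuine slip in your plan for part~(2). You write that the restriction $\beta'\in\mathfrak{M}$ ``guarantees $\beta'$ has its full complement of even neighbours in $\tilde G$''. That is not the role of $\mathfrak{M}$: a left-boundary vertex at abscissa $2\ell-1$ has neighbours only to its right regardless of its ordinate, so the sum $\sum_{\alpha\sim\beta'}\mathcal{K}(\beta',\alpha)\mathcal{C}_{\alpha,\beta}$ is genuinely one-sided there. In the paper's language this one-sided sum collapses to $-\frac{1}{Z}\langle\emptyset|\psi_k\cdots|\emptyset\rangle$, and it is the condition $k>0$ (i.e.\ $\beta'\in\mathfrak{M}$) that makes $\langle\emptyset|\psi_k=0$ and hence kills the off-diagonal terms. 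In your contour-integral setup you will have to supply the analogue: after combining the right-side neighbours you are left with an integrand of the form $F_{2\ell-1}(z)\,z^{-k}/(z-w)\cdots$, and you must use $k>0$ together with the absence of poles of $F_{2\ell-1}(z)$ inside $C_z$ (there is no column to the left of $\ell$) to argue that the $z$-integral vanishes. This is not hard, but it is a separate boundary argument, not a consequence of $\beta'$ having ``all its neighbours''.
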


We state a lemma that will be useful to prove the theorem.
Recall that we have a bosonic operator $\Gamma_{R\pm}(x_m)$ (resp. $\Gamma_{L\pm}(x_m)$) at position $(2m+1/2,0)$ (resp. $(2m-1/2,0)$) when $a_m=R$ (resp. $a_m=L$). 
To simplify notations, we will also place a bosonic operator $Id$ at every position $(2m-1/2,0)$ (resp. $(2m+1/2,0)$) when $a_m=R$ (resp. $a_m=L$). 
This does not change the naturally ordered product of operators, and now we have one bosonic operator at each half-integer abscissa in $[2\ell-1,\ldots ,2r+1]$.
Let $B^-$ (resp. $B^+$) be the bosonic operator at abscissa $i-\frac{1}{2}$ (resp. $i+\frac{1}{2}$).
Denote by $x\sim y$ the fact that two vertices $x$ and $y$ are adjacent in $G$. 

\begin{lem}
\label{lem:Kast_com}
We have the following properties:
\begin{enumerate}
 \item Let $\alpha$ be an even vertex at position $(i,k)$.  Then
\begin{align}
\sum\limits_{\substack{\beta\sim \alpha \\ \beta^{\mathrm{x}}<i}}{\mathcal{K}(\beta, \alpha)\psi_{\beta^{\mathrm{y}}}^*B^-}&=-B^-\psi_k^* ; \label{eq:Kast_com_1} \\
\sum\limits_{\substack{\beta\sim \alpha \\ \beta^{\mathrm{x}}>i}}{\mathcal{K}(\beta, \alpha)B^+\psi_{\beta^{\mathrm{y}}}^*}&=\psi_k^*B^+.\label{eq:Kast_com_2}  
\end{align}
 \item Let $\beta$ be an odd vertex at position $(i,k)$.  Then
\begin{align}
\sum\limits_{\substack{\alpha\sim \beta \\ \alpha^{\mathrm{x}}<i}}{\mathcal{K}(\beta,\alpha)\psi_{\alpha^{\mathrm{y}}}B^-}&=B^-\psi_k ;\\
\sum\limits_{\substack{\alpha\sim \beta \\ \alpha^{\mathrm{x}}>i}}{\mathcal{K}(\beta,\alpha)B^+\psi_{\alpha^{\mathrm{y}}}}&=-\psi_kB^+. 
\end{align}
\end{enumerate}
\end{lem}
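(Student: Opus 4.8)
The plan is to reduce everything to a single computation, since the four identities are related by the symmetries already used throughout the paper (duality, the involution $\omega$, and the left-right reflection). So first I would fix one case, say \eqref{eq:Kast_com_1} with $\alpha=(i,k)$ an even vertex, and spell out explicitly what the sum on the left-hand side is: the even vertex $\alpha$ at abscissa $i=2m$ has exactly one or two neighbors $\beta$ with $\beta^{\mathrm{x}}<i$, namely the odd vertex $(2m-1,k)$ always (joined by a horizontal edge), and the odd vertex $(2m-1,k\pm 1)$ when $a_m=L$ and $b_m=\pm$ (joined by the diagonal edge). By the definition of $\eta$, the horizontal edge $(\beta,\alpha)$ with $\beta$ to the left has $\mathcal{K}(\beta,\alpha)=-1$, while the diagonal edge has $\mathcal{K}(\beta,\alpha)=+1$. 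The operator $B^-$ sitting at abscissa $i-\tfrac12=2m-\tfrac12$ is $\Gamma_{L,b_m}(x_m)$ if $a_m=L$ and is $\mathrm{Id}$ if $a_m=R$.

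The key step is then a direct check using the bosonic-fermionic commutation relations of Proposition~\ref{prop:com_gamma_psi}. If $a_m=R$, then $B^-=\mathrm{Id}$, the only neighbor to the left is the horizontal one, and the identity reads $-\psi_k^* = -\psi_k^*$, trivially. If $a_m=L$ and $b_m=+$, the left-hand side is $-\psi_k^* \Gamma_{L,+}(x_m) + \psi_{k+1}^* \Gamma_{L,+}(x_m)$, and we must recognize this as $-\Gamma_{L,+}(x_m)\psi_k^*$. Rewriting in terms of the generating function $\psi^*(z)=\sum_k z^{-k}\psi^*_k$, note that $\psi_{k+1}^* - \psi_k^*$ is (up to the sign and the shift in exponent) exactly the combination produced by multiplying $\psi^*(z)$ by $(1-x_m z)$: indeed \eqref{eq:cgp5} gives $\Gamma_{L+}(x)\psi^*(z)=(1-xz)\psi^*(z)\Gamma_{L+}(x)$, and extracting the coefficient of $z^{-k}$ on both sides produces precisely $\Gamma_{L+}(x_m)\psi_k^* = \psi_k^*\Gamma_{L+}(x_m) - x_m \psi_{k+1}^*\Gamma_{L+}(x_m)$, where on the right one must be careful that $(1-x_m z)$ shifts the index. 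So \eqref{eq:Kast_com_1} in this case is literally a restatement of \eqref{eq:cgp5} read coefficient-by-coefficient, with the weight $x_m$ of the diagonal edge encoded in the commutation factor. The case $a_m=L$, $b_m=-$ works the same way using \eqref{eq:cgp7} (now $B^-=\Gamma_{L,-}(x_m)$, and the diagonal neighbor is $(2m-1,k-1)$).

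For the remaining three identities I would not repeat the work: \eqref{eq:Kast_com_2} is obtained from \eqref{eq:Kast_com_1} by the left-right reflection that swaps $L\leftrightarrow R$ and swaps "$\beta$ to the left of $\alpha$'' with "$\beta$ to the right'', which turns $\Gamma_{L\pm}$ acting from the left into $\Gamma_{R\pm}$ acting from the right and uses \eqref{eq:cgp6} and \eqref{eq:cgp8} instead; the overall sign flips because $\eta$ assigns $-1$ only to horizontal edges whose right end is even, so a left-horizontal edge has sign $-1$ while a right-horizontal edge has sign $+1$. The two identities in part (2), for an odd vertex $\beta$, are obtained from part (1) by conjugating with the involution $\omega$ (using Proposition~\ref{prop:fermomega} to convert $\psi^*_k$ into $\pm\psi_{-k}$ and $\Gamma_{L\pm}$ into $\Gamma_{R\pm}$) together with the analogous coefficient extraction from \eqref{eq:cgp1}--\eqref{eq:cgp4}; here the roles of even and odd vertices are interchanged and the remaining sign is tracked as in the proof of Theorem~\ref{thm:fermdim}. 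The main obstacle is purely bookkeeping: getting all the signs to line up correctly across the four cases, in particular making sure that the sign $-1$ carried by left-horizontal edges in the Kasteleyn orientation $\eta$ matches the sign produced when one extracts the relevant coefficient from the commutation relations and, for part (2), the extra sign coming from Proposition~\ref{prop:fermomega}. None of the individual computations is hard, but the consistency of signs is where care is needed.
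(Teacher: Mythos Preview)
Your approach is essentially the same as the paper's: both prove \eqref{eq:Kast_com_1} by case analysis on $B^-$ (trivial when $B^-=\mathrm{Id}$, otherwise coefficient extraction from the bosonic--fermionic commutation relations of Proposition~\ref{prop:com_gamma_psi}), and both declare the remaining identities analogous. The paper computes via $[z^{-k}]$ in $\psi^*(z)$ exactly as you suggest.

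One slip to fix: you write that the diagonal edge has $\mathcal{K}(\beta,\alpha)=+1$, but by definition $\mathcal{K}(\beta,\alpha)=\eta(\beta,\alpha)\,\omega(\beta,\alpha)=x_m$ for a diagonal edge in column $m$ (the edge weight, not just the orientation). Accordingly your left-hand side in the $L+$ case should read $-\psi_k^*\Gamma_{L+}(x_m)+x_m\,\psi_{k+1}^*\Gamma_{L+}(x_m)$, not $-\psi_k^*\Gamma_{L+}(x_m)+\psi_{k+1}^*\Gamma_{L+}(x_m)$. You clearly know this (you invoke the $x_m$ factor two lines later when extracting the coefficient from \eqref{eq:cgp5}), so this is just a write-up inconsistency, but as stated the displayed identity would not match.
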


\begin{proof}
We will prove only identity \eqref{eq:Kast_com_1}. The other identities can be proved in a similar way.
We distinguish three cases : \\

\emph{Case 1: $B^-=Id$.}
To its left, $\alpha$ has only one neighbour $\beta$, which is at height $k$. We
conclude from the fact that $\mathcal{K}(\beta,\alpha)=-1$. \\

\emph{Case 2 : $B^-=\Gamma_{L-}(x)$.}
To its left, $\alpha$ has one neighbour $\beta_1$ at height $k$ and one neighbour $\beta_2$ at height $k-1$. We compute
\begin{align*}
\sum\limits_{\substack{\beta\sim \alpha \\ \beta^{\mathrm{x}}<i}}{\mathcal{K}(\beta, \alpha)\psi_{\beta^{\mathrm{y}}}^*\Gamma_{L-}(x)}
&=\mathcal{K}(\beta_1,\alpha)\psi_k^*\Gamma_{L-}(x) + \mathcal{K}(\beta_2, \alpha)\psi_{k-1}^*\Gamma_{L-}(x) \\
&=-\psi_k^*\Gamma_{L-}(x) + x\psi_{k-1}^*\Gamma_{L-}(x) \\
&=-\left[z^{-k}\right]\psi^*(z)\Gamma_{L-}(x)+\left[z^{-k+1}\right]x\psi^*(z)\Gamma_{L-}(x) \\
&=-\left[z^{-k}\right]\psi^*(z)\Gamma_{L-}(x)+\left[z^{-k}\right]\frac{x}{z}\psi^*(z)\Gamma_{L-}(x) \\
&=-\left[z^{-k}\right]\left(1-\frac{x}{z}\right)\psi^*(z)\Gamma_{L-}(x) \\
&=-\left[z^{-k}\right]\Gamma_{L-}(x)\psi^*(z) \\
&=-\Gamma_{L-}(x)\psi^*_k,
\end{align*}
where we used proposition \ref{prop:com_gamma_psi} to switch $\psi^*(z)$ and $\Gamma_{L-}(x)$. \\

\emph{Case 3 : $B^-=\Gamma_{L+}(x)$.}
In the proof of case 2, replace $\Gamma_{L-}(x)$ by $\Gamma_{L+}(x)$, $k-1$ by $k+1$ and $\frac{x}{z}$ by $xz$.
\end{proof}

We can now prove Theorem \ref{thminv}.

\begin{proof}[Proof of Theorem \ref{thminv}.]
In this proof, to make notations lighter, $Z(G; \underline{x})$ will be abbreviated as $Z$.
Let us prove the first part of the theorem. Fix two even vertices $\alpha$ at position $(i,k)$ and $\alpha'$ at position $(i',k')$. Then :

\begin{equation}
\label{eq:weighted_sum}
(\mathcal{C}\mathcal{K})(\alpha',\alpha)=\sum\limits_{\beta\sim \alpha}\mathcal{C}(\alpha',\beta)\mathcal{K}(\beta,\alpha).
\end{equation}

This sum has at most three nonzero terms, corresponding to the three odd neighbours of $\alpha$, at the abscissas $i-1$ and $i+1$. 
We now distinguish according to the position of $i'$ relatively to $i$.\\

\emph{Case 1 : $i'<i$.}

If $\beta\sim \alpha$ is at abscissa $i-1$, $\mathcal{C}(\alpha',\beta)$ can be written in the following form :

\begin{equation}
\mathcal{C}(\alpha',\beta)=\frac{1}{Z}\langle\emptyset|\Gamma^{(1)}\psi_{k'}\Gamma^{(2)}\psi^*_{\beta^{\mathrm{y}}}B^-B^+\Gamma^{(3)}|\emptyset\rangle,
\end{equation}
where $\Gamma^{(1)}, \Gamma^{(2)}$ and $\Gamma^{(3)}$ are the products of the bosonic operators located respectively before the abscissa $i'$, 
between the abscissas $i'$ and $i-1$ and after the abscissa $i+1$.

Similarly, if $\beta\sim \alpha$ is at abscissa $i+1$, $\mathcal{C}(\alpha',\beta)$ can be written in the following form:
\begin{equation}
\mathcal{C}(\alpha',\beta)=\frac{1}{Z}\langle\emptyset|\Gamma^{(1)}\psi_{k'}\Gamma^{(2)}B^-B^+\psi^*_{\beta^{\mathrm{y}}}\Gamma^{(3)}|\emptyset\rangle.
\end{equation}

For each $\beta\sim \alpha$ appearing in the sum of equation~\eqref{eq:weighted_sum}, we are going to move the $\psi^*_{\beta^{\mathrm{y}}}$ between $B^-$ and $B^+$. We separate the cases $\beta^{\mathrm{x}}<i$ and $\beta^{\mathrm{x}}>i$ to apply part~1 of the lemma:
\begin{align*}
(\mathcal{C}\mathcal{K})(\alpha',\alpha)&=\sum\limits_{\beta\sim \alpha}{\mathcal{K}(\beta,\alpha)\mathcal{C}(\alpha',\beta)} \\
&=\sum\limits_{\substack{\beta\sim \alpha \\ \beta^{\mathrm{x}}<i}}{\mathcal{K}(\beta,\alpha)\mathcal{C}(\alpha',\beta)}+
\sum\limits_{\substack{\beta\sim \alpha \\ \beta^{\mathrm{x}}>i}}{\mathcal{K}(\beta,\alpha)\mathcal{C}(\alpha',\beta)} \\
&=-\frac{1}{Z}\langle\emptyset|\Gamma^{(1)}\psi_{k'}\Gamma^{(2)}B^-\psi_k^*B^+\Gamma^{(3)}|\emptyset\rangle
+\frac{1}{Z}\langle\emptyset|\Gamma^{(1)}\psi_{k'}\Gamma^{(2)}B^-\psi_k^*B^+\Gamma^{(3)}|\emptyset\rangle \\
&=0.
\end{align*}

\emph{Case 2 : $i'>i$.}

Here $\psi_{k'}$ is to the left of any $\psi_{\beta^{\mathrm{y}}}^*$, so each $\mathcal{C}(\alpha',\beta)$ comes with a minus sign. Using here again part~1 of the lemma to move $\psi_{\beta^{\mathrm{y}}}^*$ between $B^-$ and $B^+$ for all three terms, we get similarly that $(\mathcal{C}\mathcal{K})(\alpha',\alpha)=0$ in this case. \\

\emph{Case 3 : $i'=i$.}

If $\beta$ is at abscissa $i-1$, then $\mathcal{C}(\alpha',\beta)$ can be written in the following form:
\begin{equation}
\mathcal{C}(\alpha',\beta)=-\frac{1}{Z}\langle\emptyset|\Gamma^{(1)}\psi^*_{\beta^{\mathrm{y}}}B^-\psi_{k'}B^+\Gamma^{(2)}|\emptyset\rangle,
\end{equation}
where the minus sign comes from the fact that $\psi^*_{\beta^{\mathrm{y}}}$ appears before $\psi_{k'}$.

If $\beta\sim \alpha$ is at abscissa $i+1$, we have:
\begin{equation}
\mathcal{C}(\alpha',\beta)=\frac{1}{Z}\langle\emptyset|\Gamma^{(1)}B^-\psi_{k'}B^+\psi^*_{\beta^{\mathrm{y}}}\Gamma^{(2)}|\emptyset\rangle.
\end{equation}

Applying again part~1 of the lemma, we get:
\begin{align*}
(\mathcal{C}\mathcal{K})(\alpha',\alpha)&=\sum\limits_{\beta\sim \alpha}{\mathcal{K}(\beta,\alpha)\mathcal{C}(\alpha',\beta)} \\
&=\sum\limits_{\substack{\beta\sim \alpha \\ \beta^{\mathrm{x}}<i}}{\mathcal{K}(\beta,\alpha)\mathcal{C}(\alpha',\beta)}+
\sum\limits_{\substack{\beta\sim \alpha \\ \beta^{\mathrm{x}}>i}}{\mathcal{K}(\beta,\alpha)\mathcal{C}(\alpha',\beta)} \\
&=\frac{1}{Z}\langle\emptyset|\Gamma^{(1)}\Gamma^{(2)}B^-\psi_k^*\psi_{k'}B^+\Gamma^{(3)}|\emptyset\rangle
+\frac{1}{Z}\langle\emptyset|\Gamma^{(1)}\Gamma^{(2)}B^-\psi_{k'}\psi_k^*B^+\Gamma^{(3)}|\emptyset\rangle.
\end{align*}

Using that $\psi_k^*\psi_{k'}+\psi_{k'}\psi_k^*=\delta_{k,k'}$ (Proposition~\ref{prop:anticom}) and that
\begin{equation}
Z=\langle\emptyset|\Gamma^{(1)}\Gamma^{(2)}B^-B^+\Gamma^{(3)}|\emptyset\rangle,
\end{equation}

We conclude that in this third case, $(\mathcal{C}\mathcal{K})(\alpha',\alpha)=\delta_{k,k'}$.
This concludes the proof of part 1 of the theorem : we have shown that for any even vertices 
$\alpha$ and $\alpha'$, $(\mathcal{C}\mathcal{K})(\alpha',\alpha)=\delta_{\alpha',\alpha}$.

To prove part 2 of the theorem, note that if $\beta$ and $\beta'$ are two odd
vertices in $\mathfrak{M}$, then
\begin{equation}
(\mathcal{K}\mathcal{C})(\beta,\beta')=\sum\limits_{\alpha\sim \beta}{\mathcal{K}(\beta,\alpha)\mathcal{C}(\alpha,\beta')}.
\end{equation}

We will only treat the case when $\beta$ is on the left boundary. The case when $\beta$ is on the right boundary is similar to this one. 
The case when $\beta$ is in the bulk is similar to the proof of part 1 of the theorem, by making use this time of part 2 of the lemma.
Let $(2\ell-1,k)$ be the coordinates of $\beta$ and $(i',k')$ be the coordinates
of $\beta'$. Since $\beta\in\mathfrak{M}$, we have $k>0$. 
Note that $\beta$ has neighbours only to its right. Again we distinguish according to the position of $\beta'$. \\

\emph{Case 1 : $i'>2\ell-1$.}

If $\alpha\sim \beta$, $\mathcal{C}(\alpha,\beta')$ can be written in the following form :
\begin{equation}
\mathcal{C}(\alpha,\beta')=\frac{1}{Z}\langle\emptyset|B^+\psi_{\alpha^{\mathrm{y}}}\Gamma^{(1)}
\psi^*_{k'}\Gamma^{(2)}|\emptyset\rangle.
\end{equation}
Using the second equation of part~2 of the lemma, we obtain:
\begin{align*}
(\mathcal{K}\mathcal{C})(\beta,\beta')&=\sum\limits_{\alpha\sim \beta}{\mathcal{K}(\beta,\alpha)\mathcal{C}(\alpha,\beta')} \\
&=\sum\limits_{\substack{\alpha\sim \beta \\ \alpha^{\mathrm{x}}>2\ell-1}}{\mathcal{K}(\beta,\alpha)\mathcal{C}(\alpha,\beta')} \\
&=-\frac{1}{Z}\langle\emptyset|\psi_kB^+\Gamma^{(1)}
\psi^*_{k'}\Gamma^{(2)}|\emptyset\rangle.
\end{align*}

Since $k>0$, we have $\langle\emptyset|\psi_k=0$. So in this case, $(\mathcal{K}\mathcal{C})(\beta,\beta')=0$. \\

\emph{Case 2 : $i'=2\ell-1$.}

If $\alpha\sim \beta$, $\mathcal{C}(\alpha,\beta')$ can be written in the following form :
\begin{equation}
\mathcal{C}(\alpha,\beta')=-\frac{1}{Z}\langle\emptyset|\psi^*_{k'}B^+\psi_{\alpha^{\mathrm{y}}}\Gamma^{(1)}
|\emptyset\rangle.
\end{equation}

Using the second equation of part~2 of the lemma, we obtain :
\begin{equation}
(\mathcal{K}\mathcal{C})(\beta,\beta')=+\frac{1}{Z}\langle\emptyset|\psi^*_{k'}\psi_kB^+\Gamma^{(1)}
|\emptyset\rangle.
\end{equation}

Using Proposition~\ref{prop:anticom}, we get :
\begin{align*}
(\mathcal{K}\mathcal{C})(\beta,\beta')&=-\frac{1}{Z}\langle\emptyset|\psi_k\psi^*_{k'}B^+\Gamma^{(1)}
|\emptyset\rangle + \delta_{k,k'}\frac{1}{Z}\langle\emptyset|B^+\Gamma^{(1)}
|\emptyset\rangle \\
&=0+\delta_{k,k'}\frac{1}{Z}Z=\delta_{k,k'}.
\end{align*}

To sum up, if $\beta$ is an odd vertex in $\mathfrak{M}$ on the left boundary
and $\beta'$ is any other odd vertex in $\mathfrak{M}$,
$(\mathcal{K}\mathcal{C})_{(\beta,\beta')}=\delta_{\beta,\beta'}$.
\end{proof}

\begin{rem}
  If $\tilde{\mathcal{C}}$ (resp.~$\tilde{\mathcal{K}}$) is the restriction of
  $\mathcal{C}$ (resp.~$\mathcal{K}$) to
  columns (resp.~rows) indexed by the set $\mathfrak{M}$ of matched odd
  vertices, then $\tilde{\mathcal{C}}$ is a left and right inverse of
  $\tilde{\mathcal{K}}$. Indeed, the extra terms coming from vertices not in
  $\mathfrak{M}$ vanish when evaluated against left and right vacuums.
\end{rem}

\section{Lozenge and domino tilings}
\label{sec:particular}

In this section we discuss how the RYG dimer model gives rise, upon
taking a constant or an alternating $LR$-sequence and performing a
simple change of coordinates, to a class of lozenge or domino tilings
of the plane, respectively. The class of lozenge tilings, discussed in
Section~\ref{sec:lozenge}, contains in particular plane partitions and
was already studied in~\cite{OR1,OR2}. The class of domino tilings,
discussed in Section~\ref{sec:steep}, is that of \emph{steep
  tilings}~\cite{BCC}, and contains tilings of the Aztec diamond and
pyramid partitions as special cases.

\subsection{Lozenge tilings}\label{sec:lozenge}
In this section we consider the case where the $LR$-sequence $\underline{a}$ is constant. We assume that $\underline{a}=L^{r-\ell+1}$ (it is easy to check that taking $\underline{a}=R^{r-\ell+1}$ gives rise to the same model up to a vertical reflection and an inversion of the sign sequence).
In this case the only two elementary RYG involved are of type $L+$ and $L-$. According to the discussion of Section~\ref{sec:bos}, the transfer matrix operators corresponding to these two graphs can be interpreted as the operators $\Gamma_{L+}$ and $\Gamma_{L-}$, whose action on the Bosonic Fock space  interlaces a partition ``upwards'' or ``downwards'' respectively, see~\eqref{eq:gdef}. It follows that, for each sign sequence $\underline{b}\in\{+,-\}^{r-\ell+1}$, admissible dimer coverings of $\RYG(\ell,r, \underline{a}, \underline{b})$ are in bijection with sequences $(\lambda^{(i)})_{\ell\leq i\leq r+1}$ of integer  partitions such that $\lambda^{(i)} \substack{\prec \\ \succ} \lambda^{(i+1)}$ where the interlacing relation is $\prec$ or $\succ$  if $b_i=+$ or $b_i=-$, respectively. The correspondence goes via Maya diagrams and is the one described in Section~\ref{sec:bos}. It is well-known that such sequences are in bijection with certain lozenge tilings of the plane, see e.g.~\cite{OR1}, so the reader may be in familiar ground. In the rest of Section~\ref{sec:lozenge}, we will just sketch how to recover those lozenge tilings from the rail yard graphs.

We first apply the following coordinate transformation to each vertex of the rail yard graph $\RYG(\ell, r, \underline{a}, \underline{b})$:
\begin{align}\label{eq:lozengeTransform}
(2\ell-1 + x, y) \longmapsto \left( 2\ell-1  + \frac{1}{2} \lceil x/2 \rceil + \frac{\sqrt5}{4}\lfloor x/2 \rfloor, y -\frac{1}{2}\sum_{i=1}^{\left\lceil x/2 \right\rceil} (-1)^{b_i} \right).
\end{align}
Figure~\ref{fig:lozengeScalingRules} displays the effect of this transformation on the elementary graphs of type $L+$ and $L-$. The transformation is designed in such a way that after the transformation all angles between incident edges are equal to $\frac{2\pi}{3}$, and that all edges have equal length (equal to $\frac{\sqrt{5}}{4}$). Therefore the concatenation of these graphs generate a portion of the regular hexagonal lattice, see Figure~\ref{fig:lozengeScaled}--Left. Since the planar dual of the hexagonal lattice is a triangular lattice,  any dimer covering of the hexagonal lattice induces a covering of this triangular lattice by lozenges (each dimer connects two vertices in the primal, that correspond to two triangles in the dual, and the union of these two triangles forms a lozenge).  See Figure~\ref{fig:lozengeScaled}--Right. We thus recover the promised class of lozenge tilings of the plane.

Note that the fundamental covering of $\RYG(\ell, r,
\underline{a},\underline{b})$ projects (via the coordinate
transformation~\eqref{eq:lozengeTransform} and the dualization to lozenges) to a
lozenge tiling in which all lozenges under a certain separating path are of
``horizontal'' type, while all lozenges above this path are of one of the two
``vertical'' types, see Figure~\ref{fig:lozengeFundamental}. This separating
path is the image under the transformation~\eqref{eq:lozengeTransform} of the
horizontal axis in the original embedding of $\RYG(\ell, r, \underline{a},\underline{b})$. Equivalently, this path is a lattice path taking up (+) or down (-) steps,  encoded by the sequence $\underline{b}$. This path and its image are represented by red dotted paths on the figures of this section.

Finally, note that Figure~\ref{fig:lozengeFundamental} can naturally be seen as
a 3-dimensional picture, namely as a portion of the boundary of the region
$(\mathbb{R}_+^2\setminus \lambda)\times \mathbb{R}_+$, where $\lambda$ is the
shape of an integer partition. Here the contour of the shape $\lambda$ coincides
with the separating path just mentioned, \text{i.e.} the partition $\lambda$ is
encoded in Russian notation by the sequence $\underline{b}$, see
Figure~\ref{fig:lozengeFundamental} again. Note also that all the lozenge
tilings corresponding to pure dimer coverings of $\RYG(\ell,r,\underline{a}, \underline{b})$ are obtained by ``adding cubes'' to this 3-dimensional diagram in such a way that the heights of cubes stay nonincreasing on horizontal coordinates, going away from the axes. In particular the case $\underline{b}=+^n -^n$ corresponds to plane partitions of half-width at most $n$. This 3-dimensional interpretation is well known and is the one already considered in~\cite{OR1}.

\begin{figure}
\includegraphics[width=0.75\linewidth]{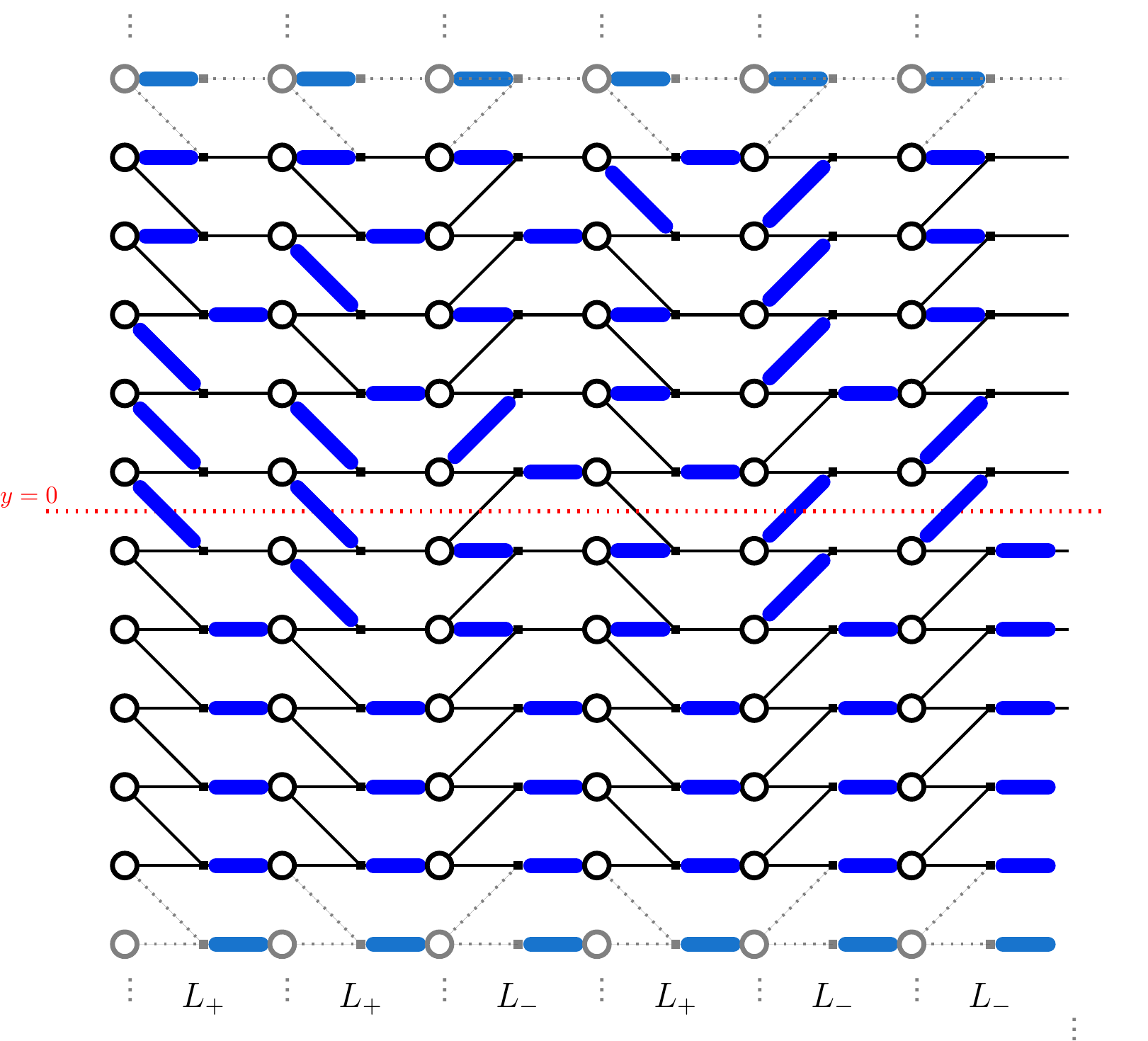}
\caption{A rail yard graph with a constant LR sequence $\underline{a}=L^6$ and sign sequence $\underline{b}=++-+--$, equipped with a pure covering.}\label{fig:lozenge}
\end{figure}
\begin{figure}
\includegraphics[width=0.6\linewidth]{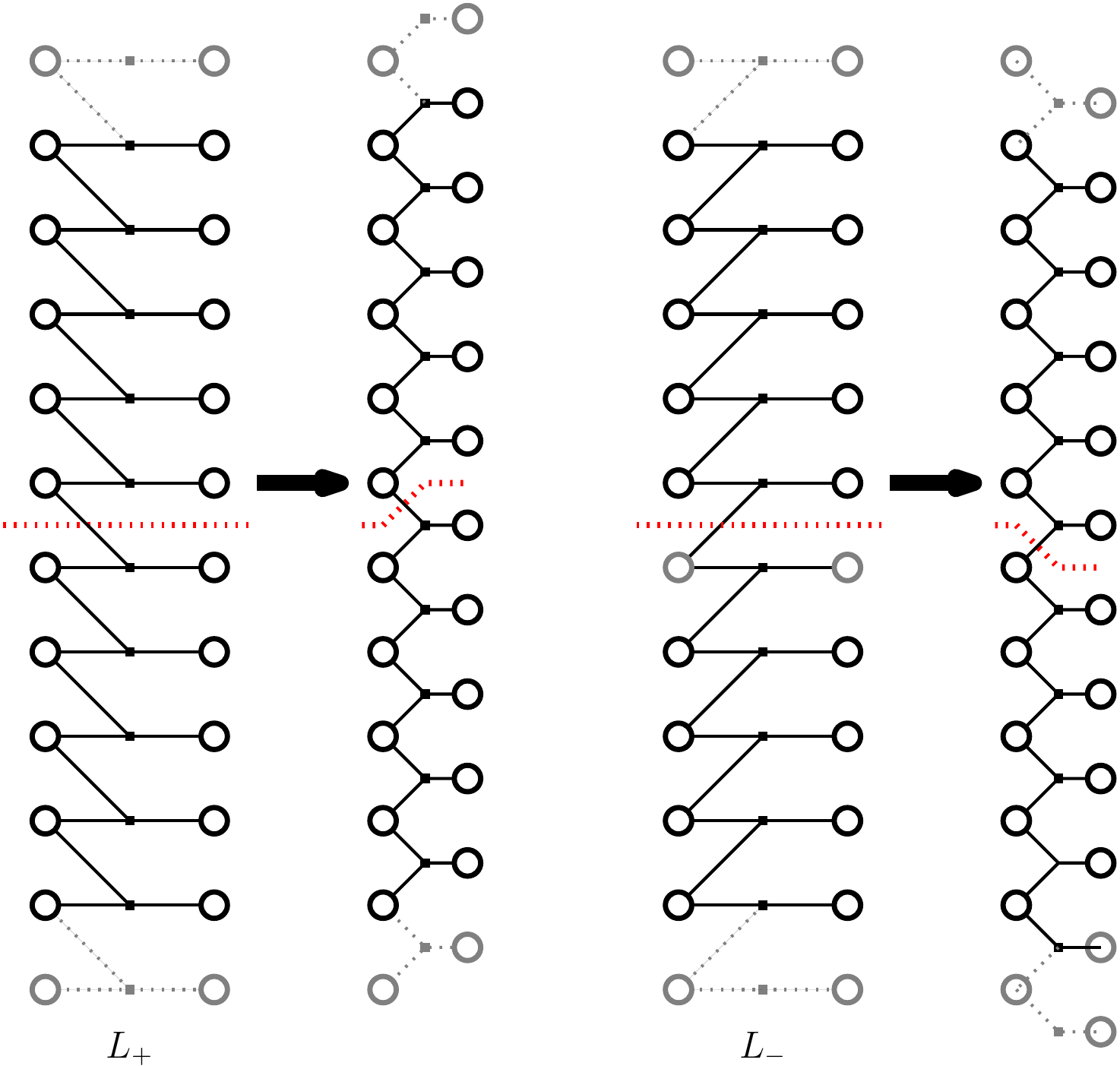}
\caption{Deformations of the elementary RYG of type $L+$ and $L-$ that generate a ``honeycomb'' lattice by concatenation.}\label{fig:lozengeScalingRules}
\end{figure}

\begin{figure}
\includegraphics[width=\linewidth]{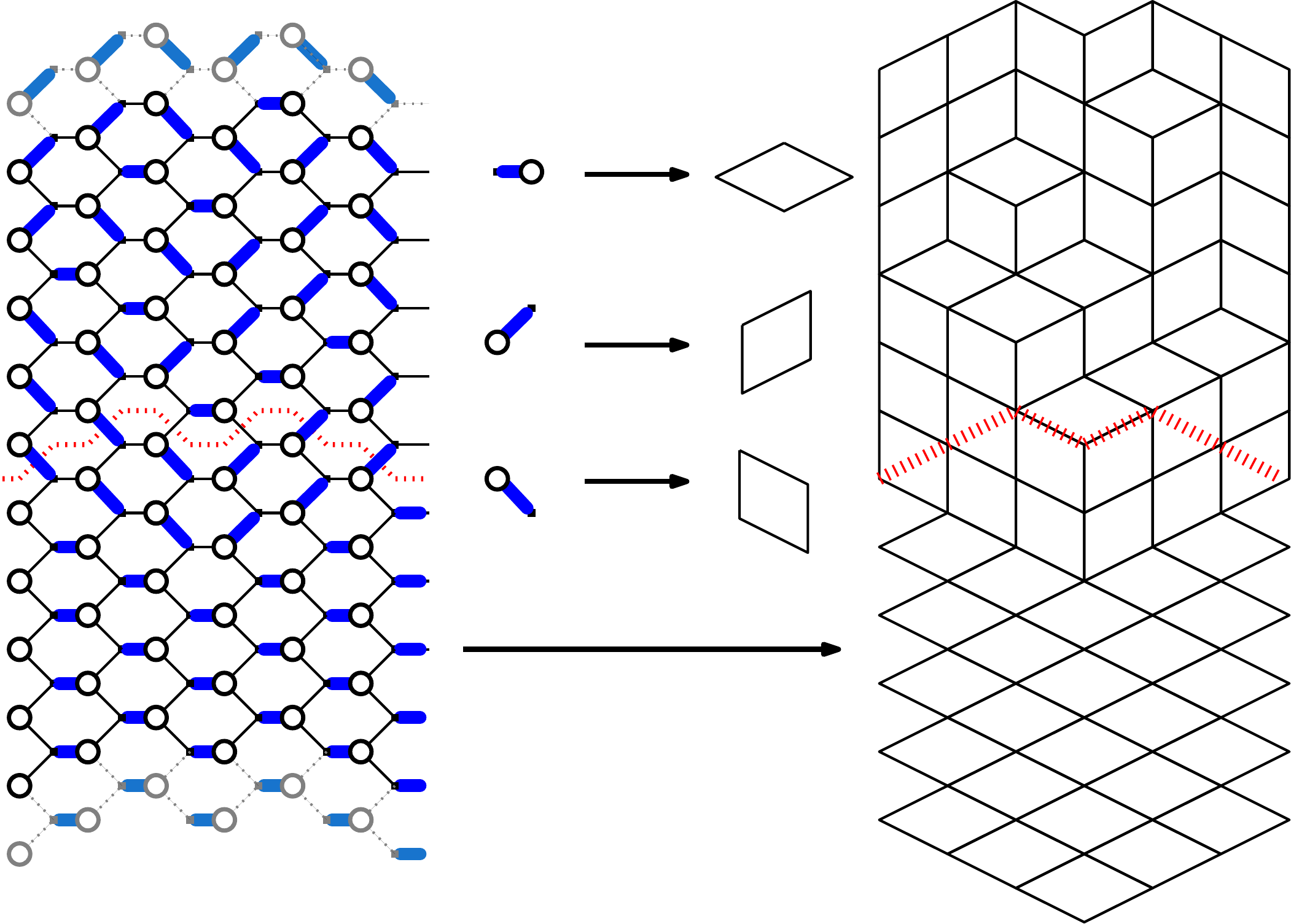}
\caption{Left: The dimer covering of Figure~\ref{fig:lozenge} as a covering of the honeycomb lattice, via the transformations of Figure~\ref{fig:lozengeScalingRules}. Right: The same objects displayed as a ``lozenge tiling'' of the plane.}\label{fig:lozengeScaled}
\end{figure}
\begin{figure}
\includegraphics[width=0.3\linewidth]{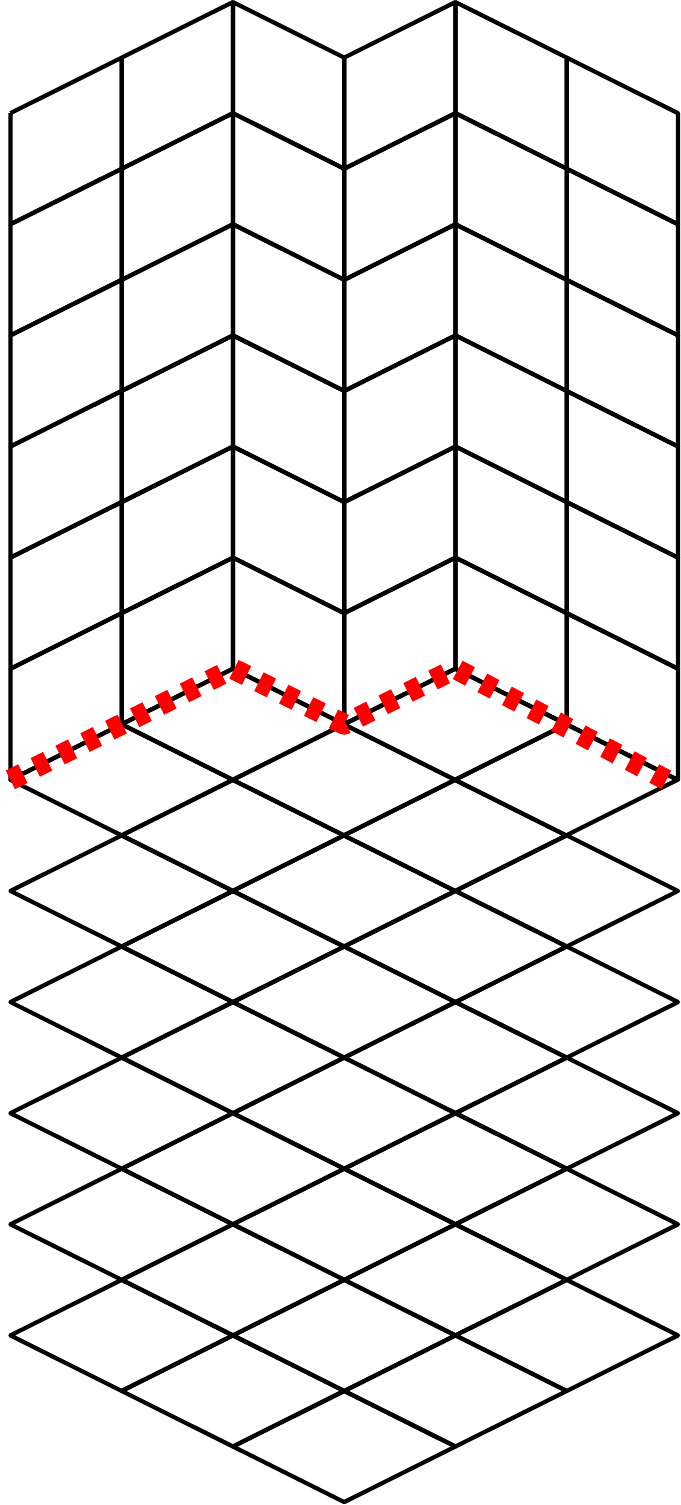}
\caption{The fundamental lozenge tiling corresponding to the case $\underline{a}=L^6$ and $\underline{b}=++-+--$.}\label{fig:lozengeFundamental}
\end{figure}

\subsection{Domino tilings} \label{sec:steep}
We now consider the case where the sequence $\underline{a}$ has even length and is an alternation of $L$ and $R$. 
 Up to an elementary symmetry we can assume that  $\underline{a}=(LR)^k$, where
 $r-\ell+1=2k$, and we fix an arbitrary sign sequence
 $\underline{b}\in\{+,-\}^{2k}$. It follows from a simple inspection of the face
 types of elementary rail yard graphs (Figure~\ref{fig:flipWeights}(a)) that all
 the inner faces of the graph $\RYG(\ell, r, \underline{a}, \underline{b})$ have degree $4$ or $8$. Moreover, for $0\leq i <k$, odd vertices located at the abscissa $x=-2\ell-1 + 4i+2$ all have degree $2$ (since they lie at the interface between an $L$-type and $R$-type elementary graphs, from left to right). These degree $2$ vertices are bounded by faces of degree $8$ on their two sides. See Figure~\ref{fig:aztecRYG} for an example.

We now let $\tilde R (\ell, r, \underline{a}, \underline{b})$ be the graph obtained by contracting all the edges incident to these inner vertices of degree $2$: in this graph all inner vertices of degree $2$ have disappeared, and all the non-boundary faces have degree $4$, see Figure~\ref{fig:aztecRYGscaled}--Left for an example. Moreover, it is easy to see that each dimer covering of $\RYG (\ell, r, \underline{a}, \underline{b})$ induces a dimer covering of $\tilde R (\ell, r, \underline{a}, \underline{b})$, with the same boundary conditions (just forget dimers on contracted edges and leave the other dimers as they were). The coordinate transformation on non deleted vertices that goes from $\RYG (\ell, r, \underline{a}, \underline{b})$ to $\tilde R (\ell, r, \underline{a}, \underline{b})$ is given by:
\begin{align}\label{eq:coordinateContractionSteep}
(-2\ell-1 + x,y) \longmapsto  \left(-2\ell-1 + x - 2 \left\lfloor \frac{x+2}{4}\right\rfloor,y\right).
\end{align}
Since all the inner vertices and faces in this new graph have degree $4$, this new graph is isomorphic to a portion of the square lattice. We let the reader check that this isomorphism can be made explicit by composing \eqref{eq:coordinateContractionSteep} with the transformation: 
\begin{align}\label{eq:coordinateTiltingSteep}
(-2\ell-1 + x,y) \longmapsto  \left(-2\ell-1 + x + y - K_x ,y - K_x\right),
\end{align}
where $K_x=\sum_{i=1}^{x} (-1)^{i+b_i}$. See Figure~\ref{fig:aztecRYGscaled} for an explicit example. The image of $\tilde R (\ell, r, \underline{a}, \underline{b})$ via these transformations is a portion of the square lattice lying in the \emph{oblique strip} $\{(X,Y): |Y-X+2\ell+1| \leq 2k\}$, see again Figure~\ref{fig:aztecRYGscaled}.

Similarly as in Section~\ref{sec:lozenge}, we can dualize this picture
to switch between a description in terms of dimers to one in terms of
tilings. The dual of the square lattice is again the square lattice,
and any dimer in the primal induces a \emph{domino} (the union of two
adjacent squares) in the dual. We thus recover a model of tilings of
the oblique strip by dominos, which are exactly the \emph{steep
  tilings} introduced in~\cite{BCC}. We invite the reader to consult
this reference for a thorough discussion on steep tilings, their link
with height functions, their encoding in terms of partitions,
etc. Here we just mention again that, not only do we recover here the
enumerative results already proved in~\cite{BCC}, but we obtain the inverse
Kasteleyn matrix and dimer correlations for these models, up to the
changes of coordinates described above.

Two particular subclasses of steep tilings had been considered
previously.  The first one is the class of domino tilings of the Aztec
diamond, which corresponds to the case where the sequence
$\underline{b}$ is also alternating. As far as we know, this is the
only case for which the inverse Kasteleyn matrix has been computed
before~\cite{Helfgott,MR3197656}. The other one is given by pyramid
partitions~\cite{Young:pyramid}, that correspond to the case where
$\underline{b}=+^\infty-^\infty$. See~\cite[Section~4.2]{BCC} for
their connection with steep tilings. We leave as an exercise the task
of making fully explicit the changes of coordinates and the
calculations for pyramid partitions, similarly as we will do for the
Aztec diamond in the next section.

\begin{rem}
  Going from rail yard graphs to steep tilings induces, in a sense, no
  loss of generality. Indeed, by forbidding diagonal dimers in the
  column $i$ of a rail yard graph (by taking their weight $x_i$ to be
  zero), this column becomes trivial in the sense that its two
  boundary states are necessarily equal, and hence it can be
  contracted in the graph (i.e.\ we can drop the corresponding
  elements from the LR and sign sequences). Starting from an infinite
  alternating LR sequence, it is possible to produce any LR sequence
  by performing such contractions, and hence to obtain any RYG. The
  coordinates of rail yard graphs however allow to express the dimer
  correlations in a more compact form, which is one of our motivations
  for introducing them.
  \label{rem:rygsteep}
\end{rem}

\section{The Aztec diamond revisited}
\label{sec:aztecCorr}

\begin{figure}
\includegraphics[width=0.75\linewidth]{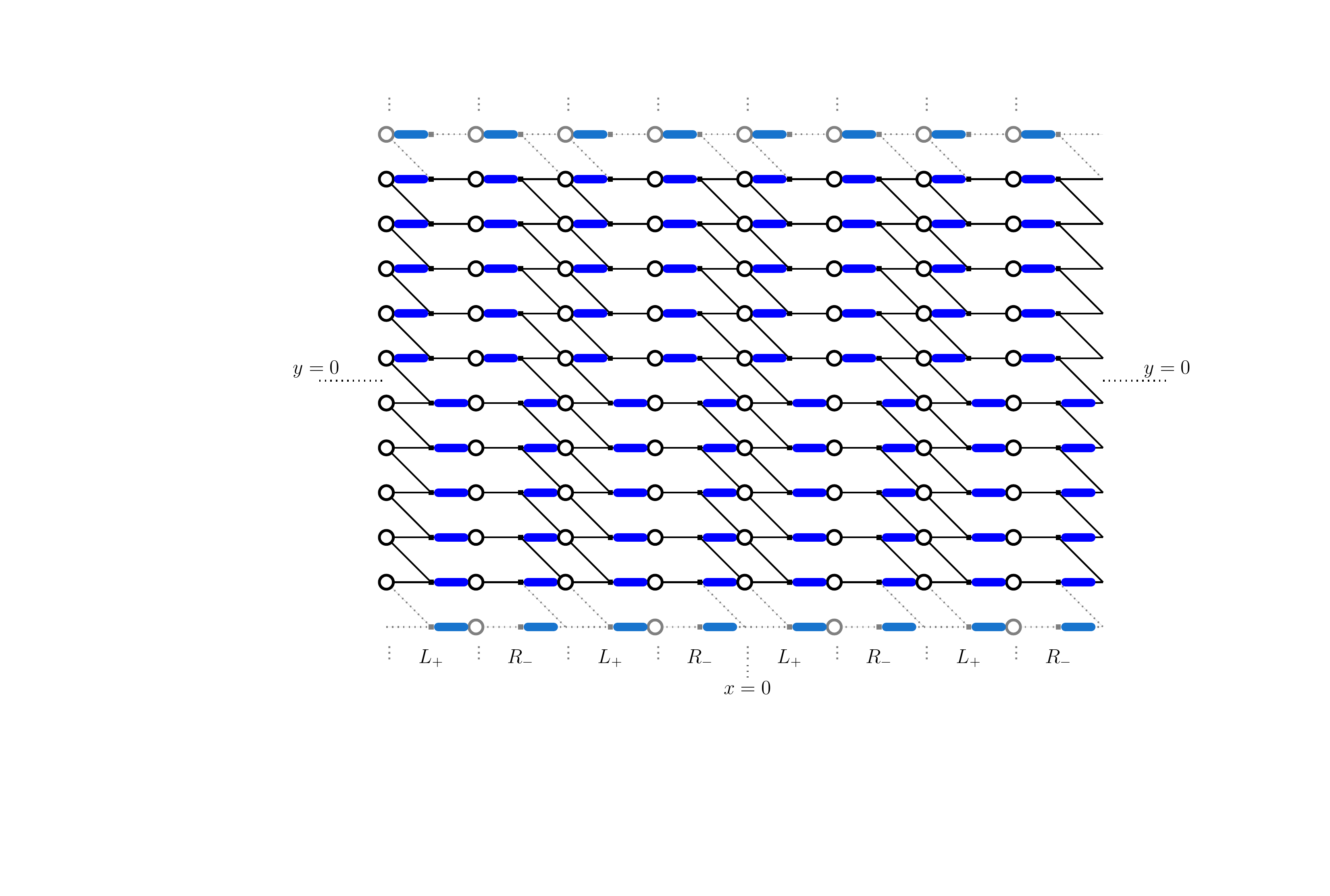}
\caption{A rail yard graph with LR sequence $\underline{a}=(LR)^n$ and sign sequence $\underline{b}=(+-)^n$ for $n=4$, equipped with its fundamental dimer covering.}\label{fig:aztecRYG}
\end{figure}

\begin{figure}
\includegraphics[width=\linewidth]{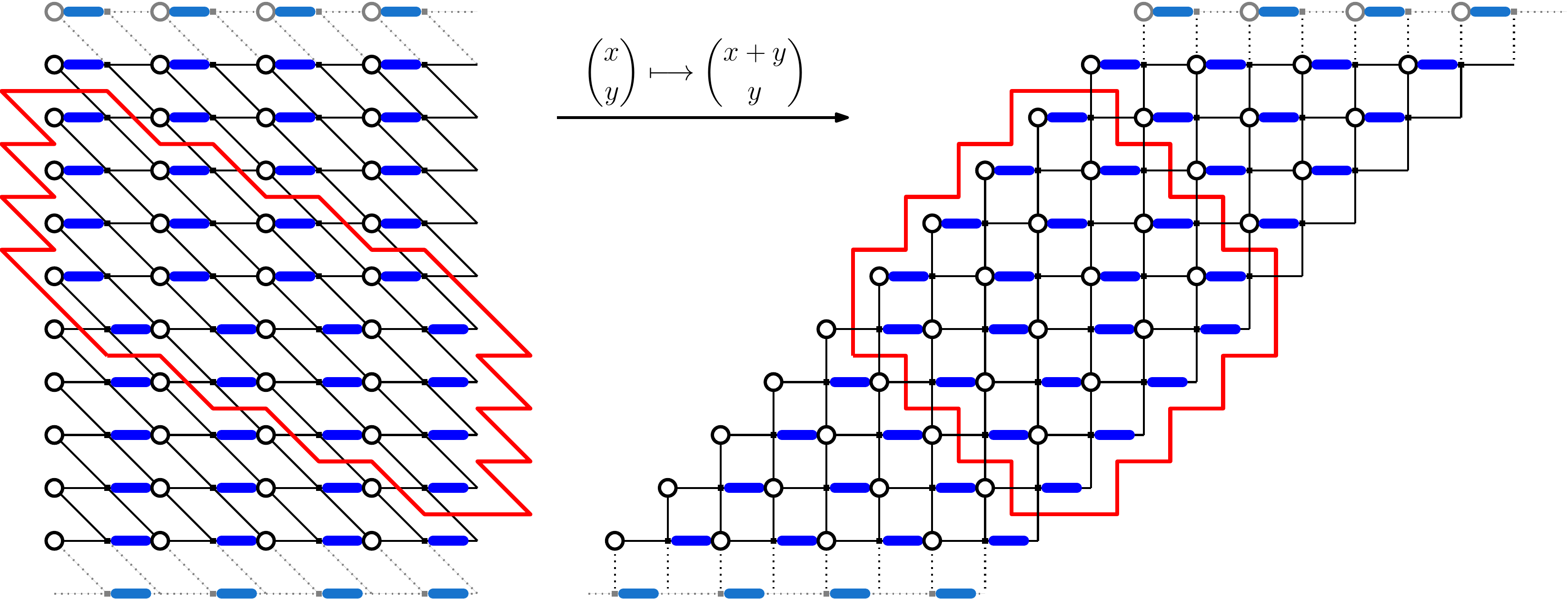}
\caption{Left: The graph obtained by contracting all the vertices of degree $2$ in the RYG of Figure~\ref{fig:aztecRYG}, that is, all the vertices of abscissa congruent to $2$ mod $4$. Right: the image of the graph on the left by the linear transformation $\binom{x}{y} \mapsto \binom{x+y}{y}$. The graph is a portion of a square lattice of mesh $1$. On both sides, the region of the graph where a covering may differ from the fundamental one is represented in red. One recognizes the shape of the Aztec diamond of size $n=4$. Note that the coordinate transformation on uncontracted vertices to go directly from the graph of Figure~\ref{fig:aztecRYG} to the graph on the right is given by $\binom{x}{y} \mapsto \binom{\phi(x)+y}{y}$ where $\phi(x) = x-2\lfloor \frac{x+2}{4}\rfloor$ if $x\geq 0$ and $\phi(-x)=-\phi(x)$.
}\label{fig:aztecRYGscaled} 
\end{figure}

In order to illustrate Theorem~\ref{thm:det_corr} on a concrete
example, we concentrate in this section on the particular case of
domino tilings of the Aztec diamond of size
$n$~\cite{EKLP1992,EKLP1992b}, which can be obtained as a rail yard
graph associated with the sequences $\underline{a}=(LR)^n$ and
$\underline{b}=(+-)^n$, see Figures~\ref{fig:aztecRYG}
and~\ref{fig:aztecRYGscaled}. We will suppose that $\ell=0$ and
$r=2n-1$, and denote by $G_n$ the corresponding rail yard graph. Note
that, though $G_n$ is an infinite graph, its admits only a finite
number ($2^{n(n+1)/2}$) of pure dimer coverings. Those pure dimer
coverings coincide with the fundamental covering outside a finite
region which is isomorphic to the Aztec diamond graph of size $n$,
see~\cite[Section~4.1]{BCC} for a discussion of this phenomenon in the
language of steep tilings. The Aztec diamonds are essentially the only
RYGs having a finite number of pure dimer configurations (this can be
seen from the enumerative results of Section~\ref{sec:enum} and
Remark~\ref{rem:rygsteep}).

Let us now discuss the probability distributions over domino tilings
of the Aztec diamond that we are considering. Recall the
definition~\eqref{eq:xweights} of the weight of a configuration in the
multivariate RYG dimer model.  For a generic sequence
$x_0,\ldots,x_{2n-1}$, we obtain the so-called Stanley weighting
scheme~\cite{Propp:talk,BYYangPhD}, see also \cite[Remark~2]{BCC}. The
partition function reads
\begin{equation}
  \label{eq:Zstan}
  Z(G_n,\underline{x}) = \prod_{\substack{0 \leq i < j \leq 2n-1 \\ \text{$i$ even, $j$ odd}}}
    (1 + x_i x_j)
\end{equation}
which is a polynomial in the $x$'s. By specialization we obtain the
following distributions considered originally in~\cite{EKLP1992}:
\begin{itemize}
\item the uniform distribution, obtained by taking $x_i=1$ for all
  $i$,
\item the \emph{biased} distribution, obtained by taking $x_i=1$ for
  $i$ even and $x_i=\lambda>0$ for $i$ odd (or equivalently
  $x_i=\sqrt{\lambda}$ for all $i$): this corresponds to attaching a
  weight $\lambda$ to each pair of diagonal dimers (which become
  vertical dominos in the Aztec diamond picture),
\item \label{page:qvol} the so-called $q^{\text{vol}}$ distribution, obtained by taking
  $x_i=q^i$ for $i$ odd and $x_i=q^{-i}$ for $i$ even (which is the
  $q$-RYG specialization), and more generally the biased
  $q^{\text{vol}}$ distribution, obtained by taking $x_i=\lambda q^i$
  for $i$ odd and $x_i=q^{-i}$ for $i$ even.
\end{itemize}
Let us mention that our present approach does not seem to apply to the
two-periodic weighting considered in \cite[Section~6]{MR3197656}, nor
to the weightings considered in \cite{MR3228361}.

The study of correlations in domino tilings of the Aztec diamond has
been a popular topic (especially among the members of the ``domino
forum'') and there are many previously known results, published or
unpublished. In the rest of this section, we rederive several of these
results as consequences of our general Theorem~\ref{thm:det_corr}.

\subsection{The biased creation rate and edge-probability generating
  function}
\label{subsec:edgep}

Let us consider the Aztec diamond of size $n$ in the natural
coordinates with the origin at the center. We are interested in the
probability to find a domino of a given type at a given position,
under the biased distribution. Recall that we may distinguish four
types of dominos: north-, south-, west- and
east-going~\cite{CEP1996}. By symmetry it is enough to consider only
one type of domino, and we denote by $P_\lambda(x,y,n)$ the
probability that $(x-1/2,y)$ is the center of a west-going domino in a
biased random tiling of the Aztec diamond of size $n$. Due to parity
constraints, this probability vanishes unless $x$ and $y$ are integers
such that $x+y+n$ is odd. As apparent from Figures~\ref{fig:aztecRYG}
and~\ref{fig:aztecRYGscaled}, west-going dominos correspond in the RYG
setting to diagonal dimers in columns of type $L+$ (which makes them
easier than north-going dominos to deal with, since there can be no
spurrious diagonal dimers outside the ``interesting'' region).

Getting an expression for $P_\lambda(x,y,n)$ (or its analogues for
other domino orientations) ameneable to asymptotic analysis has been
of interest to several people, for instance it is used
in~\cite{CEP1996} as a way to proving the arctic circle theorem and
its generalization to arbitrary $\lambda$. However, this paper states
the required expressions without proofs, and refers instead to a
preprint by Gessel, Ionescu and Propp that has not appeared so
far. The proof for $\lambda=1$ (uniform distribution) can be found in
Helfgott's senior thesis~\cite{Helfgott}. More recently, an expression
for the related generating function (still for $\lambda=1$ only) was
proved by two methods in Du's master thesis~\cite{Du:thesis}, and was
used in~\cite{BP11} as yet another route to the arctic circle theorem
(we note that those two references mention another lost ``DGIP''
preprint). At the suggestion of James Propp, which we thank for
pointing out this gap in the literature, we will explain how
expressions for $P_\lambda(x,y,n)$ (for general $\lambda$) can be
obtained as applications of our Theorem~\ref{thm:det_corr}.

The first step consists in going from dominos to dimers on RYGs. Using
the identification discussed above, a west-going domino centered on
$(x-1/2,y)$ corresponds to a dimer on the edge $(\alpha,\beta)$ with
\begin{equation}
  \label{eq:coordchange}
  \alpha=(2m,y-1/2), \qquad \beta=(2m-1,y+1/2), \qquad m = \frac{n+x-y+1}{2} \in [1..n].
\end{equation}
By Theorem~\ref{thm:det_corr} we immediately deduce the expression
\begin{multline}
  \label{eq:Plamb}
  P_{\lambda}(x,y,n) = \\ \frac{1}{(2i\pi)^2}
  \oint_{C_z}\oint_{C_w}
  \left(\frac{1-w}{1-z}\right)^m\left(\frac{1+\lambda/w}{1+\lambda/z}\right)^{n+1-m}
  \frac{(w/z)^y}{(z-w)(1-w)} dz dw
\end{multline}
where $C_z$ is a positively oriented contour containing $0$ and
$-\lambda$ in its interior, but not~$1$, and $C_w$ is a positively
oriented contour containing $C_z$ in its interior. Simpler expressions
can be obtained for two related quantities: the so-called \emph{biased
  creation rate}
\begin{equation} \label{eq:creationRateL}
  \mathrm{Cr}_\lambda(x,y,n) = \frac{\lambda+1}{\lambda}\left(P_\lambda(x,y,n)
  - P_\lambda(x+1,y,n-1)\right)
\end{equation}
and the \emph{edge-probability generating function}
\begin{equation}
  \label{eq:epgf}
  \Pi_\lambda(u,v,t) = \sum_{x,y,n} P_{\lambda}(x,y,n) u^x v^y t^n.
\end{equation}
Note that, in \cite{CEP1996}, the bias $p$ is related to our $\lambda$
by $p=1/(1+\lambda)$, and the biased creation rate is expressed in
terms of north-going dominos hence the present definition is adapted
to the case of west-going dominos.

Let us first consider the biased creation rate. Taking the difference
$P_\lambda(x,y,n) - P_\lambda(x+1,y,n-1)$ in the double contour
integral~\eqref{eq:Plamb}, the integrand is multiplied by a factor
$1-\frac{1+\lambda/z}{1+\lambda/w}=\frac{\lambda(z-w)}{z(w+\lambda)}$,
leading to a cancellation of the denominator $(z-w)$, hence to the factorization
\begin{equation}\label{eq:twofactorsL}
  \begin{split}
    \mathrm{Cr}_\lambda(x,y,n)=& (1+\lambda) \left(
      \oint_{C_z}\left(1-
        z\right)^{-m}\left(z+\lambda\right)^{-(n+1-m)} z^{n+1-m-y}
      \frac{\mathrm{d}z}{2i\pi z}
    \right) \\
    &\times \left( \oint_{C_w}\left(1-
        w\right)^{m-1}\left(w+\lambda\right)^{n-m} w^{y+m-n}
      \frac{\mathrm{d}w}{2i\pi w} \right).
  \end{split}
\end{equation}
Note that, by Cauchy's residue formula, the second integral is equal
to the coefficient of $w^{n-m-y}$ in $(1-w)^{m-1}(w+\lambda)^{n-m}$,
which is a Krawtchouk polynomial.  The first integral is of a similar
nature, except that the role of the zeros $\{1,-\lambda\}$ and of the
poles $\{0, \infty\}$ in the integrand are exchanged.
This suggests to perform the change of variables $z=\frac{1-u}{1+\lambda^{-1}u}$
 which transforms the first integral in~\eqref{eq:twofactorsL} into:
 \begin{equation}
   (1+\lambda)^{-n} \lambda^{m-1} 
   \oint_{C_u} u^{1-m} (1+\lambda^{-1}u)^{m+y-1} (1-u)^{n-m-y} \frac{-du}{2i\pi u},
\end{equation}
where $C_u$ is a small negatively oriented contour encircling $u=0$
(note that the image of the negatively oriented contour $C_u$ under
the change of variable $z=\frac{1-u}{1+\lambda^{-1}u}$ is a negatively
oriented contour encircling $z=1$, which is homotopic in
$\mathbb{C}\cup\{\infty\}\setminus\{1,-\lambda\}$ to the positively
oriented contour $C_z$). Using Cauchy's residue formula, the second
integral is equal to the coefficient of $u^{m-1}$ in
$(1+\lambda^{-1}u)^{m+y-1} (1-u)^{n-m-y}$.
Redistributing powers of $\lambda$ we finally obtain:
\begin{prop} The biased creation rate reads
  \begin{equation}
    \mathrm{Cr}_\lambda (x, y, n) = \left(\frac{\lambda}{1+\lambda}\right)^{n-1} c_\lambda(A,B,n-1) c_\lambda(B,A,n-1)
    \label{eq:bcr}
  \end{equation}
  where $A = n-m-y = (n-1-x-y)/2$, $B = m-1= (n-1+x-y)/2$ and $c_\lambda(A,B,n)$
  is the coefficient of $z^A$ in $(1-z)^{B}(1+\lambda^{-1}z)^{n-B}$.
\end{prop}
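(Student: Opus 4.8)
The plan is to carry on from the factorisation \eqref{eq:twofactorsL} established just before the statement. There we already have $\mathrm{Cr}_\lambda(x,y,n)$ written as $(1+\lambda)$ times the product of two \emph{single} contour integrals, one in $z$ and one in $w$, and the substitution $z=\frac{1-u}{1+\lambda^{-1}u}$ has turned the $z$-integral into $(1+\lambda)^{-n}\lambda^{m-1}$ times a small contour integral around $u=0$ (one should check, as indicated in the text, that the image of the $z$-contour is homotopic in $\mathbb{C}\cup\{\infty\}\setminus\{1,-\lambda\}$ to such an admissible loop, so that nothing is lost). After this, essentially only bookkeeping remains.

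First I would apply Cauchy's residue formula to the two single integrals. The $w$-integral equals the coefficient of $w^{n-m-y}$ in $(1-w)^{m-1}(w+\lambda)^{n-m}$; writing $(w+\lambda)^{n-m}=\lambda^{n-m}(1+\lambda^{-1}w)^{n-m}$ and setting $A=n-m-y$, $B=m-1$, so that $(n-1)-B=n-m$, this is $\lambda^{n-m}$ times the coefficient of $w^{A}$ in $(1-w)^{B}(1+\lambda^{-1}w)^{(n-1)-B}$, i.e.\ $\lambda^{n-m}\,c_\lambda(A,B,n-1)$ by the very definition of $c_\lambda$. Likewise the $u$-integral equals the coefficient of $u^{m-1}$ in $(1-u)^{n-m-y}(1+\lambda^{-1}u)^{m+y-1}$; since $(n-1)-A=m+y-1$, this is the coefficient of $u^{B}$ in $(1-u)^{A}(1+\lambda^{-1}u)^{(n-1)-A}$, that is $c_\lambda(B,A,n-1)$, for the same $A,B$.

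It then only remains to collect the constants. Multiplying the prefactor $(1+\lambda)$, the factor $(1+\lambda)^{-n}\lambda^{m-1}$ coming from the $z$-integral, and the factor $\lambda^{n-m}$ coming from the $w$-integral, one gets $(1+\lambda)^{1-n}\lambda^{(m-1)+(n-m)}=(1+\lambda)^{1-n}\lambda^{n-1}=\left(\frac{\lambda}{1+\lambda}\right)^{n-1}$, which together with the two coefficient extractions yields \eqref{eq:bcr}. Finally, the stated identities $A=(n-1-x-y)/2$ and $B=(n-1+x-y)/2$ are simply the substitution $m=(n+x-y+1)/2$ from \eqref{eq:coordchange}, the parity condition ``$x+y+n$ odd'' guaranteeing integrality. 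I do not expect a genuine obstacle at this stage: the conceptual work — deriving \eqref{eq:Plamb} from Theorem~\ref{thm:det_corr}, using the combination \eqref{eq:creationRateL} to cancel the $(z-w)$ denominator, and choosing the change of variables that symmetrises the two factors — has already been done; the only thing to be careful about is keeping track of the orientation of the contours (and the accompanying sign) through the substitution, and matching the exponents $(1-w)^{m-1}$, $(1+\lambda^{-1}w)^{n-m}$ to the slots of $c_\lambda$ correctly.
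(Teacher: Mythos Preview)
Your proposal is correct and follows exactly the route of the paper: both start from the factorised form \eqref{eq:twofactorsL} together with the change of variables $z=\frac{1-u}{1+\lambda^{-1}u}$ already performed in the text, apply Cauchy's formula to each single integral, and then ``redistribute powers of $\lambda$'' to reach \eqref{eq:bcr}. Your write-up simply makes explicit the constant-tracking step that the paper summarises in one sentence.
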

Note that $\frac{\lambda}{1+\lambda}=(p-1)$ and $\lambda^{-1}=\frac{p}{1-p}$, so we recover~\cite[Proposition~23]{CEP1996}, up to the exchange $p\mapsto 1-p$ and the antidiagonal reflection $(x+y,x-y)\mapsto (-x-y,x-y)$ that correspond to the fact that we consider west-going rather than north-going dominos.
\cite[Proposition~2]{CEP1996} also follows, by
taking $\lambda=1$, i.e.\ $p=1/2$.

We now turn to the edge-probability generating function. We need to
multiply \eqref{eq:Plamb} by
$u^x v^y t^n = (tu)^m (t/u)^{n+1-m} (uv)^y/t$ and sum over all
$n \geq 0$, $m\in [1..n]$ and $y \in \mathbb{Z}$. Assuming that
$\lambda<1$, $C_z$ and $C_w$ can be taken as circles with center $0$
and radiuses between $\lambda$ and $1$. For $t$ small enough and $u,v$
of modulus close to $1$, it is possible to interchange the double sum
over $n,m$ and the double contour integral to yield
\begin{equation}
  \label{eq:Pidoub}
  \Pi_\lambda(u,v,t) = \sum_{y \in \mathbb{Z}} \frac{1}{(2i\pi)^2}
  \oint_{C_z}\oint_{C_w} \phi(z,w) (uvw/z)^y \frac{dz}{z} \frac{dw}{w}
\end{equation}
where
\begin{equation}
  \label{eq:phiexp}
  \phi(z,w) =
    \frac{%
      \frac{tu(1-w)}{1-z}
    }{1-\frac{tu(1-w)}{1-z}} \cdot
    \frac{%
      \frac{t(1+\lambda/w)}{u(1+\lambda/z)}
    }{1-\frac{t(1+\lambda/w)}{u(1+\lambda/z)}} \cdot
    \frac{z w}{t(z-w)(1-w)}.
\end{equation}
The sum over all $y$ allows to get rid of one integral. More
precisely, assuming $|uv|<1$ and $|w|$ fixed, we take two different
contours $C_z$ depending on the sign of $y$, namely a circle $C_z^+$
(resp.\ $C_z^-$) of radius slightly larger (resp.\ smaller) than
$|uvw|$ if $y \geq 0$ (resp.\ $y<0$). Splitting the sum over $y$ in
two accordingly, we may interchange each sum with the integral,
resulting in
\begin{equation}
  \label{eq:Pisimp}
  \begin{split}
    \Pi_\lambda(u,v,t) &= \frac{1}{(2i\pi)^2} \oint_{C_w} \left( \left(
        \oint_{C_z^+} - \oint_{C_z^-} \right) \frac{\phi(z,w)}{z-uvw}
      dz \right) \frac{dw}{w} \\
    & = \frac{1}{2i\pi} \oint_{C_w} \phi(uvw,w) \frac{dw}{w}.
  \end{split}
\end{equation}
In the latter integral, the integrand has two poles, but only one of
them falls within $C_w$ for $t$ small and $|uv|$ close to $1$, and we
end up with:
\begin{prop}
  The biased edge-probability generating function reads
  \begin{equation}
    \label{eq:epgffin}
    \Pi_\lambda(u,v,t)=\frac{\lambda t}{(1-t/u)
      \left((1+\lambda)(1+t^2)-t(u+u^{-1})-\lambda t(v+v^{-1})\right)}.
  \end{equation}
\end{prop}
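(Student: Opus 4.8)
The plan is to start from the double contour integral \eqref{eq:Plamb} for $P_\lambda(x,y,n)$ --- which is nothing but Theorem~\ref{thm:det_corr} applied to the single edge \eqref{eq:coordchange} --- and carry out the summation defining $\Pi_\lambda(u,v,t)$ in \eqref{eq:epgf} in three successive passes, each removing one summation variable. First I would substitute \eqref{eq:Plamb} into \eqref{eq:epgf}, using $u^x v^y t^n = (tu)^m (t/u)^{n+1-m} (uv)^y / t$ with $m$ as in \eqref{eq:coordchange}, and fix $C_z,C_w$ to be circles centred at $0$ of radii in $(\lambda,1)$; for $t$ small and $|u|,|v|$ close to $1$ one has $\left|\tfrac{tu(1-w)}{1-z}\right|<1$ and $\left|\tfrac{t(1+\lambda/w)}{u(1+\lambda/z)}\right|<1$ uniformly on the contours, so the triple sum over $n\ge 0$, $m\in[1..n]$, $y\in\mathbb{Z}$ converges absolutely and can be exchanged with the integrals. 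Summing the two nested geometric series in $m$ and $n$ produces exactly the two rational prefactors in $\phi$, which is \eqref{eq:Pidoub}--\eqref{eq:phiexp}.

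The second pass is the sum over $y\in\mathbb{Z}$ in \eqref{eq:Pidoub}. I would split it into $y\ge 0$ and $y<0$ and, keeping $|w|$ fixed on $C_w$ and assuming $|uv|<1$, deform $C_z$ to a circle $C_z^+$ (resp.\ $C_z^-$) of radius slightly larger (resp.\ smaller) than $|uvw|$ according to the sign of $y$, so that $\sum_{y\ge0}(uvw/z)^y$ and $\sum_{y<0}(uvw/z)^y$ converge; summing each series and recombining, the two contour integrals differ only by the residue of the integrand at $z=uvw$, giving the single contour integral \eqref{eq:Pisimp}, namely $\Pi_\lambda(u,v,t)=\frac{1}{2i\pi}\oint_{C_w}\phi(uvw,w)\,\frac{dw}{w}$.

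The third pass is to evaluate this last integral by residues. Substituting $z=uvw$ into \eqref{eq:phiexp}, the factor $1-w$ in the numerator of $\tfrac{tu(1-w)/(1-z)}{1-tu(1-w)/(1-z)}$ cancels the factor $1-w$ coming from $t(z-w)(1-w)$, so $\phi(uvw,w)/w$ is a rational function of $w$ with exactly two simple poles, the zeros of $(1-uvw)-tu(1-w)$ and of $(uvw+\lambda)-tv(w+\lambda)$. For $t$ small and $|uv|$ close to $1$ the first lies near $1/(uv)$, outside $C_w$, while the second, $w_*=-\lambda(1-tv)/(v(u-t))$, lies near $-\lambda/(uv)$, inside $C_w$; hence $\Pi_\lambda$ equals the residue at $w_*$. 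Using $w_*+\lambda=\lambda(uv-1)/(v(u-t))$ and $\frac{d}{dw}\big[(uvw+\lambda)-tv(w+\lambda)\big]=v(u-t)$, this residue simplifies, after factoring $uv$ out of $(1-tu)v(u-t)+\lambda u(v-t)(1-tv)$, to the claimed expression \eqref{eq:epgffin}.

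The main obstacle is not the algebra but the analytic bookkeeping: one must choose the contours at each stage so that the relevant geometric series converge and so that the interchanges of the (a priori only conditionally convergent) triple sum with the double integral are legitimate, and one must keep precise track of which poles of each integrand are enclosed --- in particular the pole at $z=uvw$ in the second pass, and which of the two poles of $\phi(uvw,w)/w$ is enclosed in the third. Once the enclosed pole $w_*$ is correctly identified, the final residue computation is routine.
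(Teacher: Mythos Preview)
Your proposal is correct and follows essentially the same route as the paper: substitute the double contour integral into the generating function, sum the geometric series in $m,n$ to get \eqref{eq:Pidoub}--\eqref{eq:phiexp}, handle the $y$-sum by splitting and deforming $C_z$ to pick up the residue at $z=uvw$ (yielding \eqref{eq:Pisimp}), and finish by a single residue computation. The only minor wording issue is your claim that the ``triple sum \ldots\ converges absolutely and can be exchanged with the integrals'': the paper (and your own second pass) keeps the $y$-sum outside the integral at this stage precisely because $\sum_{y\in\mathbb{Z}}(uvw/z)^y$ does not converge on the original contour, so only the $(m,n)$-sum is interchanged in the first pass.
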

For $\lambda=1$, we recover the expression given in
\cite{Du:thesis,BP11} (again up to the change of variables needed to
pass from west-going to north-going dominos).

\begin{rem}
  The generating function associated with the biased creation rate
  takes a much more symmetric form, namely
  \begin{equation}
    \label{eq:bcrgf}
    \begin{split}
      \sum_{x,y,n} \mathrm{Cr}_{\lambda}(x,y,n) u^x v^y t^n &= \frac{\lambda+1}{\lambda} (1-t/u) \Pi_\lambda(u,v,t) \\
      &= \frac{(\lambda+1)t}{
        \left((1+\lambda)(1+t^2)-t(u+u^{-1})-\lambda
          t(v+v^{-1})\right)},
    \end{split}
  \end{equation}
  and it remains the same for other types of dominos. The
  combinatorial explanation of this phenomenon (and of the meaning of
  the term ``creation rate'') comes from the domino shuffling
  algorithm \cite{EKLP1992b}, which implies that
  $\mathrm{Cr}_{\lambda}(x,y,n)$ is the probability that, in a biased
  random tiling of the Aztec diamond of size $n$, the $2 \times 2$
  square centered at $(x,y)$ is covered by exactly two dominos
  (regardless of their orientation).
\end{rem}

\subsection{The inverse Kasteleyn matrix}
\label{subsec:aztecKasteleyn}

Chhita and Young gave in \cite[Section~5]{MR3197656} a formula for the
inverse Kasteleyn matrix of the Aztec diamond of size $n$, for
the biased $q^{\text{vol}}$ distribution (whose definition is recalled on page~\pageref{page:qvol}). Under this distribution the probability for a tiling is proportional
to $q$ to the number of flips from the fundamental
configuration times $\sqrt{\lambda}$ to the number of vertical dominos.
We now explain how
to relate their formula with the entries $\mathcal{C}_{\alpha,\beta}$.

Let $\alpha=(\alpha^{\mathrm{x}},\alpha^{\mathrm{y}})$ be an even vertex and $\beta=(\beta^{\mathrm{x}},\beta^{\mathrm{y}})$
an odd vertex, which have survived the contraction of edges. The coordinates we
will use are those of the contracted graph, so $\alpha^{\mathrm{x}}\in[0..2n-2]$ and
$\beta^{\mathrm{x}}\in[-1..2n-1]$.

In this particular case, for $k\in[-1..2n-1]$, the function $F_k(z)$ from
Equation~\eqref{eq:def_Fk} becomes:
\begin{equation}
  F_k(z) =
  \frac{1}{%
    \prod_{j=0}^{\lfloor k/2 \rfloor}(1-\sqrt{\lambda}q^{-2j}z)
    \prod_{j=\lfloor (k+1)/2\rfloor}^{n-1}(1+\sqrt{\lambda}q^{2j+1}/z)
  }.
\end{equation}

Chhita and Young use coordinates $(x_1,x_2)$ (resp.~$(y_1,y_2)$) to localize
odd (resp.~even) vertices. In their terminology, they are white and black
respectively. These coordinates correspond to axes that are along the diagonals
of Figure~\ref{fig:aztecRYGscaled} on the right.

The correspondence between the two systems of coordinates is
\begin{equation}
  \begin{cases}
    x_1 &= 1+\beta^{\mathrm{x}}+2\beta^{\mathrm{y}} \\
    x_2 &= 1+\beta^{\mathrm{x}}
  \end{cases},
  \quad
  \begin{cases}
    y_1 &= 1+\alpha^{\mathrm{x}}+2\alpha^{\mathrm{y}} \\
    y_2 &= 1+\alpha^{\mathrm{x}}
  \end{cases}.
\end{equation}

Performing the change of variable
\begin{equation}
  \zeta=-\frac{q^{y_2-2}}{w},\quad \theta = -\frac{q^{y_2-2}}{z},
\end{equation}
in the integral defining
$K^{-1}_{\text{col}}$ in~\cite[Theorem~5.1]{MR3197656} 
in the case when $x_2 \leq y_2$,
one recovers the same factors for the rational fraction in $\zeta$ and $\theta$ as for
$G_{\alpha,\beta}(\zeta,\theta)$, up to possibly numerical multiplicative constants. One
has just to check that the contours enclose the same sets of poles. Under the
change of variables, the contour $\Gamma_{1,q}$ becomes a contour for $\zeta$
enclosing
$-\sqrt{\lambda}q^{y_2},-\sqrt{\lambda}q^{y_2+2},\dots,-\sqrt{\lambda}q^{2n-1}$ (and which may or may not enclose zero, since the original
rational fraction in $w$ is regular at infinity), and the contour $\Gamma_0$ is
mapped to a large contour for $\theta$ containing separating infinity from a domain
containing all the poles and the contour for $\theta$. This contour can be deformed
freely as long as it does not cross the one for $\zeta$ or infinity, because
$1/F_k(\theta)$ is a polynomial, and thus has no poles.

In the case when $x_2 > y_2$, the extra term in~\cite{MR3197656} comes from the
residue at $w=z$, which can be integrated to the double integral, at the cost of
interchanging the nesting of the contours.

\subsection{The arctic circle theorem}
\label{subsec:aztecArctic}

Under the uniform measure, domino tilings of a large Aztec diamond exhibit a spatial phase transition, known
as the arctic circle phenomenon \cite{JPS98,J2005}. Outside the inscribed circle, with probability exponentially close to 1, all
dominos are arranged in a brickwall fashion. This is called the frozen region. Inside the circle, the probability
of each orientation is non degenerate and does not go to 0 or 1. We now explain how to recover
this phenomenon from our formalism. The parameters $q$ and $\lambda$ are now set to 1.

Let $\alpha=(\alpha^{\mathrm{x}},\alpha^{\mathrm{y}})=(2m,y-\frac{1}{2})$ be an
even vertex.
The probability $\rho_\alpha$ that
a dimer connects $\alpha$ with
$\beta=(\beta^{\mathrm{x}},\beta^{\mathrm{y}})=(2m-1,y+\frac{1}{2})$ is
given by~\eqref{eq:Plamb} with $\lambda=1$.
In the scaling limit $n\rightarrow \infty$, $m/n\rightarrow \tau$, $y/n\rightarrow \chi$, this probability becomes
\begin{equation}
 \rho_\alpha=\oint_{C_z}\oint_{C_w} \exp(n(S(z,\tau,\chi)-S(w,\tau,\chi)+o(1)))\frac{{\rm d}z{\rm d}w}{(z-w)(2i\pi)^2}
 \end{equation}
where
\begin{equation}
S(z,\tau,\chi)=-\tau\log(1-z)-(1-z)\log(1+1/z)-\chi\log(z).
\end{equation}

We now proceed as in \cite{OR1} to obtain the asymptotics of this probability. For fixed $(\tau,\chi)$, the function
$z\mapsto S(z,\tau,\chi)$ has two critical points.
\begin{itemize}
 \item 
 If the two critical points are real, the integral goes to 0 or 1 exponentially fast with $n$ by the saddle point method.
 The point $(\tau,\chi)$ is in the frozen region.
 \item If the two critical points are complex conjugate, one can move the contours so that they cross transversally at the complex
 critical points to apply again the saddle point method. By doing so, we pick the contribution of the residue at $z=w$, which gives
 the main contribution of the integral, giving a result strictly between 0 and 1.
\end{itemize}
The transition between those two regimes correspond to the value of $(\tau,\chi)$ for which the two critical points merge. This happens
when the discriminant of the numerator of $\partial S(z,\tau,\chi)/\partial z$ is equal to zero. This gives
\begin{equation}
 (2\tau-1)^2-4(\tau+\chi)(1-\chi-\tau)=0,
\end{equation}
which under the change of variables
\begin{equation}
  \left\{\begin{array}{l} u=2\tau+\chi\\
           v=\chi
         \end{array}\right .
\end{equation}
corresponds to the circle $2(u-1)^2+2v^2=1$ inscribed in the limiting
square of the Aztec diamond, given by $|u-1|+|v|\leq 1$.

\section{Conclusion}
\label{sec:conc}

We have introduced the rail yard graph dimer model, and computed its
partition function and correlation functions. We point out that it
corresponds essentially to the most general Schur process with
nonnegative transition probabilities, see the discussion in
\cite[Sections~1 and 2]{Borodin:dynamics}: we handle an arbitrary
finite number of ``$\alpha$'' and ``$\beta$'' specializations, and any
other specialization can be obtained by taking suitable limits (in
particular, to get the Poissonized Plancherel measure, one shall
consider the ``dilute'' limit of RYGs, namely take a sign sequence of
the form $+^n-^n$, an arbitrary LR sequence, and a constant weight
sequence $z/n$, then let $n \to \infty$).

Many directions can be explored from here. By applying the random
generation algorithms of \cite{BBBCCV14}, we may generate large RYG
dimer configurations, which allows to observe limit shape phenomena as
in the cases of (skew) plane partitions \cite{OR1,OR2} and of the
Aztec diamond, discussed above. RYG seem to allow for an even larger
variety of singular points on limit shapes, and of corresponding
limiting processes, which are currently under investigation.

The appearance of the rational edge-probability generating function
\eqref{eq:epgffin} in the context of the Aztec diamond (when summing over
diamonds of all sizes) raises the question whether such rationality
phenomenon may subsist for other types of RYGs. A natural idea is to
consider RYGs with periodic LR, sign and weight sequences. Preliminary
research indicates the rationality phenomenon occurs only in another
case, namely skew plane partitions of ``staircase'' shape. In other
cases, we obtain an algebraic, but not rational, generating function
(algebraicity being expected from the very nature of our
computations).

In this paper we have obtained the correlations for pure RYG dimer
configurations (by computing vacuum-to-vacuum expectation values of
fermionic operators). Other types of boundary conditions can be
considered, as in~\cite{BCC} where the corresponding partition
functions were computed (and the extension to RYGs is
straightforward). However, adapting the computation of correlation
functions done in the present paper is not so easy, since it requires
an adaptation of Wick's formula. For arbitrary but fixed boundary
conditions, we know from general facts, namely the generalized Wick
theorem \cite{AlZa13} or the Eynard-Mehta theorem \cite{MR2185331},
that correlations will still be of determinantal form, however it is
not clear how to compute explicitly the propagator/determinantal
kernel. Such computation could be done by Petrov \cite{MR3278913} for
some lozenge tilings, and we are looking for other tractable
cases. Also of interest is the case of free boundary conditions (that
is, we sum over all possible boundary states). When only one of the
boundaries of the RYG is free (corresponding to symmetric RYG dimer
configurations), the correlations are known to be Pfaffian
\cite{MR2185331}, and we have found an adaptation of Wick's formula
which would allow for a computation similar to that done in the
present paper, bypassing the use of the (Pfaffian analogue of)
Eynard-Mehta's theorem. Details should appear in a subsequent
publication, see also \cite{MR2864788,MR3091062,Panova14} for related
results. When the two boundaries of the RYG are free, the nature of
the correlations is unknown, even though the partition function can be
computed following the lines of~\cite{BCC}. We believe they should be
the Pfaffian analogues of correlations for RYGs with periodic boundary
conditions, related to the periodic Schur process of~\cite{B2007}.

Finally, a tantalizing question is whether it is possible to consider
``interacting'' deformations of our dimer models. Besides the
directions mentioned in the conclusion of \cite{BCC}, let us mention
that fermionic techniques have been recently used, together with
methods from constructive field theory, to prove rigorous results
about interacting dimer models, see e.g.\ \cite{GMT14} and references
therein. Another intriguing fact is that identities arising from
$z$-measures (which are instances of Schur measures) have found
applications in the context of quantum integrable systems
\cite{KKMST11}.

\section*{Acknowledgments}

We would like to thank Dan Betea, Alexei Borodin, Sunil Chhita,
Philippe Di Francesco, Patrik Ferrari, Jean-Michel Maillet, Richard
Kenyon, Leo Petrov, Senya Shlosman and Mirjana Vuleti\'c, for their
constructive comments and helpful discussions.

\appendix

\section{Commutation of bosonic and fermionic operators}
\label{sec:fermcom}

In this section, we give a self-contained combinatorial proof of the commutation relations between the bosonic and the fermionic operators stated in Proposition~\ref{prop:com_gamma_psi}.

\begin{proof}
Let us first prove \eqref{eq:cgp2}. It is equivalent to prove that for any $k$ and for any $\lambda$,
\begin{equation}
\label{eq:cgp2_bis}
\Gamma_{R+}(x)\psi_k|\lambda\rangle=
\left(\psi_k+x\psi_{k-1}\right)\Gamma_{R+}(x)|\lambda\rangle.
\end{equation}
Define
\begin{equation}
n_k=\# \{j>k, \lambda_j=\bullet\}.
\end{equation}

\emph{Case 1 : $\lambda$ has a white marble in position $k$.}

$\Gamma_{R+}(x)\psi_k|\lambda\rangle$ enumerates the admissible dimer covers of
an elementary rail yard graph of type $(R,+)$ and with right boundary equal to
$\psi_k|\lambda\rangle$. Each dimer cover is specified by the value $\mu$ of the
left boundary.
Here,
\begin{equation}
\psi_k|\lambda\rangle=(-1)^{n_k}
      |\lambda^{(k)}\rangle,
\end{equation}
so the vertex in position $k$ on the right boundary has to be incident to a
certain edge $e_k$ in the dimer cover. Thus
$\Gamma_{R+}(x)\psi_k|\lambda\rangle$ is a sum of two types of terms, the first
(resp. second) type corresponds to $e_k$ horizontal (resp. diagonal).

By the same argument as for the localization of horizontal dimers on a
``double'' column, dimer covers of the first type are in bijection with (and
have the same weight as) dimer covers with right boundary $\lambda$ and with
left boundary $\mu^{(k)}$. Such covers are enumerated by
$\Gamma_{R+}(x)|\lambda\rangle$. To obtain the original left boundary, $\mu$,
from this new cover, we need to apply $\psi_k$ to
$\Gamma_{R+}(x)|\lambda\rangle$. The sign appearing is again $(-1)^{n_k}$,
because the number of black marbles above position $k$ is the same in
$\mu^{(k)}$ as in $\lambda$. So the first term in the sum is equal to
\begin{equation}
\psi_k\Gamma_{R+}(x)|\lambda\rangle.
\end{equation}

By using the bijection used to localize diagonal dimers, and observing that the
weights differ by a factor $x$, we obtain that the second term of the sum is
equal to
\begin{equation}
x\psi_{k-1}\Gamma_{R+}(x)|\lambda\rangle.
\end{equation}

Note that the signs cancel out correctly because the number of black marbles
above position $k-1$ on the left boundary is equal to the number of black
marbles above position $k$ on the right boundary.

Thus we conclude in case 1.

\emph{Case 2 : $\lambda$ has a black marble in position $k$.}

Here, $\psi_k|\lambda\rangle=0$, so the left-hand side of \eqref{eq:cgp2_bis}
vanishes.

$\Gamma_{R+}(x)|\lambda\rangle$  enumerates the admissible dimer covers of an
elementary rail yard graph of type $(R,+)$ and with right boundary equal to
$\lambda$. Each dimer cover is specified by the value $\mu$ of the left
boundary. The vertex in position $k$ on the right boundary has to be incident to
a certain edge $e_k$ in the dimer cover. Thus
$\Gamma_{R+}(x)\psi_k|\lambda\rangle$ is a sum of two types of terms, the first
(resp. second) type corresponds to $e_k$ horizontal (resp. diagonal).

If $\mu$ is a term of the first type, $\mu$ has a black marble in position $k$,
thus $\psi_k+x\psi_{k-1}|\mu\rangle=x\psi_{k-1}|\mu\rangle$ and
\begin{equation}
x\psi_{k-1}|\mu\rangle=
\begin{cases}
x(-1)^{n_k+1}\mu^{(k-1)} & \text{if $\mu$ has a white marble in position $k-1$} \\
0 & \text{otherwise}.
\end{cases}
\end{equation}

The sign is due to the fact that the number of black marbles in $\mu$ above the
position $k-1$ is equal to the number of black marbles in $\lambda$ above the
position $k$, to which we must add the black marble in $\lambda$ in position
$k$.

If $\mu$ is a term of the second type, $\mu$ has a black marble in position
$k-1$, thus $\psi_k+x\psi_{k-1}|\mu\rangle=\psi_k|\mu\rangle$ and

\begin{equation}
\psi_k|\mu\rangle=
\begin{cases}
(-1)^{n_k}\mu^{(k)} & \text{if $\mu$ has a white marble in position $k$} \\
0 & \text{otherwise}.
\end{cases}
\end{equation}

So all the nonzero terms of
$\left(\psi_k+x\psi_{k-1}\right)\Gamma_{R+}(x)|\lambda\rangle$ have black
marbles in positions $k-1$ and $k$, and each term appears twice, with the same
weight (because if $\mu$ is a term of the second type in
$\Gamma_{R+}(x)|\lambda\rangle$, it already carries a factor $x$ coming from the
diagonal dimer $e_k$) and with opposite sign.

Thus the right-hand side of \eqref{eq:cgp2_bis} also vanishes.

This concludes the proof of \eqref{eq:cgp2}. Formula \eqref{eq:cgp4} is proved
similarly, replacing $\psi_{k-1}$ by $\psi_{k+1}$ and $z$ by $\frac{1}{z}$.

We noted that $\Gamma_{R+}(x)$ was conjugated to $\Gamma_{L+}(x)$ \textit{via}
$\omega$. Observe now that $\psi_{l}^*$ is conjugated to $\psi_l$ \textit{via}
$\omega$, up to a sign $(-1)^{s_l}$ verifying $(-1)^{s_{l+1}}=-(-1)^{s_l}$
($s_l$ is defined by an equation analogous to \eqref{eq:def_sl}). This enables
us to deduce formulas \eqref{eq:cgp5} and \eqref{eq:cgp7}.

Using that $\Gamma_{R+}(x)$, $\Gamma_{L+}(x)$ and $\psi(z)$ are respectively
adjoint of  $\Gamma_{R-}(x)$, $\Gamma_{L-}(x)$ and $\psi(1/z)$, we deduce the
last four formulas.

\end{proof}

\section{Wick's formula}
\label{sec:wick}

In this section, we provide a proof of the identity
\begin{multline}
  \langle \emptyset | \mathcal{T}\left(\Psi(\alpha_1),\Psi^*(\beta_1),\ldots,
    \Psi(\alpha_s),\Psi^*(\beta_s)\right) | \emptyset \rangle = \\
  \det_{1 \leq i,j \leq s}  \langle \emptyset |
  \mathcal{T}\left(\Psi(\alpha_i),\Psi^*(\beta_j)\right) | \emptyset \rangle
  \label{eq:wickth}  
\end{multline}
used in \eqref{eq:wickapply}.

Let $\Pi$ be the set of partitions of $\{1,\ldots,2s\}$ into unordered
pairs.  An element $\tau\in\Pi$ can be canonically written
$\tau=\lbrace\lbrace i_1,j_1\rbrace,\ldots,\lbrace
i_s,j_s\rbrace\rbrace$
with $i_1<i_2<\cdots<i_s$ and $i_t<j_t$ for all $t$, and viewed as a
permutation 
\begin{equation}
 \tau=\begin{pmatrix} 1 & 2 & \cdots & 2s-1 & 2s \\ 
 i_1 & j_1 & \cdots & i_s & j_s \end{pmatrix},
\end{equation}
which allows to define its sign $\epsilon(\tau)$. Recall that the
\emph{Pfaffian} of a $2s\times 2s$ antisymmetric matrix
$\mathcal{A}=(\mathcal{A}_{ij})_{1\leq i,j\leq 2s}$ is defined as
\begin{equation}
  \label{eq:pfaf}
  \Pf(\mathcal{A})=\sum_{\tau\in\Pi} \epsilon(\tau)
  \prod_{t=1}^s \mathcal{A}_{\tau(2t-1),\tau(2t)}.
\end{equation}
Let $F$ be the space of (countably infinite) linear combinations of
$\psi_k$'s and $\psi^*_k$'s.

\begin{prop}[Wick's formula]
  \label{prop:wick}
  For $X_1,\ldots,X_{2s}$ elements of $F$, we have
  \begin{equation}
    \label{eq:wickfor}
    \langle\emptyset| X_1 \cdots X_{2s} |\emptyset \rangle = \Pf(\mathcal{A})
  \end{equation}
  where $\mathcal{A}$ is the $2s\times2s$ antisymmetric matrix such
  that $\mathcal{A}_{ij}=\langle\emptyset|X_i X_j|\emptyset\rangle$
  for $i<j$.
\end{prop}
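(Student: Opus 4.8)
The plan is to prove Proposition~\ref{prop:wick} by induction on $s$, expanding the vacuum expectation value along the first operator and matching the result term by term with the row expansion of the Pfaffian. The base case $2s=2$ is the tautology $\langle\emptyset|X_1X_2|\emptyset\rangle=\mathcal{A}_{12}=\Pf(\mathcal{A})$, using $\langle\emptyset|\emptyset\rangle=1$; we adopt the usual convention that the Pfaffian of the empty matrix equals $1$, so that the recursion can terminate at this case.

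The key preliminary observation is that, for any $X,Y\in F$, the anticommutator $\{X,Y\}$ is a scalar multiple of the identity: this is immediate from bilinearity and Proposition~\ref{prop:anticom}, which gives the values $0$, $0$ and $\delta_{k,k'}$ on the basis vectors $\psi_k,\psi^*_k$. Next, decompose each $X\in F$ as $X=X_{\mathrm c}+X_{\mathrm a}$, where the \emph{annihilation part} $X_{\mathrm a}$ lies in the span of the operators killing $|\emptyset\rangle$ on the right (namely the $\psi_k$ with $k<0$ and the $\psi^*_k$ with $k>0$) and the \emph{creation part} $X_{\mathrm c}$ lies in the span of those killing $\langle\emptyset|$ on the left. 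A direct check using the previous sentence shows $\{X_{\mathrm a},Y_{\mathrm a}\}=0$, and therefore
\begin{equation}
  \langle\emptyset|XY|\emptyset\rangle
  =\langle\emptyset|X_{\mathrm a}Y_{\mathrm c}|\emptyset\rangle
  =\{X_{\mathrm a},Y_{\mathrm c}\}=\{X_{\mathrm a},Y\},
\end{equation}
the successive equalities coming from $\langle\emptyset|X_{\mathrm c}=0$ and $Y_{\mathrm a}|\emptyset\rangle=0$, then $\langle\emptyset|Y_{\mathrm c}=0$, then $\{X_{\mathrm a},Y_{\mathrm a}\}=0$. In particular $\mathcal{A}_{1k}=\langle\emptyset|X_1X_k|\emptyset\rangle=\{X_{1,\mathrm a},X_k\}$ for $k\ge 2$.

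For the induction step, write $\langle\emptyset|X_1\cdots X_{2s}|\emptyset\rangle=\langle\emptyset|X_{1,\mathrm a}X_2\cdots X_{2s}|\emptyset\rangle$ (the $X_{1,\mathrm c}$ part is annihilated by $\langle\emptyset|$), then move $X_{1,\mathrm a}$ to the right through $X_2,\dots,X_{2s}$ using $X_{1,\mathrm a}X_k=\{X_{1,\mathrm a},X_k\}-X_kX_{1,\mathrm a}$ at each crossing. Since $X_{1,\mathrm a}|\emptyset\rangle=0$, the final boundary term drops, and one obtains
\begin{equation}
  \langle\emptyset|X_1\cdots X_{2s}|\emptyset\rangle
  =\sum_{k=2}^{2s}(-1)^{k}\,\mathcal{A}_{1k}\,
  \langle\emptyset|X_2\cdots\widehat{X_k}\cdots X_{2s}|\emptyset\rangle ,
\end{equation}
where we used $\{X_{1,\mathrm a},X_k\}=\mathcal{A}_{1k}$. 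By the induction hypothesis each surviving expectation value equals the Pfaffian of the principal submatrix $\mathcal{A}_{\widehat{1}\widehat{k}}$ of $\mathcal{A}$ obtained by deleting rows and columns $1$ and $k$, so the right-hand side is exactly the Laplace-type expansion $\Pf(\mathcal{A})=\sum_{k=2}^{2s}(-1)^{k}\mathcal{A}_{1k}\Pf(\mathcal{A}_{\widehat{1}\widehat{k}})$ of the Pfaffian along its first row. This completes the induction.

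The step I expect to be most delicate is the sign bookkeeping: one must verify that commuting $X_{1,\mathrm a}$ past $k-1$ operators produces precisely the sign $(-1)^{k}$ matching the coefficient in the Pfaffian row expansion (equivalently, that iterating the recursion reconstructs the signs $\epsilon(\tau)$ of Section~\ref{sec:wick}). A secondary point worth a remark is the well-definedness of the possibly infinite linear combinations in $F$: since both sides of \eqref{eq:wickfor} are multilinear in $X_1,\dots,X_{2s}$, it suffices (modulo routine convergence considerations) to treat the case where each $X_i$ is one of the basis operators $\psi_k$ or $\psi^*_k$, where every sum in sight is finite.
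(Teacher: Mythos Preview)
Your proof is correct. The decomposition $X=X_{\mathrm c}+X_{\mathrm a}$ is exactly the paper's $X=X^++X^-$, and your identity $\mathcal{A}_{1k}=\{X_{1,\mathrm a},X_k\}$ matches the paper's computation \eqref{eq:crochet_accolade}. The sign $(-1)^{k}$ is right: moving $X_{1,\mathrm a}$ past $X_2,\dots,X_{k-1}$ costs $k-2$ sign changes, and $(-1)^{k-2}=(-1)^k$ matches the first-row Pfaffian expansion.

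The route, however, differs from the paper's. The paper does a \emph{direct} argument: it writes each $X_i=X_i^++X_i^-$, expands the full product into $2^{2s}$ terms, and normal-orders each term by pushing every $X_i^-$ to the right. The only surviving contributions come from pairing each $X_i^-$ with some $X_j^+$ ($i<j$), and the resulting sum over pairings \emph{is} the Pfaffian sum \eqref{eq:pfaf}, with the sign identified as $\epsilon(\tau)$. Your argument is the standard \emph{inductive} variant: only $X_1$ is split, $X_{1,\mathrm a}$ is anticommuted through, and the result is matched against the Laplace expansion of $\Pf(\mathcal{A})$. Your version has the virtue that the sign bookkeeping reduces to the well-known Pfaffian row expansion, so the delicate step you flag is in fact already handled by that identity; the paper's version has the virtue of producing the pairing picture (and hence the signature $\epsilon(\tau)$) in one shot, which is closer in spirit to how the formula is used downstream.
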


\begin{proof}
  Let $F^+$ (resp.\ $F^-$) be the vector space spanned by the $\psi_k$
  with $k>0$ and the $\psi^*_k$ with $k<0$ (resp.\ the $\psi_k$ with
  $k<0$ and the $\psi^*_k$ with $k>0$), so that $F = F^+ \oplus F^-$.
  For $X \in F$, we denote by $X^+$ and $X^-$ its projections on these
  two subspaces: observe that $\langle\emptyset|X^+=0$ and
  $X^-|\emptyset\rangle=0$.

  Let $X_1,\ldots,X_{2s}$ be elements of $F$. Observe that, by
  proposition \ref{prop:anticom}, $\{X_i^-,X_j^+\}$ is a scalar for
  any $i<j$, and
  \begin{equation}
    \label{eq:crochet_accolade}
    \begin{split}
      \langle\emptyset|X_iX_j|\emptyset\rangle
      &=\langle\emptyset|X_i^+X_j^++X_i^+X_j^-+X_i^-X_j^-+X_i^-X_j^+|\emptyset\rangle \\
      &=\langle\emptyset|X_i^+X_j^++X_i^+X_j^-+X_i^-X_j^-+\{X_i^-,X_j^+\}-X_j^+X_i^-|\emptyset\rangle \\
      &=\{X_i^-,X_j^+\}.
    \end{split}
 \end{equation}
 We proceed similarly to compute the left-hand side of
 \eqref{eq:wickfor}: we write each of the $X_i$ as $X_i=X_i^++X_i^-$,
 expand the product and get a sum of $2^{2s}$ terms.  In each of these
 terms, we move all the $X_i^-$ to the right. To get a nonzero
 contribution, we must pair up each $X_i^-$ with one $X_j^+$ such that
 $i<j$, and multiply all the anticommutators obtained in this
 fashion. So each nonzero contribution is equal, up to a sign, to
 \begin{equation}
   \prod_{t=1}^s{\{X^-_{\tau(2t-1)},X^+_{\tau(2t)}\}}
 \end{equation}
 where $\tau \in \Pi$ is some partition into pairs of
 $\{1,\ldots,2s\}$. Furthermore, the sign of this contribution
 corresponds to the number of swaps needed to bring each
 $X^-_{\tau(2t-1)}$ exactly to the left of $X^+_{\tau(2t)}$, i.e.\ it
 is the signature of $\tau$. This yields exactly the right-hand side
 of \eqref{eq:pfaf} with
 $\mathcal{A}_{ij}=\{X_i^-,X_j^+\}=\langle\emptyset|X_i X_j|\emptyset\rangle$.
\end{proof}

To prove the wanted identity \eqref{eq:wickth}, we set
$Y_{2t-1}=\Psi(\alpha_t)$ and $Y_{2t}=\Psi^*(\beta_t)$,
$t=1,\ldots,s$. By the definition of natural ordering, there is a
permutation $\sigma$ such that
\begin{equation}
  \mathcal{T}\left(Y_1,\ldots,Y_{2s}\right) = \epsilon(\sigma)
  Y_{\sigma(1)} \cdots Y_{\sigma(2s)}.
\end{equation}
By Proposition~\ref{prop:wick} and since
$Y_{\sigma(i)} Y_{\sigma(j)} =
\mathcal{T}(Y_{\sigma(i)},Y_{\sigma(j)})$ for any $i<j$, we have
\begin{equation}
  \begin{split}
    \langle \emptyset | \mathcal{T}\left(Y_1,\ldots,Y_{2s}\right) |
    \emptyset \rangle &= \epsilon(\sigma) \Pf_{1 \leq i,j \leq 2s}
    \langle \emptyset | \mathcal{T}(Y_{\sigma(i)}, Y_{\sigma(j)}) |
    \emptyset \rangle \\ &= \Pf_{1 \leq i,j \leq 2s} \langle \emptyset
    | \mathcal{T}(Y_{i}, Y_{j}) | \emptyset \rangle
  \end{split}
\end{equation}
(when we simultaneously permute the rows and columns of an
antisymmetric matrix, the Pfaffian is multiplied by the sign of the
permutation). Finally, we observe that
$\langle \emptyset | \mathcal{T}(Y_{i}, Y_{j}) | \emptyset \rangle$ is
nonzero if and only if $i,j$ have different parities, which allows to
rewrite the Pfaffian as the wanted determinant without sign.

\bibliographystyle{halpha}
\bibliography{dimerstat}

\end{document}